\documentclass[preliminary,copyright]{eptcs} %

\usepackage{iftex}

\ifpdf
  \usepackage{underscore}          %
  \usepackage[T1]{fontenc}         %
\else
  \usepackage{breakurl}            %
\fi

\usepackage[utf8]{inputenc}
\usepackage{float}
\usepackage{amsfonts}
\usepackage{amsmath}
\usepackage{amsthm}
\usepackage{bbm}
\usepackage{graphicx}
\usepackage{mathtools}
\usepackage[frozencache]{minted}

\newcommand{\F}{\mathcal{F}_\varphi}

\newcommand{\card}[1]{\left| #1 \right|}

\renewcommand{\d}[1]{{d\left({#1}\right)}}
\newcommand{\depth}[1]{{d\left({#1}\right)}}
\newcommand{\E}[2]{E_{#1}^{#2}}
\renewcommand{\P}[2]{P_{#1}^{#2}}
\newcommand{\Paphi}{\P{a}{\depth{\varphi}}}

\newcommand{\KI}{\eqref{eq:KP}}
\newcommand{\KIp}{\eqref{eq:KP'}}
\newcommand{\TAz}{\eqref{eq:TA0}}
\newcommand{\TA}{\eqref{eq:TA}}

\newcommand{\N}{\mathbb{N}}
\newcommand{\Z}{\mathbb{Z}}
\renewcommand{\L}{\mathcal{L}}
\newcommand{\agents}{\mathcal{A}}
\newcommand{\atoms}{\mathcal{P}}
\newcommand{\states}{\mathcal{S}}
\renewcommand{\H}{\mathcal{H}}
\newcommand{\Ha}{\mathcal{H}_a}

\newcommand{\DPAL}{\textbf{DPAL}}
\newcommand{\DBEL}{\textbf{DBEL}}
\newcommand{\GL}{\( {GL}_{EF} \)}

\newcommand{\TSo}{\textbf{EDPAL}}
\newcommand{\TSt}{\textbf{ADPAL}}

\newcommand{\Mn}{M_n}
\newcommand{\Mhn}{\hat{M}_n}

\theoremstyle{definition}
\newtheorem{defi}{Definition}
\theoremstyle{plain}
\newtheorem{thm}{Theorem}[section]
\newtheorem{prop}[thm]{Proposition}
\newtheorem{lmm}[thm]{Lemma}
\theoremstyle{remark}

\title{Depth-bounded epistemic logic}
\author{Farid Arthaud
\institute{MIT\\ Cambridge, Massachusetts}
\email{farto@csail.mit.edu}
\and
Martin Rinard
\institute{MIT\\ Cambridge, Massachusetts}
\email{rinard@csail.mit.edu}
}

\begin{document}

\maketitle

\begin{abstract}
	Epistemic logics model how agents reason about their beliefs and the
	beliefs of other agents. Existing logics typically assume the ability of
	agents to reason perfectly about propositions of unbounded modal depth.
	We present~\DBEL{}, an extension of \textbf{S5} that models agents that
	can reason about epistemic formulas only up to a specific modal depth. To
	support explicit reasoning about agent depths, \DBEL{} includes depth
	atoms \( \E{a}{d} \) (agent \( a \) has depth exactly  \( d \)) and \(
	\P{a}{d} \) (agent \( a \) has depth at least \( d \)). We provide a
	sound and complete axiomatization of~\DBEL{}.

	We extend \DBEL{} to support public announcements for bounded depth
	agents and show how the resulting \DPAL{} logic generalizes standard
	axioms from public announcement logic. %
	We present two alternate extensions and identify two undesirable
	properties, amnesia and knowledge leakage, that these extensions have but
	\DPAL{} does not. We provide axiomatizations of these logics as well as
	complexity results for satisfiability and model checking.

	Finally, we use these logics to illustrate how agents with bounded modal
	depth reason in the classical muddy children problem, including upper and
	lower bounds on the depth knowledge necessary for agents to successfully
	solve the problem.
\end{abstract}

\section{Introduction}
Epistemic logics model how agents reason about their beliefs and the beliefs of
other agents. These logics generally assume the ability of agents to perfectly
reason about propositions of unbounded modal depth, which can be seen
as unrealistic in some
contexts~\cite{DBLP:journals/corr/abs-2008-08849,DBLP:journals/jolli/VerbruggeM08}.

To model agents with the ability to reason only to certain preset modal depths,
we extend the syntax of epistemic logic
\textbf{S5}~\cite{DBLP:books/mit/FHMV1995} to depth-bounded epistemic logic
(DBEL).
The \DBEL{} semantics assigns each agent a depth in each state.
For an agent to know a formula \( \psi \) in a given state of a model, the
assigned depth of the agent must be at least the modal depth of \( \psi \), i.e.
\( \d{\psi} \).
To enable agents to reason about their own and other agents' depths, \DBEL{}
includes \textbf{depth atoms} \( \E{a}{d} \) (agent \( a \) has depth exactly  \( d
\)) and \( \P{a}{d} \) (agent \( a \) has depth at least \( d \)).
For example, the formula \( K_a( \P{b}{5} \to K_b p ) \) expresses the fact that,
``agent \( a \) knows that whenever agent \( b \) is depth at least \( 5 \),
agent \( b \) knows the fact \( p \).''
Depth atoms enable agents to reason about agent depths and their consequences
in contexts in which each agent may have complete, partial, or even no
information about agent depths (including its own depth).

We provide a sound and complete axiomatization of \DBEL{}
(Section~\ref{sec:dbel}), requiring a stronger version of the \textsc{Lindenbaum}
lemma which ensures each agent can be assigned a depth (proven in
Appendix~\ref{app:lind}).
Its satisfiability problem for two or more agents is immediately
\textsf{PSPACE}-hard (because \DBEL{} includes \textbf{S5} as a syntactic
fragment).
We provide a depth satisfaction algorithm for \DBEL{} in \textsf{PSPACE}
(Section~\ref{sec:complex}), establishing that the \DBEL{} satisfiability problem
is \textsf{PSPACE}-complete for two or more agents.

Public announcement logic (PAL)~\cite{DBLP:journals/jolli/GerbrandyG97} extends
epistemic logic with public announcements.
PAL includes the following public announcement and knowledge
axiom~\eqref{eq:intro}, which characterizes agents' knowledge after public
announcements,
\begin{equation}\label{eq:intro}
	[\varphi] K_a \psi \leftrightarrow (\varphi \to K_a [\varphi] \psi).
	\tag{PAK}
\end{equation}
We extend \DBEL{} to include public announcements (Section~\ref{sec:dpal}).
The resulting depth-bounded public announcement logic (DPAL) provides a semantics
for public announcements in depth-bounded epistemic logic, including a
characterization of how agents reason when public announcements exceed their
epistemic depth.
We prove the soundness of several axioms that generalize~\eqref{eq:intro} to
\DPAL{}, first in a setting where each agent has exact knowledge of its own
depth, then in the general setting where each agent may have partial or even no
knowledge of its own depth.
We provide a sound axiom set for \DPAL{} as well as an upper and lower bound on
the complexity of its model checking problem.

We also present two alternate semantics that extend \DBEL{} with public
announcements (Section~\ref{subsec:s12}).
The resulting logics verify simpler generalizations of~\eqref{eq:intro} in the
context of depth-bounded agents, but each has one of two undesirable properties
that we call \textit{amnesia} and \textit{knowledge leakage}.
Amnesia causes agents to completely forget about all facts they knew after
announcements, whereas knowledge leakage means shallow agents can infer
information from what deeper agents have learned from a public announcement.
\DPAL{} suffers from neither of these two undesirable properties.
We provide a sound and complete axiomatization of the first of the two alternate
semantics (Section~\ref{sec:ax}).
We also prove the \textsf{PSPACE}-completeness of its satisfiability problem and
show that its model checking problem remains \textsf{P}-complete
(Section~\ref{sec:complex}).

Finally, we use these logics to illustrate how agents with bounded depths reason
in the muddy children reasoning problem~\cite{DBLP:books/mit/FHMV1995}.
We prove a lower bound and an upper bound on the structure of knowledge of depths required for agents to solve this
problem (Section~\ref{sec:muddy}).

\paragraph{Related work}
Logical omniscience, wherein agents are capable of deducing any fact deducible
from their knowledge, is a well-known property of most epistemic logics. The
ability of agents to reason about facts to unbounded modal depth is a
manifestation of logical omniscience. Logical omniscience has been viewed as
undesirable or unrealistic in many contexts~\cite{DBLP:books/mit/FHMV1995} and
many attempts have been made to mitigate or eliminate
it~\cite{DBLP:books/mit/FHMV1995,meyer2003modal,DBLP:journals/ijis/Sim97}.
To the best of our knowledge, only
Kaneko and Suzuki~\cite{DBLP:conf/aiml/KanekoS00} below have involved modal depth
in the treatment of logical omniscience in epistemic logic.

Kaneko and Suzuki~\cite{DBLP:conf/aiml/KanekoS00} define the logic of shallow
depths~\GL{}, which relies on a set \( E \) of chains of agents \( (i_1, \ldots,
i_k) \) for which chains of modal operators \( K_{i_1} \cdots K_{i_m} \) can
appear.
A subset \( F \subseteq E \) restricts chains of modal operators along
which agents can perform deductions about other agents' knowledge.
An effect of bounding agents' depths in~\DPAL{} is creating a set of allowable
chains of modal operators \( \cup_a \{ (a, i_1, \ldots, i_{d_a}), \; (i_1,
\ldots, i_{d_a}) \in \agents^{d_a} \} \).
Unlike~\GL{}, the bound on an agent's depth is not global in~\DPAL{}, it can
also be a function of the worlds in the Kripke possible-worlds semantics~\cite{DBLP:books/mit/FHMV1995}.
In particular, \DPAL{}, unlike~\GL{}, enables agents to reason about their own
depth, the depth of other agents, and (recursively) how other agents reason about
agent depths.
\DPAL{} also includes public announcements, which to the best of our knowledge
has not been implemented in~\GL{}.

Kline~\cite{kline2013evaluations} uses~\GL{} to investigate the \( 3 \)-agent
muddy children problem, specifically by deriving minimal epistemic structures \(
F \) that solve the problem.
The proof relies on a series of
belief sets with atomic updates called ``resolutions,'' with the nested length of
the chains in \( F \) providing epistemic bounds on the required reasoning.
\DPAL{}, in contrast, includes depth atoms and public announcements as
first-class features.
We leverage these features to directly prove theorems expressing that for \( k \)
muddy children,
\textit{(i)} (Theorem~\ref{thm:lowbound}) if the problem is solvable by an agent,
that agent must have depth at least \( k-1 \) and know that it has depth at least
\( k-1 \) (this theorem provides a lower bound on the agent depths required to
solve the problem)
and \textit{(ii)} (Theorem~\ref{thm:uppbound}) if an agent has depth at least \(
k-1 \), knows it, knows another agent is depth at least \( k-2 \), knows that the
other agent knows of another agent of depth \( k-2 \), \textit{etc}., then it can
solve the problem (this theorem provides an upper bound on the agent depths
necessary to solve the problem).
Our depth bounds match the depth bounds of Kline~\cite{kline2013evaluations} for
\( 3 \) agents (Theorems~3.1 and~3.3 in~\cite{kline2013evaluations}), though our
bounds also provide conditions on recursive knowledge of depths for the agents as
described above.

Dynamic epistemic logic
(DEL)~\cite{DBLP:journals/synthese/BaltagM04,van2007dynamic} introduces more
general announcements. Private announcements are conceptually similar to public
announcements in \DPAL{} in that they may be perceived by only some of the
agents. In DEL, model updates depend only on the relation between states in the
initial model and the relations in the action model. But in \DPAL{}, model
updates must also take into account the agent depths in the entire connected components of
each state (see Definition~\ref{defi:dpal}).

Resource-bounded agents in epistemic logics have been explored by Balbiani et.\
al~\cite{DBLP:conf/atal/BalbianiDL16} (limiting perceptive and inferential
steps), Artemov and Kuznets~\cite{DBLP:conf/tark/ArtemovK09} (limiting the
computational complexity of inferences), and Alechina et.\
al~\cite{DBLP:conf/kramas/AlechinaLNR08} (bounding the size of the set of
formulas an agent may believe at the same time and introducing communication
bounds).
Alechina et.\ al~\cite{DBLP:conf/kramas/AlechinaLNR08} also bound the modal depth
of formulas agents may believe, but all agents share the same depth bound and
they leave open the question of whether inferences about agent depth or memory
size could be implemented, which~\DPAL{} does.

\section{Depth-bounded epistemic logic}\label{sec:dbel}
The modal depth \( \d{\varphi} \) of a formula \( \varphi \), defined as the
largest number of modal operators on a branch of its syntactic tree, is the
determining factor of the complexity of a formula in depth-bounded epistemic
logic (DBEL).
Modal operators are the main contributing factor to the complexity of model
checking a formula; the recursion depth when checking satisfiability of a formula
is equivalent to its modal depth~\cite{DBLP:conf/atal/Lutz06}; and bounding modal
depth often greatly simplifies the complexity of the satisfiability problem in
epistemic logics~\cite{DBLP:conf/aiml/Nguyen04}.
Humans are believed to reason within limited modal depth~\cite{DBLP:journals/corr/abs-2008-08849,DBLP:journals/jolli/VerbruggeM08}.

We extend the syntax of classical epistemic logic by assigning to each agent \( a
\) in a set of agents \( \agents \) a depth \( d(a, s) \) in each possible world
\( s \). The language also includes \textbf{depth atoms} \( \E{a}{d} \) and \(
\P{a}{d} \) to respectively express that agent \( a \) has depth exactly \( d \)
and agent \( a \) has depth at least \( d \).

To know a formula \( \varphi \), agents are required to be at least as deep as \(
\d{\varphi} \) and also know that the formula \( \varphi \) is true in the usual
possible-worlds semantics sense~\cite{DBLP:books/mit/FHMV1995}.
We translate the classical modal operator \( K_a \) from multi-agent epistemic
logic into the operator \( K^\infty_a \) with the same properties, therefore \(
K^\infty_a \varphi \) can be interpreted as ``agent \( a \) would know \( \varphi
\) if \( a \) were of infinite depth''.
The operator \( K_a \varphi \) will now take the meaning described above, i.e. \(
\Paphi \wedge K^\infty_a \varphi \).
\begin{defi}\label{defi:l}
	The language of \DPAL{} is inductively defined as, for all agents \( a
	\in \agents \) and depths \( d \in \N \),
	\[
	\L^\infty := \varphi = p \; | \; \E{a}{d} \; | \; \P{a}{d} \; | \; \neg
	\varphi \; | \; \varphi \wedge \varphi \; | \; K_a \varphi \; | \;
	K^\infty_a \varphi \; | \; [\varphi] \varphi.
	\]
	The \( K^\infty_a \) operator is used mainly as a tool in axiomatization
	proofs, we call \( \L \) the fragment of our logic formulas without any \( K^\infty_a \) operators, which will be used in most of our theorems.
	We further define \( \H^\infty \) and \( \H \) to respectively be the
	syntactic fragments of \( \L^\infty \) and \( \L \) without public
	announcements \( [\varphi] \psi \).

	The modal depth \( d \) of a formula in \( \L^\infty \) is inductively
	defined as,
	\begin{gather*}
		\depth{p} = \depth{\E{a}{d}} = \depth{\P{a}{d}}  = 0 \qquad \quad
		\depth{\neg \varphi} = \depth{\varphi} \qquad \quad
		\depth{[\varphi] \psi} = \depth{\varphi} + \depth{\psi} \\
		\depth{\varphi \wedge \psi} = \max \left( \depth{\varphi}, \depth{\psi} \right) \qquad \quad
		\depth{K_a \varphi} = 1 + \depth{\varphi} \qquad \quad
		\depth{K^\infty_a \varphi} = 1 + \depth{\varphi}.
	\end{gather*}
\end{defi}

We defer treatment of public announcements \( [\varphi] \psi \) to
Section~\ref{sec:dpal}.
We work in the framework of \textbf{S5}~\cite{DBLP:books/mit/FHMV1995}, assuming
each agent's knowledge relation to be an equivalence relation, unless otherwise
specified---however, our work could be adapted to weaker epistemic
logics~\cite{DBLP:books/mit/FHMV1995} by removing the appropriate axioms.
\begin{defi}\label{defi:dbel}
	A model in \DBEL{} is defined as a tuple \( M = (\states, \sim, V, d) \)
	where \( \states \) is a set of states, \( V: \states \to 2^{\atoms} \)
	is the valuation function for atoms and \( d: \agents \times \states \to
	\N \) is a depth assignment function.
	For each agent \( a \), \( \sim_a \) is an equivalence relation on \(
	\states \) modeling which states are seen as equivalent in the eyes of \(
	a \).
	The semantics are inductively defined over \( \H^\infty \) by,
\begin{align*}
	(M, s) \models p & \iff p \in V(s) \\
	(M, s) \models E_a^d & \iff d(a, s) = d \\
	(M, s) \models \P{a}{d} & \iff d(a, s) \geq d \\
	(M, s) \models \neg \varphi & \iff (M, s) \not\models \varphi \\
	(M, s) \models \varphi \wedge \psi & \iff (M, s) \models \varphi \text{ and
	} (M, s) \models \psi \\
	(M, s) \models K^\infty_a \varphi & \iff (\forall s', \; s \sim_a s'
	\implies (M, s') \models \varphi) \\
	(M, s) \models K_a \varphi & \iff (M, s) \models \Paphi \wedge K^\infty_a
	\varphi.
\end{align*}
\end{defi}
Note that this definition does not require agents to have any (exact or
approximate) knowledge of their own depth.
On the other hand, it does not prohibit agents agents from having exact knowledge
of their own depths, for instance we could model each agent carrying out some
`meta-reasoning' about its own depth~\footnote{For instance deducing \( \Paphi \)
from the fact that it knows \( \varphi \), or deducing \( \neg \P{a}{n} \) from
the fact that it does not know \( K^{n}_a \top \).} leading each agent to know
its own depth exactly.
These models are a subset of the class of the models we consider, which we study
in more detail in Section~\ref{sub:unam}.

\begin{table}
\centering
 \begin{tabular}{|| r | l ||}
 \hline
 All propositional tautologies & \( p \to p \), \textit{etc}. \\
 Deduction & \( (K_a \varphi \wedge K_a (\varphi \to \psi)) \to K_a \psi \) \\
 \hline
 Truth & \( K_a \varphi \to \varphi \) \\
 Positive introspection & \( ( K_a \varphi \wedge \P{a}{\depth{\varphi} + 1} ) \to K_a (\P{a}{\depth{\varphi}} \to K_a \varphi) \) \\
 Negative introspection & \( ( \neg K_a \varphi \wedge \P{a}{\depth{\varphi} + 1} ) \to K_a \neg K_a \varphi \) \\
 \hline
 Depth monotonicity & \( \P{a}{d} \to \P{a}{d-1} \) \\
 Exact depths & \( \P{a}{d} \leftrightarrow \neg (E_a^0 \vee \cdots \vee E_a^{d-1})  \) \\
 Unique depth & \( \neg(E_a^{d_1} \wedge E_a^{d_2}) \) for \( d_1 \neq d_2 \) \\
 \hline
 Depth deduction & \( K_a \varphi \to \P{a}{\depth{\varphi}} \) \\
 \hline
 \hline
 \textit{Modus ponens} & From \( \varphi \) and \( \varphi \to \psi \), deduce \(
 \psi \) \\
 Necessitation & From \( \varphi \) deduce \( \P{a}{\depth{\varphi}} \to
 K_a \varphi \) \\
 \hline
 \end{tabular}
 \caption{Sound and complete axiomatization for \DBEL{} over \( \H \).}\label{tab:axb}
\end{table}

As \DBEL{} is an extension of \textbf{S5} up to renaming of the modal operators,
one can expect for it to have a similar axiomatization: one new axiom is needed
to axiomatize \( K_a \) and three others for depth atoms.
\begin{thm}\label{thm:axb}
	Axiomatization from Table~\ref{tab:axb} is sound and complete with
	respect to \DBEL{} over \( \H \).
\end{thm}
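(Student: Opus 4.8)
\noindent The plan is to prove soundness first by a direct semantic check of each entry of Table~\ref{tab:axb}. The propositional axioms, \emph{modus ponens}, and the three depth axioms (Depth monotonicity, Exact depths, Unique depth) are immediate from the fact that \( d(a,s) \) is a single natural number. For Truth, \( K_a\varphi \) unfolds to \( \Paphi\wedge K^\infty_a\varphi \) and reflexivity of \( \sim_a \) yields \( \varphi \) at \( s \); Depth deduction and Necessitation read straight off the \( \Paphi \) conjunct in the semantics of \( K_a \). The only delicate cases are the two guarded introspection axioms. For Positive introspection I would note that the guard \( \P{a}{\depth{\varphi}+1} \) supplies exactly the depth \( 1+\depth{\varphi}=\depth{(\P{a}{\depth{\varphi}}\to K_a\varphi)} \) needed for the outer \( K_a \), while at each \( \sim_a \)-successor \( s' \) the inner antecedent \( \P{a}{\depth{\varphi}} \) supplies the depth needed for the inner \( K_a\varphi \); symmetry and transitivity of \( \sim_a \) then propagate \( K^\infty_a\varphi \) from \( s \) to \( s' \). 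Negative introspection is analogous: the guard forces \( \Paphi \) at \( s \), so \( \neg K_a\varphi \) must arise from some \( \sim_a \)-successor falsifying \( \varphi \), and transitivity makes that witness visible from every successor.

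\noindent For completeness I would build a canonical \DBEL{} model. Since the relation \( \sim_a \) of a \DBEL{} model is a genuine equivalence relation and only the depth guard separates \( K_a \) from the underlying \textbf{S5} box, the natural move is to carry out the construction for \( K^\infty_a \), which is not \( \H \)-definable but behaves as a clean \textbf{S5} modality. Concretely, I would introduce an auxiliary system over \( \H^\infty \) consisting of Table~\ref{tab:axb} together with the standard \textbf{S5} axioms and necessitation for \( K^\infty_a \) and the bridge axiom \( K_a\varphi\leftrightarrow(\Paphi\wedge K^\infty_a\varphi) \); soundness of the added axioms is again a direct check.

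\noindent The states of the canonical model are maximal consistent sets, but here the \textsc{Lindenbaum} lemma must be strengthened (Appendix~\ref{app:lind}) so that every consistent set extends to an MCS in which each agent receives a \emph{finite} depth, i.e. some \( \E{a}{d} \) is present. This strengthening is unavoidable because, the proof system being finitary, \( \{\P{a}{d}:d\in\N\} \) is itself consistent (each finite subset is satisfiable by a sufficiently deep agent) yet unrealizable when \( d:\agents\times\states\to\N \). Given such MCSs I would set \( d(a,\Gamma) \) to the unique \( d \) with \( \E{a}{d}\in\Gamma \) (well defined by Unique depth and Exact depths), let \( V \) read off the atoms, and define \( \Gamma\sim_a\Delta \) iff \( \{\psi:K^\infty_a\psi\in\Gamma\}\subseteq\Delta \), which the \textbf{S5} tool axioms make an equivalence relation. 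The Truth Lemma follows by induction on \( \H^\infty \): the depth-atom and boolean cases are immediate, the \( K^\infty_a \) case is the usual existence-lemma argument, and the \( K_a \) case reduces to these two through the bridge, since \( (M^c,\Gamma)\models K_a\psi \) holds iff \( d(a,\Gamma)\geq\depth{\psi} \) and \( (M^c,\Gamma)\models K^\infty_a\psi \), which by induction and Depth deduction matches \( K_a\psi\in\Gamma \).

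\noindent It then remains to descend from \( \H^\infty \) back to \( \H \): given an \( \H \)-formula \( \varphi \) valid in \DBEL{}, I want \( \vdash\varphi \) using Table~\ref{tab:axb} alone. For this I would prove the auxiliary system conservative over Table~\ref{tab:axb} on \( \H \)-formulas, equivalently that every Table-consistent \( \H \)-set stays consistent after adding the \( K^\infty_a \) axioms; a Table-consistent \( \neg\varphi \) then yields a \DBEL{} countermodel through the canonical construction, contradicting validity. I expect this transfer to be the main obstacle, for two reasons. First, it rests on the strengthened \textsc{Lindenbaum} lemma, since one must be free to fix the \( K^\infty_a \)-content of formulas whose depth exceeds an agent's (now finite) depth without perturbing the \( \H \)-theory. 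Second, it is exactly here that the guards \( \P{a}{\depth{\varphi}+1} \) must be shown to be neither too weak nor too strong: they encode the \textbf{S5} behaviour of \( K^\infty_a \) only at worlds deep enough to observe it, and the argument must check that no new \( \H \)-theorem leaks out of the richer system. The remaining ingredients—the equivalence-relation properties and the existence lemma—I regard as routine \textbf{S5} bookkeeping once the depth guards are tracked with care.
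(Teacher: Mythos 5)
Your soundness argument is fine and is in fact more direct than the paper's (the paper derives the Table~\ref{tab:axb} axioms inside an auxiliary \( K^\infty_a \) system rather than checking them semantically); your depth bookkeeping for the two guarded introspection axioms is correct. Your completeness architecture also matches the paper's: an auxiliary system over \( \H^\infty \) (this is exactly Table~\ref{tab:axb2} in Appendix~\ref{app:ax}), a canonical model whose states are maximal consistent sets each containing some \( \E{a}{d} \), and the strengthened \textsc{Lindenbaum} lemma (Lemma~\ref{lmm:lindapp}); you correctly identify why the ordinary lemma fails here.

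The gap is the descent from \( \H^\infty \) to \( \H \). You state that the auxiliary system must be shown conservative over Table~\ref{tab:axb} on \( \H \)-formulas, call it ``the main obstacle,'' and then stop; but this conservativity claim is the actual content of Theorem~\ref{thm:axb} once Theorem~\ref{thm:altaxb} is in hand, and it does not follow from the canonical-model construction you describe (a countermodel built from a set consistent in the \emph{auxiliary} system refutes validity only if Table-\ref{tab:axb}-consistency implies auxiliary-system consistency, which is the very thing to be proved). The paper closes this with a purely syntactic proof transformation: given an auxiliary-system proof of an \( \H \)-formula, replace every subformula \( K^\infty_a \chi \) in every line by \( \P{a}{\depth{\chi}} \to K_a \chi \); one then checks that each transformed axiom of Table~\ref{tab:axb2} is derivable in Table~\ref{tab:axb} (e.g.\ transformed positive introspection, equation~\eqref{eq:posinscompl}, reduces via a tautology and depth monotonicity to the guarded introspection axiom, and the bounded knowledge axiom becomes a consequence of depth deduction), and that \textit{modus ponens} and necessitation commute with the substitution. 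Without this (or an equivalent) argument the proof is incomplete; showing that the guards \( \P{a}{\depth{\varphi}+1} \) are ``neither too weak nor too strong'' is precisely what that axiom-by-axiom verification accomplishes, and it cannot be waved through.
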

\begin{proof}
	Rather than directly showing soundness and completeness, we show it is
	equivalent to the axiomatization of Table~\ref{tab:axb2} in
	Appendix~\ref{app:ax} on the
	fragment \( \H \), which is shown to be sound and complete over \(
	\H^\infty \) in Theorem~\ref{thm:altaxb}.
	We begin by proving any proposition in \( \H \) that can be shown using
	Table~\ref{tab:axb} can be shown using Table~\ref{tab:axb2} and
	then that any proof of a formula in \( \H \) using the axioms in
	Table~\ref{tab:axb2} can be shown using those in Table~\ref{tab:axb}.

	For the first direction, we prove that the axioms in Table~\ref{tab:axb}
	can be proven using those from Table~\ref{tab:axb2}.
Most of them are immediate applications of bounded knowledge within the axioms of
Table~\ref{tab:axb2}, along with tautologies when necessary.
For positive and negative introspection, see equation~\eqref{eq:posinscompl}
below in the proof of the opposite direction of the equivalence.
We prove the least evident axiom, the deduction axiom, here as an example:
\begingroup
\allowdisplaybreaks
\begin{align}
 \text{Deduction} \quad
        & (K^\infty_a \varphi \wedge K^\infty_a (\varphi \to \psi)) \to
        K^\infty_a \psi \label{eq:st1} \\
 \text{Bounded knowledge in~\eqref{eq:st1}} \quad
        & (K^\infty_a \varphi \wedge K^\infty_a (\varphi \to \psi)) \to
        \P{a}{\depth{\psi}} \to K_a \psi \label{eq:st2} \\
 \text{Tautology in~\eqref{eq:st2}} \quad
        & \P{a}{\max(\depth{\varphi}, \depth{\psi})} \to K^\infty_a \varphi \to
        K^\infty_a (\varphi \to \psi) \to \P{a}{\depth{\psi}} \to K_a \psi
        \label{eq:st3} \\
 \text{Bounded knowledge and~\eqref{eq:st3}} \quad
        & \P{a}{\max(\depth{\varphi}, \depth{\psi})} \to K^\infty_a \varphi \to
        K_a (\varphi \to \psi) \to \P{a}{\depth{\psi}} \to K_a \psi
        \label{eq:st4} \\
 \text{Repeated depth consistency} \quad
        & \P{a}{\max(\depth{\varphi}, \depth{\psi})} \to (\Paphi \wedge
        \P{a}{\depth{\psi}}) \label{eq:st5} \\
 \text{Bounded knowledge and~\eqref{eq:st4} and~\eqref{eq:st5}} \quad
        & \P{a}{\max(\depth{\varphi}, \depth{\psi})} \to K_a \varphi \to
        K_a (\varphi \to \psi) \to K_a \psi \label{eq:st6} \\
 \text{Tautology in~\eqref{eq:st6}} \quad
        & K_a \varphi \to K_a (\varphi \to \psi) \to \P{a}{\max(\depth{\varphi},
        \depth{\psi})} \to K_a \psi. \label{eq:st7} \\
 \text{Bounded knowledge in~\eqref{eq:st7}} \quad
        & K_a \varphi \to K_a (\varphi \to \psi) \to K_a \psi. \notag
\end{align}
\endgroup

	In the other direction, we will show by induction over a proof of a valid
	formula in \( \H \) using Table~\ref{tab:axb2} that it can be transformed
	into a proof with the same conclusion, using only axioms from
	Table~\ref{tab:axb}.
The transformation of a proof in the first axiomatization is as follows,
\begin{itemize}
\item If an item of the proof is a propositional tautology, replace all \(
K^\infty_a \varphi \) subformulas by \( \Paphi \to K_a \varphi \), clearly the
tautology still holds and it is in Table~\ref{tab:axb}.
\item If an item is an instance of the bounded knowledge axiom, replace it with
the following formula which is a consequence of depth deduction and a tautology
(and therefore can be added to the proof with two extra steps),
\[
K_a \varphi \leftrightarrow (\Paphi \wedge \Paphi \to K_a \varphi).
\]
\item If it uses any of the other axioms, replace it with the corresponding axiom
(with the same name) from Table~\ref{tab:axb}.
\end{itemize}

We now have a sequence that has the same conclusion (since the conclusion is in
\( \H \)) and only uses axioms from Table~\ref{tab:axb}.
The last thing to show for this to be a proof in this axiomatization is that all
applications of \textit{modus ponens} and necessitation are still correct within
this sequence.
To this end, we show by induction that each step of the sequence is the same as
the original proof where every \( K^\infty_a \varphi \) subformula in each step
has been replaced by \( \Paphi \to K_a \varphi \).

First, note that this is the case for the two first bullet points of our
transformation rules above.
This is also true of each axiom in the table after our transformation: a proof
similar to the one in equation~\eqref{eq:st1} will yield the equivalence for
deduction, the only remaining non-trivial cases are positive and negative
introspection.
For positive introspection, performing the substitution yields,
\begin{equation}\label{eq:posinscompl}
 (\Paphi \to K_a \varphi) \to \P{a}{\depth{\varphi} + 1} \to K_a(\Paphi \to K_a
 \varphi).
\end{equation}
Through application of a tautology and the depth monotonicity axiom we find it to
be equivalent to,
\(
 \P{a}{\depth{\varphi} + 1} \to K_a \varphi \to K_a(\Paphi \to K_a
 \varphi).
\)
Therefore, up to adding steps to the proof and using tautologies, we can prove
the axiom from Table~\ref{tab:axb} from the axiom in Table~\ref{tab:axb2} after
the substitution.
The same can be said of negative introspection through a similar transformation.

Finally, since \textit{modus ponens} and necessitation also maintain the property
of replacing \( K^\infty_a \varphi \) subformulas in each step by \( \Paphi \to
K_a \varphi \), it is true that the transformed proof is indeed a proof of the
same conclusion in Table~\ref{tab:axb}'s axiomatization.
\end{proof}

\section{Depth-bounded public announcement logic}\label{sec:dpal}
We next present how to incorporate depth announcements in \DBEL{}, which are a
key challenge in defining depth-bounded public announcement logic (DPAL).
Recall the axiom~\eqref{eq:intro} of public announcement logic, \( [\varphi] K_a
\psi \leftrightarrow (\varphi \to K_a [\varphi] \psi) \).
For the right-hand side to be true, agent \( a \) must be of depth \( d([\varphi]
\psi) = d(\varphi) + d(\psi) \) according to~\DBEL{}.
This suggests that an agent must ``consume'' \( \d{\varphi} \) of its depth every
time an announcement \( \varphi \) is made, meaning that an agent's depth behaves
like a depth budget with respect to public announcements.

Moreover, to model that some agents might be too shallow for the announcement
\( \varphi \), each possible world is duplicated in a \textit{negative} version
where the announcement has not taken place and a \textit{positive} version where
the announcement takes place in the same way as in PAL\@.
Agents who are not deep enough to perceive the announcement see the negative and
positive version of the world as equivalent.
\begin{defi}\label{defi:dpal}
	Models in \textbf{depth-bounded public announcement logic} (DPAL) are
	defined the same way as in \DBEL{} and the semantics is extended to \(
	\L^\infty \) by
	\[ (M, s) \models [\varphi] \psi \iff ((M, s) \models \varphi \implies \\
	(M \mid \varphi, (1, s)) \models \psi), \]
	where we define \( M \mid \varphi \) to be the model \( (\states',
	\sim', V', d') \), where,
\begingroup
\allowdisplaybreaks
\begin{align}
\states' & = (\{ 0 \} \times \states) \cup \{ (1, s), \; s \in \states, \; (M, s)
	\models \varphi \} \notag \\
	\sim'_a & \text{ is the transitive symmetric closure of } R_a \text{ such that},
	\notag \\
	(i, s) \, R_a \, (i, s') & \iff s \sim_a s' \qquad \text{ for } i=0,1 \notag \\
(1, s) \, R_a \, (0, s) & \iff (M, s) \not\models \P{a}{\depth{\varphi}} \notag \\
V'((i, s)) & = V(s) \qquad \qquad \; \, \text{ for } i=0,1 \notag \\
d'(a, (0, s)) & = d(a, s) \notag \\
d'(a, (1, s)) & = \begin{cases}
d(a, s) \quad \text{if } d(a, s) < \depth{\varphi} \\
d(a, s) - \depth{\varphi} \quad \text{otherwise.}
\end{cases}\label{eq:s3depth}
\end{align}
\endgroup
\end{defi}

Since public announcements are no longer unconditionally and universally heard by
all agents, we revisit the axiom~\eqref{eq:intro} in~\DPAL{}.
The determining factor is \textbf{depth ambiguity}: agents that are unsure about
their own depth introduce uncertainty about which agents have perceived the
announcement.

\subsection{Unambiguous depths setting}\label{sub:unam}
A model verifies the \textbf{unambiguous depths} setting whenever each agent
knows its own depth exactly:
\begin{equation}\label{eq:unam}
	\forall a, s, s', \quad s \sim_a s' \implies d(a, s) = d(a, s').
\end{equation}

The proof of the following theorem is given as Proposition~\ref{prop:col3app}
in Appendix~\ref{app:props}.
\begin{thm}\label{thm:col3}
	For all \( \varphi \in \L^\infty \), the following two properties,
	respectively called \textbf{knowledge preservation} and
	\textbf{traditional announcements}, are valid in~\DPAL{} in the
	unambiguous depths setting,
\begin{alignat}{3}
& \forall \psi \in \L_a^\infty, \; \; && \neg \P{a}{\depth{\varphi}} && \to
	\left( [\varphi] K_a \psi \leftrightarrow (\varphi \to K_a \psi) \right)
	\tag{KP} \label{eq:KP} \\
& \forall \psi \in \L^\infty, \; \; && \P{a}{\depth{\varphi}} && \to
	\left( [\varphi] K_a \psi \leftrightarrow \left( \varphi \to K_a
	[\varphi] \psi \right) \right), \tag{TA} \label{eq:TA0}
\end{alignat}
	where \( \L_a^\infty \) is the fragment of \( \L^\infty \) without depth atoms or modal operators for agents other than \( a \).
\end{thm}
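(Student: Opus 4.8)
The plan is to analyze both biconditionals by computing, in each case, the $\sim'_a$-equivalence class of the positive world $(1,s)$ in $M \mid \varphi$, and then to decompose the knowledge operator via the identity $K_a \chi = \P{a}{\depth{\chi}} \wedge K^\infty_a \chi$ into a depth guard and a $K^\infty_a$-part. The crucial leverage from the unambiguous setting~\eqref{eq:unam} is that whether $a$ is deep or shallow for $\varphi$ (i.e.\ whether $(M,s)\models\Paphi$) is constant across the entire $\sim_a$-class of $s$; this is exactly what controls which cross-edges $(1,t)\,R_a\,(0,t)$ appear, and hence pins down the shape of the class of $(1,s)$ in each of the two cases.

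For the traditional-announcements direction~\TAz, I would assume $(M,s)\models\Paphi$. Since every $t \sim_a s$ is then also deep by~\eqref{eq:unam}, no cross-edge leaves the positive copy, and the transitive-symmetric closure keeps the class of $(1,s)$ equal to $\{(1,t) : t \sim_a s,\ (M,t)\models\varphi\}$---precisely the standard PAL relativization restricted to agent $a$. The argument is then a direct semantic computation, valid for all $\psi \in \L^\infty$: the depth guard matches because $d'(a,(1,s)) = d(a,s) - \depth{\varphi}$, so $(M\mid\varphi,(1,s))\models\P{a}{\depth{\psi}}$ iff $d(a,s) \geq \depth{\varphi}+\depth{\psi} = \depth{[\varphi]\psi}$, iff $(M,s)\models\P{a}{\depth{[\varphi]\psi}}$; and the $K^\infty_a$-parts match because, for every $\varphi$-world $t$, the announcement semantics give $(M\mid\varphi,(1,t))\models\psi$ iff $(M,t)\models[\varphi]\psi$. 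Combining these reproduces $\varphi \to K_a [\varphi]\psi$. No induction or tracking of other agents is needed here, which is why~\TAz holds on the full language.

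For knowledge preservation~\KI, I would assume $(M,s)\models\neg\Paphi$, so by~\eqref{eq:unam} every $t\sim_a s$ is shallow and every cross-edge $(1,t)\,R_a\,(0,t)$ is present; the class of $(1,s)$ is then $\{(0,t):t\sim_a s\}\cup\{(1,t):t\sim_a s,\ (M,t)\models\varphi\}$, with both copies of each $t$ merged. The depth guard is now unchanged, $d'(a,(1,s)) = d(a,s)$, so $\P{a}{\depth{\psi}}$ holds at $(1,s)$ iff it holds at $s$. It remains to match $K^\infty_a$, which reduces to the key lemma: in the shallow case, $(M\mid\varphi,(j,t))\models\psi$ iff $(M,t)\models\psi$ for every $\psi\in\L_a^\infty$, every $t\sim_a s$, and every available copy $j\in\{0,1\}$. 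Granting the lemma, the conjunction over the (merged) class collapses to ``$(M,t)\models\psi$ for all $t\sim_a s$'', i.e.\ $(M,s)\models K^\infty_a\psi$, and the biconditional $[\varphi]K_a\psi \leftrightarrow (\varphi\to K_a\psi)$ follows.

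I expect the key lemma to be the main obstacle, proven by structural induction on $\psi\in\L_a^\infty$. The atomic case follows from $V'((j,t))=V(t)$, and the Boolean cases from the induction hypothesis; the restriction defining $\L_a^\infty$---forbidding depth atoms and all modalities $K_b, K^\infty_b$ for $b\neq a$---is essential, as it guarantees that the only relation ever traversed is $a$'s (whose merged-copy structure is a faithful duplicate of $\sim_a$ with unchanged $a$-depths) and that the changing depths and relations of the other agents in $M\mid\varphi$ are never observed. The genuinely delicate case is a nested announcement $[\chi]\eta$, where one must compare the iterated update $(M\mid\varphi)\mid\chi$ with $M\mid\chi$: here I would argue that, because $\chi\in\L_a^\infty$ and $a$'s depth is preserved in the shallow copies, announcing $\chi$ introduces exactly the same $a$-cross-edges on top of the already-merged structure, so the copies remain paired and the induction hypothesis on $\eta$ applies. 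Making this precise is cleanest via a bisimulation invariant---an $a$-bisimulation pairing each $t$ with its copies that preserves atom valuations and $a$-depths, shown to be preserved under announcement of any $\L_a^\infty$ formula---and this preservation step is the one place where real care is required.
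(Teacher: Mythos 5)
Your proposal is correct and follows the same skeleton as the paper's proof (Proposition~\ref{prop:col3app}): use unambiguity~\eqref{eq:unam} to pin down the $\sim'_a$-class of $(1,s)$ in each case, split $K_a$ into the depth guard plus $K^\infty_a$, dispatch~\TAz{} by a direct computation with the matching depth shift $d'(a,(1,s))=d(a,s)-\d{\varphi}$, and reduce~\KI{} to a transfer lemma proved by structural induction on $\psi\in\L_a^\infty$ --- your ``key lemma'' is exactly the paper's equation~\eqref{eq:ihki}. The one place you genuinely diverge is the step you yourself flag as delicate, the nested-announcement case, where the plain lemma (comparing $M$ with $M\mid\varphi$) does not directly apply to the pair $M\mid\chi$ versus $(M\mid\varphi)\mid\chi$. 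The paper resolves this by strengthening the induction hypothesis to quantify over arbitrary finite towers of subsequent announcements $\psi_1,\ldots,\psi_n$ (equation~\eqref{eq:ihkistr}), adding the side-condition $\P{a}{\d{\psi_1}+\cdots+\d{\psi_n}+\d{\psi}}$ so that, within the relevant class, $a$ perceives every inner announcement and the iterated equivalence classes stay purely positive; this is harmless at the top level because both sides of~\KI{} are false when the $K_a\psi$ depth guard fails. You instead propose an $a$-bisimulation (preserving atoms, $a$-depths, and $\sim_a$, relating each state to all of its surviving copies) shown to be preserved under announcement of any $\L_a^\infty$ formula. That route also works --- the cross-edges for $\chi$ appear at corresponding copies because $a$'s depth is preserved, and survival of positive copies corresponds because formula preservation and bisimulation preservation are established by a simultaneous induction --- and it avoids the depth side-condition at the cost of tracking partially-negative copies through the transitive-symmetric closures. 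Either device closes the same gap; just be aware that the ``preservation under announcement'' step cannot be proved after the fact but must be interleaved with the formula-preservation induction, since $\chi$ may itself contain announcements.
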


\paragraph{Discussion}
Knowledge preservation~\KI{} means that an agent who is not deep enough to
perceive an announcement \( \varphi \) must not change its knowledge of a formula
\( \psi \).
However, such a property could not be true of all formulas \( \psi \), for
instance if \( \psi = K_a K_b p \) but \( b \) is deep enough to perceive \(
\varphi \), then the depth adjustment formula~\eqref{eq:s3depth} could mean that
\( b \)'s depth is now \( 0 \), making \( \psi \) no longer hold.
Even when \( a \) is certain about \( b \)'s depth, its uncertainty about what
the announcement entails could also mean that formulas such as \( \neg K_b p \)
could no longer be true if \( \P{b}{\d{\varphi}} \) and \( \varphi \to p \) in
the model.
This demonstrates that in depth-bounded logics public
announcements must introduce uncertainty: if \( a \) is unsure what \( b \) has
perceived, it can no longer hold any certainties about what \( b \) does not
know.
This is not the case in PAL since all agents perceive all announcements.
Our treatment of the depth-ambiguous case in Section~\ref{sec:amb} generalizes~\KI{} to obtain a property~\KIp{} that holds on all formulas in \( \L^\infty \).

Traditional announcements~\TAz{} ensures that announcements behave the same as in
PAL when the agent is deep enough for the announcement.
The caveats from the discussion of~\KI{} no longer apply here, as any \( K_b \)
operator that appears in \( \psi \) will still appear after the same public
announcement operator, meaning that depth variations or knowledge variations are
accounted for.

\subsection{Ambiguous depths setting}\label{sec:amb}
We now abandon the depth unambiguity assumption from equation~\eqref{eq:unam},
and explore how properties~\KI{} and~\TAz{} generalize to settings without depth
unambiguity.
We find a condition that ensures that sufficient knowledge about
other agents' depths is given to \( a \) in order to maintain its recursive
knowledge about other agents.
The proof to the following theorem is given as Proposition~\ref{prop:propsstrapp}
in Appendix~\ref{app:props}.
\begin{thm}\label{thm:col3str}
For any \( \varphi \in \L^\infty \), let \( \F: \L^\infty \to \L^\infty \) be
inductively defined as,
\begin{align*}
	\F(p) = \F(\E{a}{d}) = \F(\P{a}{d}) & = \top \\
	\F(\neg \psi) & = \F(\psi) \\
	\F(\psi \wedge \chi) & = \F(\psi) \wedge \F(\chi) \\
	\F(K_a \psi) & = \neg K^\infty_a(\varphi \to \Paphi) \wedge
	K^\infty_a (\varphi \to \neg \Paphi \vee \P{a}{\d{\varphi} + \d{\psi}})
	\wedge K^\infty_a \F(\psi) \\
	\F(K^\infty_a \psi) & = \neg K^\infty_a(\varphi \to \Paphi) \wedge K^\infty_a \F(\psi) \\
	\F([\psi_1] \psi_2) & = \F(\psi_1) \wedge \F(\psi_2).
\end{align*}

	For all \( \varphi \in \L^\infty \), the following two properties are
	valid in~\DPAL{},
\begin{alignat}{3}
	& \forall \psi \in \L^\infty, \; \; && \F(K_a \psi) && \to
	\left( [\varphi] K_a \psi \leftrightarrow (\varphi \to
	K_a \psi) \right) \tag{KP'} \label{eq:KP'} \\
	& \forall \psi \in \L^\infty, \; \; && K^\infty_a(\varphi \to \P{a}{\depth{\varphi}}) &&
	\to \left( [\varphi] K_a \psi \leftrightarrow \left(\varphi \to
	K_a [\varphi] \psi \right) \right).
	\tag{TA'} \label{eq:TA}
\end{alignat}
\end{thm}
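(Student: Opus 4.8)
The plan is to establish both validities by a direct semantic argument in an arbitrary \DPAL{} model \( M \) at an arbitrary state \( s \), reducing each biconditional to an analysis of the \( \sim'_a \)-equivalence class of the positive world \( (1,s) \) in \( M\mid\varphi \). Because \( [\varphi]K_a\psi \) and both right-hand sides hold vacuously when \( (M,s)\not\models\varphi \), I would first dispatch that case and then assume \( (M,s)\models\varphi \), so that \( (1,s) \) exists and \( [\varphi]K_a\psi \) unfolds to \( (M\mid\varphi,(1,s))\models K_a\psi \). In every case I would split \( K_a\psi \) into its \emph{depth part} \( \P{a}{\depth{\psi}} \) and its \emph{epistemic part} \( K^\infty_a\psi \), treating them independently, and repeatedly use that for any \( s' \) with \( (M,s')\models\varphi \) one has \( (M\mid\varphi,(1,s'))\models\chi \iff (M,s')\models[\varphi]\chi \) by definition of the announcement.

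For \TA{}, the argument is the clean one, generalizing \TAz{}. Under \( K^\infty_a(\varphi\to\Paphi) \), no positive world \( (1,s') \) with \( s'\sim_a s \) carries a downward edge to its negative copy, since the guard \( (M,s')\not\models\Paphi \) fails whenever \( \varphi \) holds there; hence the class of \( (1,s) \) is purely positive and projects onto \( \{ s' : s'\sim_a s,\ (M,s')\models\varphi \} \). The depth part then reduces, via the depth adjustment~\eqref{eq:s3depth}, to \( d'(a,(1,s))\geq\depth{\psi} \iff d(a,s)\geq\depth{\varphi}+\depth{\psi} \), matching \( \P{a}{\depth{[\varphi]\psi}} \) on the right. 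The epistemic part reduces to \( K^\infty_a[\varphi]\psi \), using the projection together with the fact that \( [\varphi]\psi \) is itself guarded by \( \varphi \), so that \( K^\infty_a(\varphi\to[\varphi]\psi) \) and \( K^\infty_a[\varphi]\psi \) coincide.

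For \KIp{}, the first conjunct \( \neg K^\infty_a(\varphi\to\Paphi) \) of \( \F(K_a\psi) \) guarantees some \( t\sim_a s \) with \( (M,t)\models\varphi \) and \( a \) too shallow there; the resulting downward edge, under the symmetric-transitive closure, forces the class of \( (1,s) \) to be the \emph{full} two-layer set \( \{ (0,s') : s'\sim_a s \}\cup\{ (1,s') : s'\sim_a s,\ (M,s')\models\varphi \} \). The proof then hinges on a preservation lemma, proved by induction on \( \chi\in\L^\infty \): under the hypotheses bundled into \( \F(\chi) \), the truth of \( \chi \) at \( (0,s') \) and at \( (1,s') \) in \( M\mid\varphi \) agrees with its truth at \( s' \) in \( M \). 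Granting it, the negative copies make the epistemic part at \( (1,s) \) range over all of \( s \)'s original \( \sim_a \)-alternatives, collapsing it to \( K^\infty_a\psi \) at \( s \), while the positive copies impose no constraint beyond the negative ones; the depth part is handled by the second conjunct \( K^\infty_a(\varphi\to\neg\Paphi\vee\P{a}{\depth{\varphi}+\depth{\psi}}) \), which ensures that whenever \( a \) is deep enough to hear \( \varphi \) it still retains depth at least \( \depth{\psi} \) afterwards, so the two depth requirements coincide.

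The main obstacle is this preservation lemma, specifically its positive-world half. A negative copy \( (0,s') \) keeps \( a \)'s original depth, whereas a positive copy \( (1,s') \) reduces the depth of every agent deep enough for \( \varphi \) by \( \depth{\varphi} \), and I must show these reductions do not alter the truth of \( \chi \). This amounts to a recursive knowledge-preservation statement---in effect \( [\varphi]\chi\leftrightarrow\chi \) at the relevant worlds---which is precisely why the inductive cases for \( K_a\chi' \) and \( K^\infty_a\chi' \) carry the \emph{known} condition \( K^\infty_a\F(\chi') \) and the depth guard of the second conjunct: the former feeds the induction hypothesis at every \( \sim_a \)-alternative, and the latter realigns the depth parts of nested modalities after the positive-layer depth shift. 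Coordinating these \( \F \)-conditions across the modalities of several agents, and in particular controlling how depth atoms behave under the positive-world depth reduction, is the delicate part of the induction and the place where the precise form of \( \F \) is essential.
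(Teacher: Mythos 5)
Your proposal follows essentially the same route as the paper's proof (Proposition~\ref{prop:propsstrapp} in Appendix~\ref{app:props}): the same analysis of the \( \sim'_a \)-class of \( (1,s) \) in each case, the same preservation lemma (the paper's induction hypothesis~\eqref{eq:ihkip}) with each conjunct of \( \F \) playing exactly the role you assign it, and the same reduction of~\TA{} to matching depth conditions via reflexivity. The one point you flag but leave open --- the nested-announcement case of the preservation lemma --- is resolved in the paper not by proving \( [\varphi]\chi \leftrightarrow \chi \) directly but by strengthening the induction hypothesis to quantify over arbitrary finite sequences of prior announcements \( \psi_1, \ldots, \psi_n \) (as in its equation~\eqref{eq:ihkistr}).
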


\subsection{Alternate treatments of model updates for public announcements}\label{subsec:s12}
One question is whether using a definition of public announcements closer to~PAL
would produce a version of the above axioms closer to~\eqref{eq:intro}.
Eager depth-bounded public announcement logic (EDPAL) below unconditionally
decrements the depth value of all agents after public announcements.
\begin{defi}[\TSo{}]\label{defi:sem1}
	\TSo{} extends the \DBEL{} semantics to include public announcements by
	defining
	\[ (M, s) \models [\varphi] \psi \iff ((M, s) \models \varphi
	\implies (M \mid \varphi, s) \models \psi), \]
	where \( M \mid \varphi \) is the model \( (\states', \sim', V, d') \) in
	which \( \states' = \{ s \in \states, \; (M, s) \models \varphi \} \), \(
	\sim'_a \) is the restriction of \( \sim_a \) to \( \states' \), \( d'(a,
	s) = d(a, s) - \depth{\varphi} \), and $d$ may take values in \( \Z \).
\end{defi}
\TSo{} has a sound and complete
axiomatization based on the axiomatization of~\DBEL{} (Theorem~\ref{thm:ax1}),
which also allows us to prove the complexity result of Theorem~\ref{thm:sat}.

However, another consequence of its definition is that excessive public
announcements in~\TSo{} can lead an agent to a state in which it cannot reason anymore, as
it has consumed its entire depth budget.
\begin{prop}[Amnesia]\label{prop:amnesia}
In~\TSo{}, the formula \( \neg \P{a}{\depth{\varphi}} \to [\varphi] \neg
K_a \psi \) is valid for all \( \varphi \) and \( \psi \).
\end{prop}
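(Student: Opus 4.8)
The statement is a validity, so the plan is to fix an arbitrary pointed model \( (M, s) \) of \TSo{}, assume the antecedent \( (M, s) \models \neg \Paphi \) (equivalently \( d(a, s) < \depth{\varphi} \)), and establish \( (M, s) \models [\varphi] \neg K_a \psi \) by unfolding the announcement semantics of Definition~\ref{defi:sem1}. That semantics reduces the goal to the implication ``if \( (M, s) \models \varphi \) then \( (M \mid \varphi, s) \models \neg K_a \psi \),'' which I would handle by a case split on whether \( \varphi \) holds at \( s \).

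If \( (M, s) \not\models \varphi \), the implication is vacuously true and there is nothing more to check. The substantive case is \( (M, s) \models \varphi \), so that \( s \) survives into the updated model \( M \mid \varphi \). Here I would expand \( (M \mid \varphi, s) \models K_a \psi \) using the \DBEL{} reading of the \( K_a \) operator, namely \( (M \mid \varphi, s) \models \P{a}{\depth{\psi}} \wedge K^\infty_a \psi \), and show that its first conjunct already fails; hence \( K_a \psi \) cannot hold and \( \neg K_a \psi \) does.

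The crux is the depth bookkeeping. By Definition~\ref{defi:sem1} the eager update decrements every agent's depth unconditionally, \( d'(a, s) = d(a, s) - \depth{\varphi} \), and crucially allows the result to lie in \( \Z \). Since the antecedent gives \( d(a, s) < \depth{\varphi} \), we obtain \( d'(a, s) < 0 \). As the modal depth \( \depth{\psi} \) of any formula is a natural number, \( d'(a, s) < 0 \leq \depth{\psi} \), so \( (M \mid \varphi, s) \not\models \P{a}{\depth{\psi}} \), which is exactly the failure of the depth conjunct of \( K_a \psi \). I expect no real obstacle here; the only point deserving care is that this argument relies specifically on \TSo{}'s admission of negative (\( \Z \)-valued) depths, and that the inequality still works in the degenerate case \( \depth{\psi} = 0 \), since a negative depth fails even \( \P{a}{0} \). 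Assembling the two cases yields \( (M, s) \models [\varphi] \neg K_a \psi \) for the arbitrary \( (M, s) \), giving validity.
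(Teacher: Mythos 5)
Your proof is correct and follows essentially the same route as the paper: a case split on whether $\varphi$ holds at $s$, followed by the observation that the eager update drives $d'(a,s)$ below $0$, so the depth conjunct $\P{a}{\depth{\psi}}$ of $K_a\psi$ fails in $M \mid \varphi$. Your extra remark about the degenerate case $\depth{\psi}=0$ is a welcome bit of care but does not change the argument.
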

\begin{proof}
	If \( (M,s) \not\models \varphi \) then the implicand is true.
	If \( (M,s) \models \varphi \wedge \neg \Paphi \) then the depth of \( a
	\) in \( (M \mid \varphi, s) \) will be at most \( -1 \), meaning that \(
	(M \mid \varphi, s) \not\models K_a \psi \) for all \( \psi \).
\end{proof}

In particular, for \( \psi = \top \) one notices that standard intuitions about
knowledge fail in~\TSo{}.
This property is undesirable: \textit{(i)} one may expect agents to maintain some
knowledge even after public announcements that they are not deep enough to
understand and \textit{(ii)} deeper agents should be able to continue to benefit
from the state of knowledge of shallower agents even after the shallower agents
have exceeded their depth.

One way to try to remedy this property is to change model updates in~\TSo{} to make agents perceive announcements only when they are deep enough to understand them.
The resulting asymmetric depth-bounded public announcement logic (ADPAL) removes
depth from an agent's budget only when it is deep enough for an announcement,
and only updates its equivalence relation in states where it is deep enough for
the announcement.
\begin{defi}[\TSt{}]\label{defi:sem2}
    \TSt{} extends the \DBEL{} semantics to include public announcements by defining
	\[ (M, s) \models [\varphi] \psi \iff ((M, s) \models \varphi \implies (M
	\mid \varphi, s) \models \psi), \]
    where \( M \mid \varphi \) is the model \( (\states, \sim', V, d') \) and,
\begin{align*}
	s \not\sim'_a s' \iff s \not\sim_a s' \text{ or } & \begin{cases}
(M, s) \models \P{a}{\depth{\varphi}} \\
(M, s) \models \varphi \iff (M, s') \not\models \varphi,
\end{cases} \\
	d'(a, s) = & \begin{cases}
d(a, s) \quad \text{if } d(a, s) < \depth{\varphi} \\
d(a, s) - \depth{\varphi} \quad \text{otherwise.}
\end{cases}
\end{align*}
	The relations \( \sim_a \) are only assumed to be reflexive (as opposed
	to equivalence relations earlier).
\end{defi}

Unfortunately, in~\TSt{} an agent that is too shallow for an announcement could
still learn positive information that was learned by another agent who is deep
enough to perceive the announcement.
We call this property {\em knowledge leakage} as reflected in the following
proposition.
\begin{prop}[Knowledge leakage]\label{prop:leakage}
	\TSt{} does not verify the \( \to \) direction of~\KIp{}.
\end{prop}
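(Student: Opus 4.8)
The claim is that **ADPAL** fails the $\to$ direction of (KP'), i.e. that there is a model and a formula $\varphi, \psi$ such that $\F(K_a \psi)$ holds but $[\varphi] K_a \psi$ does NOT imply $(\varphi \to K_a \psi)$. Actually, wait — let me reconsider. The $\to$ direction of (KP') is $[\varphi] K_a \psi \to (\varphi \to K_a \psi)$ under hypothesis $\F(K_a \psi)$. So the task is to exhibit a counterexample: a model where $\F(K_a \psi)$ and $\varphi$ both hold at a state, where $[\varphi]K_a\psi$ holds, but $K_a\psi$ fails. The intuitive content is that a shallow agent $a$ (too shallow to hear $\varphi$) nonetheless comes to know $\psi$ after the announcement because a deeper agent $b$ has had its equivalence relation pruned, and $a$'s relation connects to states $b$ distinguishes.

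My plan is to construct a small explicit counterexample. Let me think about what goes wrong. In **ADPAL**, when $b$ is deep enough for $\varphi$ but $a$ is not, $b$'s relation $\sim'_b$ loses the edges crossing the $\varphi$-boundary, while $a$'s relation $\sim'_a$ is unchanged. The leakage happens when $a$'s knowledge of $\psi$ depends on $b$'s knowledge: $\psi$ should be a formula like $\neg K_b q$ (or $K_b q$) so that pruning $b$'s edges changes whether $a$ knows it. I would pick $a$ shallow (so $\neg \P{a}{\d{\varphi}}$ holds, making the hypothesis $\F(K_a\psi)$ easy to satisfy since the key conjunct is $\neg K^\infty_a(\varphi \to \Paphi)$ plus vacuously-satisfiable depth conditions), and let $\psi$ reference $b$'s knowledge.

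Concretely, I would set up three or four states, with $a$ deep enough for $\psi$ but not for $\varphi$, and $b$ deep enough for $\varphi$. I would arrange the states so that before the announcement $\neg K_a \psi$ holds (because $a$ considers possible a state where $\psi$ is false), but after the **ADPAL** update, the pruning of $b$'s relation makes $\psi$ true in all states $a$ still considers possible, so $[\varphi]K_a\psi$ holds. Since $a$ is too shallow for $\varphi$, $\sim'_a = \sim_a$, so $a$ has "learned" $\psi$ purely through the leaked effect of $b$'s announcement — exactly the forbidden conclusion. I would then verify $\F(K_a\psi)$ holds at the evaluation state and that $\varphi$ holds there.

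The main obstacle will be simultaneously satisfying the hypothesis $\F(K_a\psi)$ and engineering the leakage, since $\F$ imposes constraints like $K^\infty_a(\varphi \to \neg\Paphi \vee \P{a}{\d{\varphi}+\d{\psi}})$ that must not accidentally force $a$ to be deep enough to hear $\varphi$ (which would destroy the counterexample by making $\sim'_a$ get pruned too). I expect to handle this by making $a$'s depth uniformly below $\d{\varphi}$ across all $a$-accessible states, so that $\neg \P{a}{\d{\varphi}}$ holds everywhere $a$ can see and the problematic $\F$-conjuncts become vacuous. After fixing the model I would read off each semantic condition by the definitions in Definition~\ref{defi:sem2}, checking the three relevant facts — $\F(K_a\psi)$, $[\varphi]K_a\psi$, and $\neg K_a\psi$ — at the chosen state, which is routine once the model is chosen correctly.
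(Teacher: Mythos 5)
Your overall strategy is the same as the paper's: exhibit a small explicit model in which a shallow agent \( a \) (with \( d(a,\cdot) < \d{\varphi} \) throughout its equivalence class) comes to know a formula \( \psi \) referring to a deeper agent \( b \)'s knowledge, purely because the \TSt{} update prunes \( \sim_b \) while leaving \( \sim_a \) and \( a \)'s depth untouched. The paper realizes this with three states \( \{0,1,2\} \), \( \varphi = K_cK_cp_0 \), \( \psi = K_bp_0 \), \( d(a,\cdot)=1 \), \( d(c,\cdot)=2 \), and \( d(b,\cdot) = 0,2,0 \), evaluated at state \( 1 \). But your proposal stops exactly where the content of the proof begins: the statement is existential, so until a concrete model is written down and the three facts \( \F(K_a\psi) \), \( [\varphi]K_a\psi \), and \( \neg K_a\psi \) are actually checked at a state, nothing has been established.

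More substantively, the one constraint your sketch does not account for is the recursive conjunct \( K^\infty_a\F(\psi) \) inside \( \F(K_a\psi) \). For \( \psi = K_b q \) this contains \( \neg K^\infty_b(\varphi \to \P{b}{\d{\varphi}}) \), which forces \( b \) to consider possible a \( \varphi \)-state in which \( b \) itself is too shallow for \( \varphi \) --- while \( b \) must simultaneously satisfy \( \P{b}{\d{\varphi}} \) at the evaluation state, since that is what triggers the pruning of \( \sim_b \) that produces the leakage. Hence \( b \)'s depth cannot be uniform, and the instantiation your sketch gestures at (uniformly shallow \( a \), uniformly deep \( b \)) would make \( \F(K_a\psi) \) false and yield no counterexample to \KIp{}. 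This is precisely the delicate choice in the paper's model (\( b \) has depth \( 2 \) only at the evaluation state and depth \( 0 \) at the other \( \varphi \)-state), and it is why the final verification is not as routine as you claim.
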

\begin{proof}
	Consider three worlds, \( \{0, 1, 2\} \) and three agents \( a, b, c \).
	The relations for \( a \) and \( c \) are identity, the relation for \( b
	\) is the symmetric reflexive closure of, \( 0 \sim_b 1 \sim_b 2 \).
	The depth of \( a \) is \( 1 \) everywhere, \( b \)'s depth is \( 0, 2,
	0 \) in each respective state and the depth of \( c \) is \( 2 \)
	everywhere.
	The atom \( p_0 \) is true only in \( 0 \) and \( 1 \).

	Consider \( \varphi = K_c K_c p_0 \), it is true in \( 0 \) and \( 1 \)
	only, and consider \( \psi = K_b p_0 \).
	\( K_a \psi \) is not true in state \( 1 \), however \( [\varphi] K_a
	\psi \) is.
	Moreover, one can easily check that \( \F(K_a \psi) \) is true in that
	state.
\end{proof}
The proof provides a practical example of such leakage in~\TSt{} and we further demonstrate knowledge leakage in Proposition~\ref{prop:badmuddy} in the muddy children reasoning problem (see Section~\ref{sec:muddy}).

Note how each direction of the equivalence in~\KIp{} expresses (\( \to \)) that no knowledge leakage occurs and (\( \leftarrow \)) no amnesia occurs.
As shown in Theorem~\ref{thm:col3str}, \DPAL{} verifies both directions and thus has neither amnesia nor knowledge leakage.
As reflected in the following proposition, although EDPAL has amnesia, it does
not have knowledge leakage and does verify~\TAz{}.
\begin{prop}\label{prop:col1}
	\TSo{} verifies~\TAz{} and the \( \to \) direction in~\KI{} over \( \psi \in
	\L^\infty \), but not the converse.
\end{prop}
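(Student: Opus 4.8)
The plan is to treat the three claims separately, all by direct manipulation of the \TSo{} semantics (Definition~\ref{defi:sem1}) together with amnesia (Proposition~\ref{prop:amnesia}). For \TAz{} I would unfold both sides of the biconditional. Fix $s$ with $(M,s)\models\P{a}{\depth{\varphi}}$; the case $(M,s)\not\models\varphi$ is vacuous, so it suffices to show $(M\mid\varphi,s)\models K_a\psi \iff (M,s)\models K_a[\varphi]\psi$ when $(M,s)\models\varphi$. Expanding $K_a$ as $\P{a}{\depth{\cdot}}\wedge K^\infty_a$, the depth obligations coincide: on the left $d'(a,s)=d(a,s)-\depth{\varphi}\geq\depth{\psi}$, and on the right $d(a,s)\geq\depth{[\varphi]\psi}=\depth{\varphi}+\depth{\psi}$, which is the same inequality precisely because the decrement in \TSo{} is unconditional. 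For the $K^\infty_a$ components, the left quantifies over $\sim'_a$-neighbours (the $\sim_a$-neighbours satisfying $\varphi$) and evaluates $\psi$ in $M\mid\varphi$, whereas the right quantifies over all $\sim_a$-neighbours $s'$ and asks $(M,s')\models[\varphi]\psi$, i.e. ``$(M,s')\models\varphi$ implies $(M\mid\varphi,s')\models\psi$''; these describe the same condition. The decisive observation — and the reason no fragment restriction on $\psi$ is needed, unlike in \KI{} — is that $\psi$ is evaluated in $M\mid\varphi$ on \emph{both} sides, so the mismatch that breaks knowledge preservation does not occur here. This verification is where the bookkeeping effort lies and is the main obstacle.

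For the $\to$ direction of \KI{} I would invoke amnesia directly. Assume $(M,s)\models\neg\P{a}{\depth{\varphi}}$. If $(M,s)\models\varphi$, then Proposition~\ref{prop:amnesia} gives $(M\mid\varphi,s)\not\models K_a\psi$, so $[\varphi]K_a\psi$ is false at $s$ and the implication $[\varphi]K_a\psi\to(\varphi\to K_a\psi)$ holds vacuously; if $(M,s)\not\models\varphi$, then the consequent $\varphi\to K_a\psi$ already holds. Hence the $\to$ direction is valid for every $\psi\in\L^\infty$, with no fragment restriction, since amnesia itself is stated for arbitrary $\psi$.

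For the failure of the converse I would exhibit a one-state counterexample. Take a single world $s$ with the atom $p$ true, one agent $a$ with $d(a,s)=0$, announcement $\varphi=K_a p$ (so $\depth{\varphi}=1$), and $\psi=p$. Then $\neg\P{a}{1}$ holds, and $K_a p$ holds at $s$ (depth $0\geq\depth{p}=0$, and $p$ is true at the only reachable world), so both $\varphi$ and the antecedent $\varphi\to K_a\psi$ are satisfied; yet after the update $d'(a,s)=0-1=-1<0=\depth{p}$, so $(M\mid\varphi,s)\not\models\P{a}{0}$ and therefore $[\varphi]K_a\psi$ fails. This exhibits $\neg\P{a}{\depth{\varphi}}\wedge(\varphi\to K_a\psi)\wedge\neg[\varphi]K_a\psi$ at $s$, refuting the $\leftarrow$ direction and confirming that the two remaining claims are essentially corollaries of amnesia while \TAz{} carries the real work.
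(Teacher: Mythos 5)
Your proposal is correct, and two of its three parts coincide with the paper's argument: the forward direction of \KI{} is obtained from amnesia (Proposition~\ref{prop:amnesia}) in exactly the way the paper does it (the paper phrases it as \emph{ex falso} from $[\varphi](K_a\psi\wedge\neg K_a\psi)$, you phrase it as the antecedent $[\varphi]K_a\psi$ being false --- same content), and your one-world counterexample with $\varphi=K_a p$, $d(a,s)=0$ is the same construction as the paper's $\varphi=K_a\top$, $\psi=\top$. Where you genuinely diverge is \TAz{}: the paper derives it from the soundness of the knowledge announcement axiom $[\varphi](\P{a}{\depth{\psi}}\to K_a\psi)\leftrightarrow(\varphi\to\P{a}{\depth{\varphi}+\depth{\psi}}\to K_a[\varphi]\psi)$ established in Theorem~\ref{thm:ax4} in the appendix, plus commutativity of announcement with implication, whereas you unfold the \TSo{} semantics directly and check that the depth obligations ($d(a,s)-\depth{\varphi}\geq\depth{\psi}$ versus $d(a,s)\geq\depth{\varphi}+\depth{\psi}$) and the $K^\infty_a$ quantifications coincide. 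Your route is self-contained and more elementary --- it avoids the dependency on the appendix axiomatization --- and it makes visible the slightly stronger fact that the precondition $\P{a}{\depth{\varphi}}$ is never used, i.e.\ the equivalence in \TAz{} holds unconditionally in \TSo{} because the depth decrement is unconditional; the paper's route buys reuse of machinery it needs anyway for completeness. Both are valid proofs of the stated proposition.
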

\begin{proof}
For \KI{} in the forwards direction, assume that \( (M,s) \models \neg \Paphi
\wedge \varphi \wedge [\varphi] K_a \psi \).
We can use Proposition~\ref{prop:amnesia}, which shows that in our case
we have \( (M, s) \models [\varphi] ( K_a \psi \wedge \neg K_a \psi ) \)
(by using the usual commutativity of public announcement and conjunction,
proven in Theorem~\ref{thm:ax1}).
Since \( (M, s) \models \varphi \), it must be that \( (M \mid \varphi,
s) \models K_a \psi \) and \( (M \mid \varphi, s) \not\models K_a \psi
\), therefore there exists no such state \( s \) and we can conclude
\textit{ex falso}.

We also deduce from this an easy counter-example for the reverse direction
of~\KI{} by taking \( \psi = \top \in \L^\infty_a \), for \( \varphi = K_a \top
\) and \( d(a, s) = 0 \).

Theorem~\ref{thm:ax4} in Appendix~\ref{app:ax} provides a sound and complete
axiomatization of~\TSo{}, in particular the axiom \( [\varphi] (
\P{a}{\depth{\psi}} \to K_a \psi ) \leftrightarrow (\varphi \to
\P{a}{\depth{\varphi} + \depth{\psi}} \to K_a [\varphi] \psi) \) is valid.
\TAz{} is a direct consequence of this axiom using the commutativity
of public announcement with implication (a consequence of soundness of the
axiomatization in Theorem~\ref{thm:ax4} once more).
\end{proof}

\section{Axiomatizations}\label{sec:ax}
\begin{table}
\centering
 \begin{tabular}{|| r | l ||}
 \hline
 All axioms from Table~\ref{tab:axb} & \\
 \hline
 Atomic permanence & \( [\varphi] p \leftrightarrow (\varphi \to p) \) \\
 Depth adjustment & \( \forall d \in \Z, \; [\varphi] \E{a}{d} \leftrightarrow \left(
	 \varphi \to \E{a}{\depth{\varphi} + d} \right) \) \\
 Negation announcement & \( [\varphi] \neg \psi \leftrightarrow (\varphi \to \neg
	 [\varphi] \psi) \) \\
 Conjunction announcement & \( [\varphi] (\psi \wedge \chi) \leftrightarrow
	 ([\varphi] \psi \wedge [\varphi] \chi) \) \\
 Knowledge announcement & \( [\varphi](\P{a}{\depth{\psi}} \to K_a \psi) \leftrightarrow
	 ( \varphi \to \P{a}{\depth{\varphi} + \depth{\psi}} \to K_a [\varphi] \psi ) \) \\
 Announcement composition & \( [\varphi] [\psi] \chi \leftrightarrow
	 ([\varphi \wedge [\varphi] \psi] \chi) \) \\
 \hline
 \hline
 \textit{Modus ponens} & From \( \varphi \) and \( \varphi \to \psi \), deduce \(
 \psi \) \\
 Necessitation & From \( \varphi \) deduce \( \P{a}{\depth{\varphi}} \to
 K_a \varphi \) \\
 \hline
 \end{tabular}
 \caption{Sound and complete axiomatization of~\TSo{} over \( \L \).}\label{tab:ax1}
\end{table}

\begin{thm}\label{thm:ax1}
	The axiomatization in Table~\ref{tab:ax1} is sound and complete with
	respect to~\TSo{} (Definition~\ref{defi:sem1}) over the fragment \( \L
	\).
\end{thm}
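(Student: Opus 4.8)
The plan is to prove soundness by directly checking each axiom of Table~\ref{tab:ax1} against the \TSo{} semantics of Definition~\ref{defi:sem1}, and to prove completeness by the standard reduction strategy for public announcement logics: show that the six announcement axioms, read as rewrite rules, translate every formula of $\L$ into a provably equivalent formula of the announcement-free fragment $\H$, and then invoke the completeness of the base \DBEL{} axiomatization over $\H$ (Theorem~\ref{thm:axb}).

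For soundness I would verify the rows of Table~\ref{tab:ax1} one at a time. The axioms inherited from Table~\ref{tab:axb} survive because an \TSo{}-update restricts the state set and relations while leaving valuations unchanged ($V' = V$) and shifting every depth uniformly ($d'(a,s) = d(a,s) - \depth{\varphi}$, now valued in $\Z$). Atomic permanence, negation announcement and conjunction announcement are immediate from $V' = V$ together with the restriction to $\states' = \{ s : (M,s) \models \varphi \}$ and the restriction of $\sim_a$ to $\states'$. The two substantive cases are depth adjustment, which is exactly the clause $d'(a,s) = d(a,s) - \depth{\varphi}$ read on the atoms $\E{a}{d}$ (this is where allowing $d \in \Z$ is essential), and knowledge announcement, where I would unfold $K_a$ via its semantic clause $\P{a}{\depth{\cdot}} \wedge K^\infty_a(\cdot)$ on both sides: after the update agent $a$ knows $\psi$ at a surviving state iff its shifted depth $d(a,s) - \depth{\varphi}$ is at least $\depth{\psi}$, i.e. $(M,s)\models \P{a}{\depth{\varphi}+\depth{\psi}}$, and every surviving $\sim_a$-accessible state satisfies $[\varphi]\psi$. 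Announcement composition is the usual computation that two successive restrictions coincide with the single restriction to $\varphi \wedge [\varphi]\psi$, here additionally checking that the depth shifts compose additively, which holds since $\depth{\varphi \wedge [\varphi]\psi} = \depth{\varphi} + \depth{\psi}$.

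For completeness I would first record a replacement-of-provable-equivalents lemma for the boolean and $K_a$ contexts (standard, from the normal-modal core of Table~\ref{tab:axb}) and for the announcement context (via an admissible announcement-monotonicity principle, or equivalently by a disciplined innermost rewriting that only ever rewrites a subformula occurrence $[\varphi]\psi$ with $\varphi$ already announcement-free). Three reductions that are not literally in the table must then be derived: (i) the $\P{a}{d}$-counterpart of depth adjustment, $[\varphi]\P{a}{d} \leftrightarrow (\varphi \to \P{a}{\depth{\varphi}+d})$, obtained from depth adjustment and the depth-atom axioms (depth monotonicity and exact depths) of Table~\ref{tab:axb}; (ii) the reduction of a bare $[\varphi]K_a\psi$, obtained by first proving the \DBEL{} equivalence $K_a\psi \leftrightarrow \big(\P{a}{\depth{\psi}} \wedge (\P{a}{\depth{\psi}} \to K_a\psi)\big)$---the forward direction is depth deduction and the backward direction is modus ponens---and then applying conjunction announcement followed by knowledge announcement; and (iii) collapsing nested announcements by announcement composition. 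Iterating these rewrites yields a translation $t\colon \L \to \H$ with $\vdash \sigma \leftrightarrow t(\sigma)$ for all $\sigma$; consequently any \TSo{}-valid $\sigma \in \L$ gives a valid $t(\sigma) \in \H$ by soundness, hence $\vdash t(\sigma)$ by Theorem~\ref{thm:axb}, hence $\vdash \sigma$.

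The step I expect to be the main obstacle is proving that this rewriting terminates. Because $\depth{[\varphi]\psi} = \depth{\varphi} + \depth{\psi}$ is additive and the composition axiom reinserts $\varphi$ inside the announced formula $\varphi \wedge [\varphi]\psi$, neither formula length nor modal depth decreases under the rules, so a naive structural induction fails. I would instead adapt the weighted complexity measure of the classical PAL completeness proofs---multiplicative in the announcement constructor, e.g.\ $c([\varphi]\psi)$ growing like $(k + c(\varphi))\cdot c(\psi)$ for a suitable constant $k$---and verify that every oriented axiom, including the derived $\P{a}{d}$- and $K_a$-reductions, strictly decreases it. Choosing the constants so that both the duplication of $\varphi$ in announcement composition and the extra depth-atom material introduced by the depth-atom reductions are dominated is the delicate part; once a well-founded measure is fixed, termination, the provable equivalence $\vdash \sigma \leftrightarrow t(\sigma)$, and the completeness conclusion all follow routinely.
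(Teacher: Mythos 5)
Your proposal is correct in outline but takes a genuinely different route from the paper. The paper does not verify soundness semantically or run the reduction in \( \L \) at all: it proves Theorem~\ref{thm:ax1} by a purely proof-theoretic equivalence with the auxiliary axiomatization of Table~\ref{tab:ax4}, which lives in the extended language \( \L^\infty \) with the \( K^\infty_a \) operator and whose soundness and completeness are established separately in Theorem~\ref{thm:ax4}; the equivalence is obtained by the same syntactic proof transformation as in Theorem~\ref{thm:axb}, replacing every \( K^\infty_a \varphi \) subformula by \( \Paphi \to K_a \varphi \). The payoff of that detour is that the reduction-to-announcement-free-fragment argument (which is morally identical to your completeness strategy) is carried out with the \emph{unguarded} knowledge announcement axiom \( [\varphi] K^\infty_a \psi \leftrightarrow (\varphi \to K^\infty_a [\varphi]\psi) \), so there is no need for your derived reduction (ii) for a bare \( [\varphi] K_a \psi \) and no extra guard material to feed into a termination measure. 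What your approach buys is self-containedness within \( \L \) and an explicit treatment of the two points the paper leaves implicit: the replacement-of-equivalents lemma under announcements and the well-founded weighted measure for termination (the paper's Theorem~\ref{thm:ax4} just says the translation proceeds ``inductively''). Your derivation of the unguarded \( K_a \) reduction from depth deduction, conjunction announcement, and the guarded knowledge announcement axiom is sound, and your worry about the guards inflating the termination measure is the right one to flag; provided you also handle \( \P{a}{d} \) for the \( \Z \)-valued depths consistently with the exact depths axiom, the argument goes through.
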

\begin{proof}
	Similarly to the proof of Proposition~\ref{thm:axb}, rather than
	directly showing soundness and completeness we show it is equivalent to
	the axiomatization of Table~\ref{tab:ax4}, which is shown to be sound and
	complete for~\TSo{} in Theorem~\ref{thm:ax4} in Appendix~\ref{app:ax}.

	In the first direction, all axioms in Table~\ref{tab:ax1} can be shown
	using those in Table~\ref{tab:ax4} immediately, either from the proof of
	Proposition~\ref{thm:axb} or because they are the same.
	The only difficulty lies in knowledge announcement, but a proof similar
	to equation~\eqref{eq:st1} shows it is sound.

	The other direction also follows the exact same proof as in
	Proposition~\ref{thm:axb}: the public announcement axioms are direct
	translations of the same axioms in Table~\ref{tab:ax4} by replacing the
	\( K^\infty_a \varphi \) subformulas with \( \Paphi \to K_a \varphi \).
	The proof transformation from Proposition~\ref{thm:axb} therefore
	still yields a proof of the same formula in this axiomatization, which
	proves completeness.
\end{proof}

We now present a sound set of axioms for~\DPAL{}.
The main missing axioms for a sound and complete axiomatization are knowledge and
public announcements, which we explored in the previous section, and announcement
composition.
In fact, announcement composition cannot exist in~\DPAL{}, since making a single
announcement of depth \( d_1 + d_2 \) can behave very differently from making an
announcement of depth \( d_1 \) followed by an announcement of depth \( d_2 \),
for instance when an agent's depth is between \( d_1 \) and \( d_1 + d_2 \).

\begin{thm}\label{thm:ax1sound}
	Replacing knowledge announcement by~\KIp{} and~\TA{}
	and replacing depth adjustment by,
	\[ \forall d \in \N, \quad [\varphi] \E{a}{d} \leftrightarrow \left(
	\varphi \to \left( (\P{a}{\depth{\varphi}} \wedge \E{a}{d +
	\depth{\varphi}}) \vee (\neg \P{a}{\depth{\varphi}} \wedge \E{a}{d})
	\right) \right) \]
	in Table~\ref{tab:ax1} produces a set of sound axioms with respect
	to~\DPAL{}~\footnote{One could also easily add axioms for \( K^\infty_a \) modal operators, for instance using those from Table~\ref{tab:ax4} in Appendix~\ref{app:ax}.}.
\end{thm}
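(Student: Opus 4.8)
The plan is to verify soundness axiom-by-axiom and rule-by-rule against the \DPAL{} semantics of Definition~\ref{defi:dpal}, exploiting the fact that the genuinely modal content has already been packaged in earlier results. First I would observe that every axiom inherited from Table~\ref{tab:axb} remains valid in \DPAL{}. Since \DPAL{} models are exactly \DBEL{} models and the clauses for \( K_a \), \( K^\infty_a \), \( \E{a}{d} \), and \( \P{a}{d} \) are semantically unchanged, the validity arguments for truth, deduction, positive and negative introspection, and the depth axioms go through verbatim: each such argument only inspects the truth value of its subformulas across the \( \sim_a \)-equivalence classes and reads off the depth function, never unfolding a subformula, so it is insensitive to whether the instantiating formula contains announcements. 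The same remark shows that \emph{modus ponens} and necessitation preserve validity exactly as in \DBEL{}.

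Next, the three structural announcement axioms---atomic permanence, negation announcement, and conjunction announcement---I would discharge by directly unfolding the clause \( (M,s) \models [\varphi]\psi \iff ((M,s)\models\varphi \implies (M\mid\varphi,(1,s))\models\psi) \). The two facts that make these collapse to propositional tautologies are that \( V'((1,s)) = V(s) \), so atoms are evaluated at \( (1,s) \) exactly as at \( s \), and that the guard \( \varphi \) is shared by both sides; a short case split on whether \( (M,s)\models\varphi \) finishes each one. The new depth-adjustment axiom is handled the same way, splitting on whether \( d(a,s) \geq \depth{\varphi} \): equation~\eqref{eq:s3depth} gives \( d'(a,(1,s)) = d(a,s) - \depth{\varphi} \) in the \( \P{a}{\depth{\varphi}} \) branch and \( d'(a,(1,s)) = d(a,s) \) otherwise, which matches the two disjuncts \( \E{a}{d+\depth{\varphi}} \) and \( \E{a}{d} \) respectively.

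Finally, the two replacement axioms carrying the real modal content, \KIp{} and~\TA{}, are precisely the formulas asserted valid in~\DPAL{} by Theorem~\ref{thm:col3str}, so their soundness is immediate by citation. I would also flag explicitly that announcement composition must be omitted from the set, as the preceding discussion shows it fails in~\DPAL{}; thus ``the axioms of Table~\ref{tab:ax1}'' here means that table with knowledge announcement, depth adjustment, and announcement composition removed and \KIp{}, \TA{}, and the new depth-adjustment axiom added.

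The hard part will not be any single calculation---every step above is either a semantic unfolding or a citation---but rather stating the insensitivity claim for the Table~\ref{tab:axb} axioms carefully, since those axiom schemas are now instantiated with announcement-bearing formulas \( \varphi \). I expect this to be the main thing to argue precisely, although it ultimately reduces to the observation that the \DBEL{} validity proofs are purely about the truth values of subformulas across \( \sim_a \)-classes together with the depth function, all of which are unchanged by the move to \DPAL{}.
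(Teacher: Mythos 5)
Your proof is correct and follows essentially the same route as the paper's: cite Theorem~\ref{thm:col3str} for~\KIp{} and~\TA{}, verify the new depth-adjustment axiom by the case split on \( d(a,s) \geq \depth{\varphi} \) against equation~\eqref{eq:s3depth}, and dispose of the remaining axioms by routine semantic arguments (the paper routes the structural announcement axioms through Theorems~\ref{thm:ax1} and~\ref{thm:col3} rather than re-unfolding them, but the content is the same). Your explicit remark that announcement composition must be dropped from the resulting set is a reasonable clarification consistent with the paper's discussion immediately preceding the theorem.
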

\begin{proof}
Theorem~\ref{thm:col3str} verifies the two axioms~\KIp{} and~\TA{}.
The proofs for most axioms follows from Theorem~\ref{thm:ax1} and that knowledge
is defined the same way in both semantics.
In particular, atomic permanence and conjunction announcement axioms are proven
in Theorem~\ref{thm:col3}'s induction for \KI{}.

We are left to show depth adjustment,
\begin{align*}
(M, s) \models [\varphi] \E{a}{d}
& \iff (M, s) \models \varphi \implies (M \mid \varphi, (1, s)) \models \E{a}{d} \\
	& \iff (M, s) \models \varphi \implies d'(a, (1, s)) = d \\
& \iff (M, s) \models \varphi \implies \begin{cases}
	d(a, s) = d + \depth{\varphi} & \quad \text{if } d(a, s) \geq \depth{\varphi} \\
	d(a, s) = d & \quad \text{if } d(a, s) < \depth{\varphi}
	\end{cases} \\
& \iff (M, s) \models \varphi \to \left( (\P{a}{\depth{\varphi}} \wedge \E{a}{d +
	 \depth{\varphi}}) \vee (\neg \P{a}{\depth{\varphi}} \wedge \E{a}{d})
	 \right). \qedhere
\end{align*}
\end{proof}

\section{Complexity}\label{sec:complex}
We first state that adding depth bounds does not change the complexity of
\textbf{S5} and PAL respectively.
\begin{thm}\label{thm:sat}
	The satisfiability problems for~\DBEL{} with \( n \geq 2 \) agents and for
	\TSo{} are \textsf{PSPACE}-complete.
\end{thm}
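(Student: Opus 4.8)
The plan splits the four bounds into hardness (two lower bounds) and membership (two upper bounds), both organized around modal depth. \textsf{PSPACE}-hardness is inherited from multi-agent \textbf{S5}, whose satisfiability problem is \textsf{PSPACE}-hard for $n \geq 2$ agents~\cite{DBLP:books/mit/FHMV1995}. The sublanguage of \DBEL{} built from atoms using only $\neg$, $\wedge$, and the operators $K^\infty_a$ is exactly multi-agent \textbf{S5}: by Definition~\ref{defi:dbel} the truth of such a formula is entirely independent of the depth function $d$, and $K^\infty_a$ has the same semantics as the classical epistemic modality. Consequently the identity map is a correct reduction from \textbf{S5}-satisfiability to \DBEL{}-satisfiability: an \textbf{S5}-formula $\psi$ is \textbf{S5}-satisfiable iff it is \DBEL{}-satisfiable, the forward direction equipping any witnessing Kripke model with an arbitrary depth function and the backward direction discarding it. Since \TSo{} extends \DBEL{} and agrees with it on the announcement-free fragment $\H$, the same lower bound transfers to \TSo{}.

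For membership I would build a Ladner/Halpern--Moses-style recursive satisfiability test and show it runs in polynomial space, exploiting that its recursion depth is bounded by the modal depth $\depth{\varphi} \leq |\varphi|$ of the input. A stack frame corresponds to a single world and stores only a \emph{type}: a propositionally consistent truth assignment to the subformulas of $\varphi$ together with, for each agent $a$, a record of which depth thresholds $d$ occurring in $\varphi$ satisfy $d(a,s) \geq d$. This record is consistent with the unique-depth, exact-depths, and monotonicity axioms and is of size polynomial in $|\varphi|$ even with binary numerals, since only the finitely many thresholds appearing in $\varphi$ are relevant. The algorithm guesses the root type, then for each unfulfilled demand $\neg K^\infty_a \psi$ (or the $K^\infty_a$-part of a $\neg K_a \psi$) recurses into an $a$-successor that falsifies $\psi$ while preserving, in the standard \textbf{S5} cluster fashion, all positive and negative $K^\infty_a$-formulas; the depth guard $\P{a}{\depth{\chi}}$ inside each $K_a \chi$ is discharged locally against the threshold record. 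Because \DBEL{} does not require agents to know their own depth, each world chooses its depths independently subject only to these local threshold constraints, so the depth-assignment obligations never couple distinct stack frames — the same freedom that the strengthened \textsc{Lindenbaum} lemma of the completeness proof exploits to guarantee that a global depth function always exists.

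For \TSo{} I would thread a single global announcement context through the same recursion rather than eliminating announcements via the reduction axioms of Table~\ref{tab:ax1}, whose naive application causes the usual exponential blow-up. The key simplification is that the \TSo{} update of Definition~\ref{defi:sem1} is uniform: it restricts to the $\varphi$-worlds and decrements \emph{every} agent's depth by $\depth{\varphi}$. Hence the cumulative effect of a stack of announcements $\varphi_1, \ldots, \varphi_k$ is captured by a single integer offset $D = \sum_i \depth{\varphi_i} \leq |\varphi|$ together with the conjunction of relativization guards, and a subformula $K_a \chi$ evaluated in this context amounts to requiring $\P{a}{D + \depth{\chi}}$ and $K^\infty_a \chi$ in the base model restricted to the guarded worlds. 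Following Lutz~\cite{DBLP:conf/atal/Lutz06}, the announcement guards are themselves subformulas whose satisfiability is checked recursively at bounded depth, so the only extra per-frame state is the small integer $D$, keeping the procedure in \textsf{PSPACE}.

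The main obstacle is precisely this \TSo{} membership argument: showing that the announcement context can be carried through the Ladner/Halpern--Moses recursion with only polynomial additional space, paralleling Lutz's \textsf{PSPACE} result for PAL. The uniform, global depth-decrement of \TSo{} — the very feature responsible for amnesia in Proposition~\ref{prop:amnesia} — is what makes this manageable, since all agents share one offset $D$ rather than each maintaining an independent announcement history. A secondary technical point to verify carefully is that the depth-threshold bookkeeping interacts correctly with the \textbf{S5} cluster preservation of $K^\infty_a$-formulas, which is where I expect the bulk of the routine case analysis to lie.
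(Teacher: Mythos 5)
Your proposal is correct and, for three of the four bounds, follows essentially the paper's route: hardness is inherited from multi-agent \textbf{S5} (the paper additionally cites PAL-hardness for \TSo{}, but the \textbf{S5} fragment suffices as you argue), and \DBEL{}-membership is obtained by a Ladner-style \textsf{PSPACE} recursion whose types are augmented with depth-atom constraints that are satisfiable \emph{locally} at each world --- the paper instantiates this as Lutz's ELE-World procedure with four extra type rules and, where you assert that a consistent depth value exists, it exhibits one explicitly (e.g.\ $d(a)=d$ when $\E{a}{d}$ is in the type, and otherwise a min/max over the thresholds mentioned, using the rule $\neg\P{a}{0}\notin\gamma$ to stay in $\N$). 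The one place you genuinely diverge is \TSo{}-membership. The paper first rewrites $K_a\varphi$ as $\P{a}{\depth{\varphi}}\wedge K^\infty_a\varphi$ and then eliminates announcements \emph{up front} via the polynomial-size translation of Lutz's Lemma~9; this applies verbatim because every reduction axiom of Table~\ref{tab:ax4} --- including depth adjustment, which shifts $\E{a}{d}$ to $\E{a}{\depth{\varphi}+d}$ --- has the same shape as its PAL counterpart, so the whole problem collapses to the announcement-free case already handled. You instead keep announcements and thread a cumulative offset $D$ plus relativization guards through the recursion; your offset observation is exactly the content of the paper's depth-adjustment axiom, so the approach is sound, but it forces you to re-do the \textsf{PSPACE} bookkeeping for nested announcements inside the tableau (including the fact that the guard for $\psi_2$ under a prior announcement $\psi_1$ is really $\psi_1\wedge[\psi_1]\psi_2$, per announcement composition), which is precisely the work the paper's translation-first strategy outsources to Lutz. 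You correctly flag this as your main obstacle; the paper's route dissolves it, at the cost of depending on the succinctness of Lutz's translation rather than on the uniformity of the \TSo{} update.
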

\begin{proof}
	The lower bound results from \textsf{PSPACE}-completeness of \(
	\textbf{S5}_n \) for \( n \geq 2 \)~\cite{DBLP:journals/ai/HalpernM92}
	and PAL~\cite{DBLP:conf/atal/Lutz06}, respective syntactic fragments
	of~\DBEL{} and~\TSo{}.

	For both logics, we begin by translating \( K_a \varphi \)
	subformulas into \( \P{a}{\depth{\varphi}} \wedge K^\infty_a \varphi \),
	which only increases formula size at most linearly.
	Then, in the case of~\TSo{}, using the same translation
	as Lemma~9 of~\cite{DBLP:conf/atal/Lutz06}, we translate formulas with
	public announcement \( \varphi \) into equivalent formulas \( t(\varphi)
	\) without public announcement such that \( \card{t(\varphi)} \) is at
	most polynomial in \( \card{\varphi} \) (this is possible because the
	axiomatization of \( K^\infty_a \) with relation to public announcements
	is the same).

	We have therefore transformed our formula \( \varphi \) into an
	equivalent formula in the syntactic fragment without \( K_a \) operators
	or public announcements of polynomial size relative to the initial
	formula \( \varphi \)'s size.

We can then use the ELE-World procedure from Figure~6
of~\cite{DBLP:conf/atal/Lutz06} by re-defining types to accommodate for depth
atoms.
As a reminder, we define \( \textbf{cl}(\Gamma) \) for any set of
formulas \( \Gamma \) to be the smallest set of formulas containing \(
\Gamma \) and closed by single negation and sub-formulas.
We then say that \( \gamma \subseteq \textbf{cl}(\Gamma) \) is a type if all of
	the following are true,
\begin{enumerate}
	\item \( \neg \psi \in \gamma \) if and only if \( \psi \not\in \gamma \)
		when \( \psi \) is not a negation
	\item if \( \psi \wedge \chi \in \textbf{cl}(\Gamma) \) then \( \psi
		\wedge \chi \in \gamma \) if and only if \( \psi \in \gamma \)
		and \( \chi \in \gamma \)
	\item if \( K^\infty_a \psi \in \gamma \) then \( \psi \in \gamma \)
	\item\label{item:rule4} if \( \P{a}{d} \in \gamma \) then \( \neg \P{a}{d'} \not\in \gamma
		\) and \( \E{a}{d'} \not\in \gamma \) for all \( d' < d \)
	\item\label{item:rule5} if \( \E{a}{d} \in \gamma \) then \( \E{a}{d'} \not\in \gamma \)
		for all \( d' \neq d \) and \( \neg \P{a}{d'} \not\in \gamma \)
		for \( d' < d \)
	\item\label{item:rule6} if \( \neg \P{a}{d} \in \gamma \) then there
		exists \( d' < d \) such that \( \neg \E{a}{d'} \not\in \gamma \)
	\item\label{item:rule7} \( \neg \P{a}{0} \not\in \gamma \)
\end{enumerate}
Clearly, checking that a subset of \( \textbf{cl}(\Gamma) \) is not a type does
not increase the space complexity of the algorithm.
Lemma~18 from~\cite{DBLP:conf/atal/Lutz06} remains true here, i.e.\ the procedure
ELE-World returns true if and only if the formula is satisfiable.
It is sufficient for this to show that any type has a consistent depth
assignment for all agents, as it is clear that if any of the new rules introduced
for depths are violated the formula is not satisfiable.

If the type contains \( \E{a}{d} \) then it contains only one
such depth atom per rule~\ref{item:rule5}, the only \( \P{a}{d'} \) it
contains are for \( d' \leq d \) per rule~\ref{item:rule4}, and it does not
contain \( \neg \P{a}{d'} \) for \( d' \leq d \) per rule~\ref{item:rule5},
therefore \( d(a) = d \) is a consistent setting.
If it does not contain any \( \E{a}{d} \), it may contain a number of
	inequalities polynomial in \( \card{\varphi} \), that admit a solution in
	\( \N \) by rule~\ref{item:rule7}.
Therefore a possible algorithm is \( d_0 = \max \{ d', \; \P{a}{d'} \in
\gamma \} \) and then \( d(a) = \min \{ d', \; d' \geq d_0, \; \neg \E{a}{d'} \not\in
\gamma \} \).
If no \( \P{a}{d} \) are in the type, then \( d_0 = \min \{ d', \; \neg \P{a}{d'}
\in \gamma \} \) and \( d(a) = \max \{ d', \; d' \leq d_0, \neg \E{a}{d'} \not\in
\gamma\} \) are a possible choice (this choice will always be greater or equal to
\( 0 \) because of rules~\ref{item:rule7} and~\ref{item:rule6} above).
Finally, if there are no depth atoms in the type, the formula is clearly
satisfiable for any choice of \( d(a) \).
\end{proof}

The model checking problem remains \textsf{P}-complete in~\DBEL{},
using the same algorithm as for \textbf{S5}~\cite{DBLP:books/mit/FHMV1995}.
For~\TSo{} and~\TSt{}, the model checking problem is \textsf{P}-complete, as the
same algorithm as PAL can be used, relying on the fact that model size can only
decrease after announcements~\cite{DBLP:conf/aiml/KooiB04} (the lower bounds
results from the fact that PAL is a fragment of both).
This is however not the case of~\DPAL{}, where model size grows after
announcements, potentially exponentially.
\begin{thm}\label{thm:s3exp}
	The complexity of model checking for finite models in~\DPAL{} is
	in \textsf{EXPTIME} and is \textsf{NP}-hard.
	An upper bound in time complexity for checking \( \varphi \) in \( M \)
	is \( O ( 2^{2 \card{\varphi}} {\lVert M \rVert}) \), where \( \lVert M
	\rVert \) is the sum of the number of states and number of pairs in each
	relation of \( M \).
\end{thm}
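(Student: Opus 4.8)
The plan is to treat the two bounds separately. For the \textsf{EXPTIME} upper bound I would give a recursive model-checking procedure $\mathrm{MC}(\varphi, M)$ that returns the set of states of $M$ satisfying $\varphi$, processing subformulas bottom-up as in the standard algorithm for \textbf{S5}. The only case that is not linear-time-per-subformula is the announcement $[\varphi_1]\varphi_2$: here I would first compute $\mathrm{MC}(\varphi_1, M)$, use it to build $M \mid \varphi_1$ following Definition~\ref{defi:dpal} (which takes time $O(\lVert M \rVert)$ and produces a model with at most twice the number of states and at most a constant factor more edges), then recurse with $\mathrm{MC}(\varphi_2, M \mid \varphi_1)$ and translate the answer back through the $(1,s)$ embedding.

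The key quantitative observation is that each nested announcement at most doubles the state set, so after processing an announcement-nesting of depth at most $\card{\varphi}$ the largest model encountered has size at most $2^{\card{\varphi}} \lVert M \rVert$. Since the recursion visits each of the $O(\card{\varphi})$ subformula occurrences once, each in a model of size at most $2^{\card{\varphi}} \lVert M \rVert$ and each step costing time linear in the current model size, the total running time is $O\!\left( \card{\varphi} \cdot 2^{\card{\varphi}} \lVert M \rVert \right)$, which is dominated by $O\!\left( 2^{2\card{\varphi}} \lVert M \rVert \right)$ and in particular lies in \textsf{EXPTIME}.

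For \textsf{NP}-hardness I would reduce from \textbf{SAT}. Given a propositional formula $\psi(x_1, \dots, x_n)$, I would build a constant-state model with a ``reader'' agent $b$ and auxiliary agents $c_1, \dots, c_n$, together with a sequence of $n$ announcements $[\chi_1] \cdots [\chi_n]$ of carefully chosen depths. The effect of Definition~\ref{defi:dpal} is that each announcement splits every world into a perceived ($1$) and unperceived ($0$) copy; by tuning depths so that $b$ is too shallow to perceive any $\chi_i$ while each $c_i$ is deep enough exactly for $\chi_i$, the $n$ announcements produce $2^n$ copies of the base world, all collapsed into a single equivalence class of $b$ and indexed by the bit vector recording which copies were taken. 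Because depth adjustment changes $c_i$'s depth precisely on the perceived copy of $\chi_i$, the $i$-th bit of each copy is readable by a depth atom such as $\E{c_i}{d_i}$. Encoding $\psi$ as the propositional formula $\Psi$ obtained by substituting these depth atoms for the variables $x_i$, the formula $\neg K^\infty_b \neg \Psi$ then asserts that $b$ considers possible some copy---i.e.\ some assignment---satisfying $\psi$, so the whole formula, of size polynomial in $\card{\psi}$, holds after the announcements if and only if $\psi$ is satisfiable.

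The main obstacle will be this lower-bound construction: I must verify that the transitive symmetric closure in Definition~\ref{defi:dpal} really yields the intended hypercube of $2^n$ distinct, $b$-indistinguishable copies with correctly shifted depths, rather than accidentally collapsing or omitting copies; that the depth atoms read off the bits unambiguously; and that the chosen depths keep every $\chi_i$ and the final formula of size polynomial in $n$. The upper bound, by contrast, is a routine induction once the state-doubling bound is isolated.
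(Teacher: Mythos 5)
Your approach matches the paper's on both bounds. For the upper bound the paper runs the same bottom-up/tree-structured algorithm and isolates the same quantitative fact, namely that a single update at most doubles the states and at most quadruples $\lVert M\rVert$ (an equivalence class of size $k$ with $k^2$ pairs becomes at most $2k$ states with $4k^2$ pairs), giving the $O(2^{2\card{\varphi}}\lVert M\rVert)$ bound exactly as you describe. For the lower bound the paper also reduces from satisfiability using a one-state model, a depth-$0$ reader agent, $n$ depth-carrying agents, and $n$ nested announcements $[K^{2n}_{n+1}\top]\cdots[K^{n+1}_{n+1}\top]$ of strictly decreasing depths that split the model into states realizing all $2^n$ combinations; the final formula is $\neg K_0\neg\varphi'$. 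The one substantive difference is how the bits are read off. You propose exact depth atoms $\E{c_i}{d_i}$, but on the ``perceived'' branches an agent's depth keeps being decremented by whichever subsequent announcements it is still deep enough for, so the depth of $c_i$ on a given copy is not a single predictable value and an exact-depth test is exactly the fragile point you flag. The paper instead encodes each bit as ``depth zero versus depth non-zero,'' reads it with $\P{i}{1}$, and proves by induction that a non-zero depth can never be driven to zero by the remaining announcements (the next announcement maps it into $[|1;n|]$ and all later ones are too deep to touch it). If you adopt that zero/non-zero encoding, your construction goes through; with exact depth atoms you would need to additionally track, per copy, the full history of which announcements each $c_i$ perceived, which defeats the simplicity of the readout.
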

\begin{proof}
The model-checking algorithm is the same as the one for public
announcement logic~\cite{DBLP:conf/aiml/KooiB04}: a tree is built from
subformulas \( \varphi \), with splits introduced only for subformulas of
the form \( [\psi] \chi \), with \( \psi \) to the left and \( \chi \) to
the right.
Treating a node labeled \( \psi \) means labeling each state in \( M \) with
either \( \psi \) or \( \neg \psi \).
The tree is treated from bottom-left to the top, always going up first except
when a node of the type \( [\psi] \chi \) is found.
In that case, since the nodes in the left sub-tree have been treated, we can
build \( M \mid \psi \) easily in time \( O({\lVert M \rVert}) \) from the
truth value of \( \psi \) and the depth functions of \( M \).
Moreover, the size of \( M \mid \psi \) is at most \( 4 \lVert M \rVert \).

To see this, consider an equivalence class for \( \sim_a \) in \( M \) of size
\( k \), it has exactly \( k^2 \) connections within it.
	The number
of states it creates in \( M \mid \psi \) is at most \( 2k \), and the
number of connections it creates is at most \( 4k^2 \).
Each connection being in exactly one connected component means the bound holds.

Therefore we can recurse in the right sub-tree with \( M \mid \varphi \) to check
\( \chi \) in time \( O( 2^{2 \card{\chi}} \times 4 {\lVert M \rVert} ) \).
	Writing \( O(\lVert M \rVert) \leq c \lVert M \rVert \) the time
	necessary to build \( M \mid \varphi \), we find that checking \( [\psi]
	\chi \) takes time at most \( O( ( c + 2^{2 \card{\psi}} + 2^{2
	\card{\chi} + 2}) {\lVert M \rVert}) = O(2^{2 \card{[\psi] \chi}} {\lVert
	M \rVert}) \).

	Note this bound can be improved to \( O(2^{\card{\varphi}} {\lVert M
	\rVert}) \) through a more efficient representation of equivalence
	relations as equivalence classes.

	Regarding the \textsf{NP} lower bound, we provide a polynomial-time
	reduction from the \textsc{3-SAT} problem.
	For a given \textsc{3-SAT} formula \( \varphi \) over the set of literals
	\( a_1, \ldots, a_n \), consider the \( (n+2) \)-agent \DPAL{} model with
	one state \( s \), no atomic propositions, agents \( [| 0; n+1 |] \) and
	depth assignment \( d(s, i) = n + i \) for \( 1 \leq i \leq n \) and \(
	d(s, 0) = 0 \) and \( d(s, n+1) = 5 n^2 \).

	We then consider the formula \( \omega = [K^{2n}_{n+1} \top]
	[K^{2n-1}_{n+1} \top] \cdots [K^{n+1}_{n+1} \top] \left( \neg K_0 \neg
	\varphi' \right) \), where \( \varphi' \) is the transformation of \(
	\varphi \) where we substitute literal \( a_i \) with \( \P{i}{1} \).
	The model is clearly polynomial in size of \( n \), and the formula \(
	\omega \) is of size \( \card{\varphi} + O(1 + \sum_{k \leq 2n} k) =
	O(n^2) \).

	First notice that all of the announcements will always be true in all
	states: the agent \( n+1 \) is of depth \( 5n^2 > \sum_{k=n}^{2n} k \)
	therefore will always have leftover depth.
	Since agent \( 0 \) is of depth \( 0 \) initially, their equivalence
	relation is the complete relation on the updated model after the \( n \)
	announcements.

	It is therefore simply left to show that the set of depth assignments in
	the states in this updated model covers all possible combinations of zero
	or non-zero depths for the agents \( [| 1; n |] \): this will mean that
	the existence of a state in which \( \varphi' \) is true is equivalent to
	the existence of an assignment to the literals of \( \varphi \) that
	makes it true.

	We proceed by induction, the precise statement is that after the first \(
	k \) announcements, the \( 2^k \) created states contain all possible
	combinations of zero and non-zero entries for the agents \( [| n-k+1; n
	|] \); the depths of agents \( [| 1; n-k |] \) are unchanged; and the
	non-zero entries for agents \( [| n-k+1; n |] \) are such that any
	combination of the \( n-k \) announcements cannot make them zero.

	This is enough to conclude for \( k=n \), and is trivially true for \(
	k=0 \).
	The induction step is true, because the announcement \( K^{2n-k}_{n+1}
	\top \) will be too deep for agents \( [| 1; n-k-1 |] \) and will create
	two copies of each existing state, one where agent \( n-k \) will be
	depth \( 0 \) and the other where it remains depth \( 2n-k \).
	Therefore, the agents \( [| n-k; n |] \) will cover all possible
	combinations of zero and non-zero.
	Moreover, the depths of agents \( [| n-k+1; n |] \) by the induction
	hypothesis will possibly be modified in the copies but will maintain
	their zero or non-zero status.
	Finally, the depth of agent \( n-k \) in the positive copy (\( 2n-k \))
	cannot be turned into \( 0 \) by any following sequence of announcements:
	the first announcement in the sequence will be of depth between \( n \)
	and \( 2n-k-1 \), therefore will map \( 2n-k \) into the interval \( [|
	1; n-k |] \subseteq [| 1; n |] \), after which it will be non-zero and
	modified by no following announcements (which are all of depth at least
	\( n+1 \), unless this first announcement was of depth \( n \) in which
	case there will be no following announcements).
\end{proof}

\section{Muddy children}\label{sec:muddy}
Consider the well-known muddy children reasoning problem, where \( n \) children
convene after playing outside with mud.
\( k \geq 1 \) of them have mud on their foreheads, but have no way of knowing
it.
The father, an external agent, announces that at least one child has mud on their
forehead.
Then, he repeatedly asks if any child would like to go wash themselves.
After exactly \( k - 1 \) repetitions of the father's question, all muddy
children understand they are muddy and go wash themselves.
Readers unfamiliar with the reasoning problem and its solution are directed
to Van~Ditmarsch et.\ al~\cite{van2007dynamic}'s treatment using PAL\@.

Consider the set of states \( {\{ 0,1 \}}^n \), where each tuple contains \( n \)
entries indicating for each child if they are muddy (\( 1 \)) or not (\( 0 \)).
For the sake of simplicity and since it is of depth \( 0 \), we assume the
father's announcement has taken place and therefore define the Kripke structure
\( \Mn \) with states \( {\{ 0,1 \}}^n \setminus {\{0\}}^n \) with the usual
definition of the agents' knowledge relations~\cite{DBLP:books/mit/FHMV1995}.
We define the \DPAL{} class of muddy children models to be models \( \Mhn \)
extending \( \Mn \) with any depth function.
We name \( m_i \) the atom expressing that child \( i \) is muddy.

We number the agents in \( [| 0; n-1 |] \), where the first \( k \) are muddy,
and focus on the reasoning of one agent (without loss of generality agent \( 0
\)) to understand that it is muddy.
Recall the definition of the dual of public announcements, \(
\langle\varphi\rangle \psi := \neg[\varphi]\neg \psi \) and define the following
series of formulas for \( i \leq k \),
\[
\varphi_i = \langle \neg K_{i-1} m_{i-1} \rangle \langle \neg K_{i-2} m_{i-2}
\rangle \cdots \langle\neg K_1 m_1 \rangle K_0 m_0.
\]
Here \( \varphi_k \) states that if each of the children from \( k-1 \) to \( 1
\) announce one after the other they don't know they are muddy, then child \( 0
\) knows that they (child \(0\)) are muddy~\footnote{These announcements are a
sufficient subset of the full announcements \( \wedge_{j=1,\ldots,n} \neg (K_j m_j
\vee K_j \neg m_j) \) in the usual formulation.}
It is well known this formula is true for unbounded agents in \( \Mn \) in
PAL (it is also a consequence of Theorem~\ref{thm:uppbound} below).
The following two theorems define a sufficient structure of knowledge of depths
for the formula to be true and a necessary condition on the structure of
knowledge of depths for it to be true.
\begin{thm}[Upper bound]\label{thm:uppbound}
    For all three semantics, the formula
	\[
	K_0 \left( \P{0}{k-1} \wedge K_1 ( \P{1}{k-2} \wedge \cdots K_{k-1}(
	\P{k-1}{0}) \cdots ) \right) \to \varphi_k
	\]
	is true in all muddy children models \( \Mhn \) in the initial state.
\end{thm}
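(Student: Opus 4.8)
The plan is to reduce the claim to the classical fact, recalled in the text, that $\varphi_k$ holds for unbounded agents in $\Mn$ under PAL. The three depth-bounded semantics depart from PAL only through agents that are too shallow for an announcement, and the nested antecedent is engineered to keep every agent relevant to agent $0$'s reasoning deep enough throughout; granting this, all three semantics evaluate $\varphi_k$ at the initial state exactly as PAL does, so the PAL fact finishes the proof. I would organise the depth-tracking as an induction on $k$ using $\varphi_k=\langle\neg K_{k-1}m_{k-1}\rangle\varphi_{k-1}$, the base case $k=1$ being immediate since at the initial state agent $0$ sees only clean foreheads, so $K^\infty_0 m_0$ holds, and the antecedent $K_0\P{0}{0}$ only asks for depth $\geq 0$.

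The observation that makes the depth arithmetic work is that knowing an atom costs no depth: $K_a m_j\leftrightarrow(\P{a}{0}\wedge K^\infty_a m_j)$, so the truth value of each $\neg K_j m_j$ equals that of its classical version $\neg K^\infty_j m_j$ regardless of agent $j$'s depth, while $\depth{\neg K_j m_j}=1$. Thus every announcement in $\varphi_k$ is truth-functionally identical to its PAL counterpart, and depth matters only for which agents perceive it. The antecedent hands agent $j$ depth at least $k-1-j$ in every world reached along agent $0$'s chain of reasoning, which is exactly the number of announcements $\neg K_{k-1}m_{k-1},\dots,\neg K_{j+1}m_{j+1}$ preceding agent $j$'s own announcement; since each consumes one unit of depth, agent $j$ stays deep enough to perceive all of them and to update its knowledge as in PAL. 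In particular agent $0$ starts at depth $\geq k-1$, perceives all $k-1$ announcements, and retains depth $\geq 0$ when it must finally know the atom $m_0$.

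For the inductive step I would peel $\neg K_{k-1}m_{k-1}$, which is true at the initial state because for $k\geq 2$ agent $k-1$ sees agent $0$ muddy and so does not classically know $m_{k-1}$. It then remains to establish $\varphi_{k-1}$ at the post-announcement state, which I would obtain by applying the induction hypothesis to the component of that state; restricted to the worlds agent $0$ still considers possible this component reproduces, up to the evident relabelling, the $(k-1)$-muddy scenario with all depths decremented by one. The decremented antecedent $A_{k-1}=K_0(\P{0}{k-2}\wedge K_1(\P{1}{k-3}\wedge\cdots K_{k-2}(\P{k-2}{0})\cdots))$ is exactly what the induction hypothesis needs, and it is produced from the original antecedent by the traditional-announcement property~\TA{} (and~\TAz{} in the unambiguous case) applied at each $K_j$: its side condition $K^\infty_j(\varphi\to\P{j}{\depth{\varphi}})$ holds because the antecedent forces depth $\geq k-1-j\geq 1$ for every $j\leq k-2$, while the depth-adjustment rule of Theorem~\ref{thm:ax1sound} turns each $\P{j}{k-1-j}$ into $\P{j}{k-2-j}$.

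The three semantics can be treated together: once every agent queried along agent $0$'s reasoning is deep enough for every announcement it must perceive, all three model updates collapse to PAL's restriction on the relevant component. In~\DPAL{} no cross-edge $(1,s)\,R_a\,(0,s)$ is created for such agents, so agent $0$ never reaches a negative copy; in~\TSo{} the uniform decrement only drives to negative depth agents whose atomic knowledge is never queried again, and Proposition~\ref{prop:col1} already supplies~\TAz{} there; and~\TSt{} alters exactly the same relations. I expect the main obstacle to be precisely this bookkeeping: certifying that the nested antecedent really does guarantee ``deep enough everywhere it matters'' along every branch of the possible-worlds tree agent $0$ explores, that this certificate survives the decrement and the restriction into the post-announcement model (including the relabelling that identifies it with a $(k-1)$-muddy scenario), and that the positive and negative copies of~\DPAL{} together with the negative depths of~\TSo{} stay invisible to every knowledge query the evaluation of $\varphi_k$ actually performs.
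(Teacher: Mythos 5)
Your overall strategy is the same as the paper's: induct on $k$, peel the outermost announcement $\neg K_{k-1} m_{k-1}$ (which is truth-functionally classical because $\d{m_{k-1}}=0$), and argue that the post-announcement evaluation of $\varphi_{k-1}$ reduces to a $(k-1)$-muddy scenario with the depths in the antecedent decremented by one. Your observations that atomic knowledge costs no depth and that the nested antecedent hands agent $j$ exactly the $k-1-j$ units it needs are also the paper's. However, the step you defer as ``the main obstacle'' is precisely the content of the paper's proof, and as written your argument has a genuine gap there. The updated \DPAL{} model is not ``PAL's restriction on the relevant component up to relabelling'': it contains the entire negative copy, and every agent $j\geq 1$ \emph{does} acquire cross-edges $(1,s)\sim'_j(0,s)$ at the states $s$ where the antecedent gives no depth guarantee for $j$ (the antecedent only controls $d(j,\cdot)$ on worlds reachable along the specific chain $\sim_0\circ\cdots\circ\sim_j$ from $s_k$). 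So the claim that the negative copies ``stay invisible'' cannot be granted wholesale; one must show that the particular sequence of relations queried by $\varphi_{k-1}$ never traverses such a cross-edge. The paper does this by introducing the active sets $S_i=\{0,1\}^i\times\{1\}^{k-i}\setminus\{0\}^k$, proving that the states with last component $0$ become unreachable once agent $k-1$'s operator is no longer used, and that any cross-edge for child $i$ lives in $S_{k-1}\setminus S_i$ and is therefore only encountered after those states have already been discarded. Without an argument of this shape your induction hypothesis cannot be applied, because the object you apply it to is not actually (isomorphic to) a muddy-children model $\hat{M}_{k-1}$.

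A secondary concern: your derivation of the decremented antecedent via \TA{} is a plausible alternative mechanism to the paper's direct depth-tracking, but the side condition $K^\infty_j(\varphi\to\P{j}{\d{\varphi}})$ must be checked at every world where the corresponding $K_j$ is evaluated, not just at $s_k$; this again requires knowing which worlds those are, i.e.\ the same bookkeeping. Similarly, your statement that in \TSo{} ``the uniform decrement only drives to negative depth agents whose atomic knowledge is never queried again'' is an assertion, not a proof --- agents $1,\dots,k-1$ all go negative eventually, and one must check that each $\neg K_j m_j$ is announced \emph{before} agent $j$'s depth drops below $0$. In short: right skeleton, right diagnosis of where the difficulty sits, but the load-bearing lemma is missing.
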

Note that this formula directly provides an upper bound on the structure of
depths and knowledge about depths: it shows a sufficient condition on the
knowledge of depths for the problem to be solvable by agent \( 0 \).
Moreover, the upper bound for one child readily generalizes to a sufficient
condition for all children to understand they are muddy: each muddy child must
know they are of depth at least \( k-1 \), know at least some other muddy child knows they
are of depth at least \( k-2 \), and know that that other child knows some other muddy
child knows they are of depth at least \( k-3 \), \textit{etc}.
\begin{proof}
	For the sake of simplicity and since it does not change the treatment of
	the problem, we assume \( n = k \).
	We show the result for \DPAL{}, as the treatments for~\TSo{} and~\TSt{}
	are similar.

	We will show the result by induction over \( k \).
	Denote \( s_k = (1, \ldots, 1) \) the true state of the world where all
	the children are muddy.

	For \( k = 2 \), we assume \( K_0 \P{0}{1} \) and want to show
	\(
		\neg K_1 m_1 \wedge [\neg K_1 m_1] K_0 m_0
	\).
	First notice that \( (\hat{M}_2, s_2) \models \neg K_1 m_1 \), simply because
	it considers the state \( (1, 0) \) to also be possible.
	In the state \( (0, 1) \), child \( 1 \) knows it is muddy.
	Therefore, the set of states for the successful part of the model update
	will be \( (1, (1, 1)) \) and \( (1, (1, 0)) \).
	Moreover, since \( K_0 \P{0}{1} \), it is deep enough in \( s_2 \) to not
	have any links to the unsuccessful part of the model update, therefore it
	knows \( m_0 \).

	Consider some \( k > 2 \), we denote \( S_i \) the set of states that
	are ``active'' when considering \( \varphi_i \).
	More precisely, we set \( S_i = {\{0,1\}}^i \times {\{ 1 \}}^{k-i}
	\setminus {\{ 0 \}}^k \).
	We will show that after \( k-i \) announcements, the remainder of the
	problem is equivalent to checking \( \varphi_i \) on the subgraph induced
	by the states \( S_i \).
	This is evident for \( i = k \) by definition, we now show by descending
	induction that it is equivalent to checking \( \varphi_2 \) on \( S_2 \),
	which we have just verified to be true.

	Firstly, it is true that \( (\Mhn, s_k) \models \neg K_{k-1} m_{k-1} \)
	since child \( k-1 \) considers possible the state \( (1, \ldots, 1, 0)
	\).
	The set of states in which \( K_{k-1} m_{k-1} \) holds is exactly \( (0,
	\ldots, 0, 1) \).
	Therefore, the model update will create a copy of all other states.
	We then notice that the set of states whose last component is \( 0 \) can
	be ignored in the rest of the problem: they are not reachable from \( s_k
	\) by any sequence of \( \sim_i \) that does not contain \( \sim_{k-1}
	\) and the rest of the formula \( \varphi_{k-1} \) to be checked does
	not use any modal operators for agent \( k-1 \) any more.
	These states will never be reached and can therefore be
	removed without altering the result of the rest of the execution.

	We are therefore restricting ourselves, after the model update, to the
	set of states \( S_{k-1} \) in the positive part of the model.
	Note however there are still possibly links between the negative part of
	the model and \( S_{k-1} \) in the positive part of the model.
	We will show that these links have no effect on the checking of the rest
	of the formula, by showing that links for child \( i \) find themselves
	in \( S_{k-1} \setminus S_i \): therefore, by the time we query modal
	operator \( i \), the set of ignored states will contain all states with
	a link for child \( i \).

	For child \( i < k-1 \), the information we have about its depth is \(
	K_0 K_1 \cdots K_i \P{i}{k-1-i} \) before the model update.
	Therefore, we in particular know it is deep enough for the announcement
	(which is of depth \( 1 \leq k-1-i \)) in the set of states in which the
	\( i \) first components might have changed compared to \( s_k \) but the
	last \( k-1-i \) are all fixed to \( 1 \): this is exactly \( S_i \).

	We have shown that the recursive check in \( M \mid \neg
	K_{k-1} m_{k-1} \) will take place on a set of states for which the
	execution is equivalent to \( S_{k-1} \) and on which we will have to
	check the formula \( \varphi_{k-1} \).
	Finally, since the depths of each agent other than \( k-1 \) was at
	least \( 1 \) on \( S_{k-2} \), they are reduced by \( 1 \) and the
	induction hypothesis on depths for \( k-2 \) is also verified.
\end{proof}

\begin{thm}[Lower bound]\label{thm:lowbound}
	For~\DPAL{}, the formula
	\[ \varphi_k \to K_0 \left( \P{0}{k-1} \wedge
	\bigwedge_{i=1}^{k-1} K^\infty_1 \cdots K^\infty_i \left(
	\neg (m_1 \vee \cdots \vee m_i) \to \P{i}{k-2-i}
	\right) \right)
	\]
	is true in all models \( \Mhn \) in the initial state.

\end{thm}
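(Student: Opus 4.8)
The plan is to prove the implication by induction on the number $k$ of muddy children, reusing the descending induction and the active state sets $S_i = \{0,1\}^i \times \{1\}^{k-i} \setminus \{0\}^k$ introduced for the upper bound (Theorem~\ref{thm:uppbound}). Fix a model $\Mhn$ with initial (true) state $s_k = (1,\dots,1)$ and assume $(\Mhn, s_k) \models \varphi_k$; by definition of the diamond this means each announcement $\neg K_{k-1} m_{k-1}, \dots, \neg K_1 m_1$ (each of modal depth $1$) is successively truthful and that $K_0 m_0$ holds in the final updated model. I would extract the conclusion one conjunct at a time, in each case arguing the contrapositive: a missing depth lower bound lets a state in which $m_0$ is false survive the updates and re-enter agent $0$'s information set, contradicting the final $K_0 m_0$. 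The induction bottoms out at $\varphi_2$ on $S_2$, analysed directly as in the base case of Theorem~\ref{thm:uppbound}.

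For the conjunct $\P{0}{k-1}$ and its knowledge, the states agent $0$ must eliminate in order to know $m_0$ are those in which $m_0$ is false (agent $0$ clean), chiefly $w_1 = (0,1,\dots,1)$ and the copies created along the chain. Since each of the $k-1$ announcements has depth $1$, perceiving all of them costs $k-1$ of agent $0$'s budget; by the update rule of Definition~\ref{defi:dpal}, if agent $0$ fails to perceive some announcement in a world of its information set, the positive copy it occupies gains a $\sim'_0$-edge to a negative copy and, after transitive closure, reaches an $m_0$-false state. Applying this at every world agent $0$ considers possible from $s_k$ forces $\P{0}{k-1}$ throughout that information set, which is exactly $K^\infty_0 \P{0}{k-1}$; together with $\P{0}{k-1}$ at $s_k$ this gives the required $K_0 \P{0}{k-1}$.

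The recursive conjunct $K^\infty_1 \cdots K^\infty_i(\neg(m_1 \vee \cdots \vee m_i) \to \P{i}{k-2-i})$ encodes the same leak argument carried out along the counterfactual chain in which agent $0$ imagines agent $1$ imagining $\dots$ imagining agent $i$, restricted by the antecedent $\neg(m_1 \vee \cdots \vee m_i)$ to the branch $w_i$ in which children $1,\dots,i$ are all taken to be clean. In this branch exactly $k-2-i$ of the announcements still have to be perceived by agent $i$ for the corresponding clean world to be pruned, so the same contradiction argument shows $\P{i}{k-2-i}$ must hold there; wrapping this in $K^\infty_1 \cdots K^\infty_i$ records that agent $0$ knows, to infinite depth, that agent $1$ knows, \dots, that agent $i$ is that deep. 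I would obtain these conjuncts simultaneously with the induction step: at stage $i$ the active set is $S_i$, the residual formula to satisfy is $\varphi_i$, and the depth demanded of agent $i$ is read off from the number of announcements remaining once the chain has descended to $w_i$.

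The main obstacle is controlling the ambiguous-depth updates. Because $\sim'_a$ is a transitive symmetric closure (Definition~\ref{defi:dpal}), a single shallow world anywhere in a connected component can, after closure, pull $m_0$-false states back into an agent's information set; the delicate step is to show that the only states able to leak back are the clean-child states, and to match their depth demands exactly to the indices $k-2-i$ in the conclusion. The off-by-one between the necessary bound $\P{i}{k-2-i}$ here and the sufficient bound $\P{i}{k-1-i}$ of Theorem~\ref{thm:uppbound} must also be tracked, since one unit of depth is spent descending to the branch $w_i$ before the remaining announcements are counted. I expect the cleanest bookkeeping to come from reusing the construction behind the formula $\F$ of Theorem~\ref{thm:col3str}, whose $K_a$ clause already isolates the no-leakage condition $K^\infty_a(\varphi \to \neg \Paphi \vee \P{a}{\d{\varphi} + \d{\psi}})$ that an agent be deep enough for an announcement along exactly such a branch, so that each conjunct of the conclusion can be matched to one factor produced by that expansion.
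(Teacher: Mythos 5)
Your plan follows essentially the same route as the paper's proof: argue each conjunct by contraposition, showing that an agent too shallow to perceive one of the depth-$1$ announcements keeps a $\sim'$-link to the negative copy, so that by transitive closure a clean-prefix state (the paper's $t_i$, your $w_i$) survives in the relevant information set, the pruning chain breaks, and $K_0 m_0$ fails at the end of the sequence. The differences are only presentational --- the paper treats the conjuncts directly rather than by induction on $k$, and does not need the $\F$ machinery of Theorem~\ref{thm:col3str} --- so your approach is correct and matches the paper's.
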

\begin{proof}

	We use the notations from the proof of Theorem~\ref{thm:uppbound} above.
	Notice first that all of the announcements remain true when they are
	performed, because \( \neg K^\infty_{k-1} m_{k-1} \to \neg K_{k-1}
	m_{k-1} \) and the implicant is true by the usual lower bound for muddy
	children (it takes \( k \) announcements for any child to know they are
	muddy).

	We begin by proving \( \varphi_k \to K_0 \P{0}{k-1} \).
	Assume by contraposition that \( d(0, s_k) = i < k-1 \) or \( d(0,
	\tilde{s}_k) = i < k-1 \) initially, where \( \tilde{s}_k \) is the
	state \( (0, 1, \ldots, 1) \) of \( \Mhn \).
	After \( i \) public announcements, it will be true that \( \neg K_0 m_0
	\) still, as well as \( \neg K_0 \neg \E{0}{0} \) since each public
	announcement is of depth \( 1 \).
	The former is a consequence of the usual lower bound for muddy children,
	and can be derived from the proof in Theorem~\ref{thm:uppbound}
	using symmetry between \( 0 \) and \( k-1-i \) after the \( i \)
	announcements and monotonicity of knowledge of atoms: if the depths are
	lower than they were in the previous proof, there are more states and
	more links in the updated model and therefore \( \neg K_{k-1-i} m_{k-1-i}
	\) remains true.

	Therefore in this model after \( i \) announcements, either \( s_k \) or
	\( \tilde{s}_k \) sees agent \( 0 \) of depth \( 0 \) and both states
	are still connected by \( \sim_0 \).
	This means that for the next announcement, since \( \neg K_0 m_0 \) after
	each announcement except potentially the last using the same argument as
	above, we will have the chain of connections \( (1, s_k) \sim'_0 (0, s_k)
	\sim'_0 (0, s_k') \) or \( (1, s_k) \sim'_0 (1, \tilde{s}_k) \sim'_0 (0,
	\tilde{s}_k) \).
	This means that by an immediate induction, after the \( k-i \)
	announcements it is still true that \( \neg K_0 m_0 \): this is a
	contradiction with \( \varphi_k \).

	The rest of the statement is more easily understood when written
	directly as a condition on states in the model.
	The \( i^\text{th} \) term \( K^\infty_0 \cdots K^\infty_i \left( \neg
	(m_1 \vee \cdots \vee m_i) \to \P{i}{k-2-i} \right) \) means that in
	state \( t_i = (0, \ldots, 0, 1, \ldots, 1) \) where there are \( i \)
	zeroes, agent \( i \) is depth at least \( k-2-i \).
	Recall from the proof of the upper bound in Theorem~\ref{thm:uppbound}
	that state \( t_i \) is exactly the one removed at announcement \( k-i
	\).

	If the agent is not of depth at least \( k-2-i \) in \( t_i \), then it
	will not perceive announcement \( k-1-i \) and therefore this state will
	remain connected to the negative copy.
	Then, by transitivity, the formula \( K_i m_i \) will no longer be true
	in \( t_i \), since it is connected to the negative copy \( (0, t_i) \)
	which in turn is connected to \( (0, t_{i+1}) \) in which child \( i \)
	is no longer muddy.

	Finally, this means that this state is not removed in the \( k-i \)
	announcement, meaning that the announcement does not alter the model (no
	states are eliminated in the copy), which means by the usual lower bound
	for muddy children (also seen in Theorem~\ref{thm:uppbound} above) that
	child \( 0 \) cannot know it is muddy at the end of the sequence.
\end{proof}
This formula provides a lower bound on the knowledge of depths of the agent \( 0 \) to be able to solve the problem: it must be depth at least \( k-1 \) and know so.
By symmetry, this generalizes to any child or any set of children solving the problem.

Finally, we present propositions that illustrate how \textit{amnesia} in~\TSo{}
(Proposition~\ref{prop:amnesia}) and \textit{knowledge leakage} in~\TSt{}
(Proposition~\ref{prop:leakage}) manifest in the muddy children problem.
These propositions are easily verified by computing explicitly the models after updates.
\begin{prop}[Amnesia in~\TSo{}]
	Consider the instance of muddy children \( M_3 \), where child \( i
	\) is unambiguously of depth \( 2-i \), i.e.\ \( d(i, \cdot) = 2-i \).
	The formula
	\[
		\langle \neg K_2 m_2 \rangle \langle \neg K_1 m_1 \rangle \neg
		K_2 \top
	\]
	is true in \TSo{} but not in~\DPAL{} or~\TSt{}.
	In~\TSo{}, after the first two announcements, agent \( 2 \) does not know anything anymore.

 \end{prop}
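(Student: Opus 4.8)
The plan is to reduce the entire statement to tracking the depth of agent $2$ across the two announcements, exploiting the fact that $K_2 \top$ holds precisely when agent $2$ has non-negative depth. Since $\depth{\top} = 0$, Definition~\ref{defi:dbel} gives $K_2 \top \iff \P{2}{0} \wedge K^\infty_2 \top$, and because $K^\infty_2 \top$ is a tautology, $\neg K_2 \top$ holds at a state exactly when agent $2$ has (possibly negative) depth strictly below $0$ there. Both announcements $\neg K_2 m_2$ and $\neg K_1 m_1$ have modal depth $1$, so everything hinges on how each semantics updates the depth-$0$ agent $2$.

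First I would treat \TSo{}. Unfolding the dual operator $\langle \varphi \rangle \psi \equiv \neg[\varphi]\neg\psi$, the formula asserts that $\neg K_2 m_2$ holds in $s_3 = (1,1,1)$, that $\neg K_1 m_1$ holds after announcing $\neg K_2 m_2$, and that $\neg K_2 \top$ holds after the second announcement. The two preconditions are routine: agent $2$ cannot distinguish $(1,1,1)$ from $(1,1,0)$, so $K^\infty_2 m_2$ fails; and in the restricted model agent $1$ still cannot distinguish $(1,1,1)$ from $(1,0,1)$, so $K^\infty_1 m_1$ fails. The essential point is that in \TSo{} every announcement decrements every agent's depth unconditionally (Definition~\ref{defi:sem1}), so agent $2$'s depth drops from $0$ to $-1$ to $-2$; hence $\P{2}{0}$ fails and $\neg K_2 \top$ holds in the final model. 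This establishes the formula, and more generally (by Proposition~\ref{prop:amnesia}) that agent $2$ knows nothing after the two announcements.

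For \DPAL{} and \TSt{}, I would instead argue that the final conjunct $\neg K_2 \top$ can never hold, making the formula false irrespective of the preconditions. In both semantics an agent's depth is decremented only when it is at least the announcement depth (Definitions~\ref{defi:dpal} and~\ref{defi:sem2}); since agent $2$ has depth $0 < 1$ for each of the two depth-$1$ announcements, its depth is preserved at $0$ throughout. In \DPAL{} this is the first case of~\eqref{eq:s3depth}, evaluated at the positive copy $(1,s)$ reached by each announcement, while in \TSt{} the state set is unchanged and the depth is preserved directly. Consequently $\P{2}{0}$ holds in the final state, so $K_2 \top$ holds and $\neg K_2 \top$ fails, which forces the whole formula to be false.

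The only genuine bookkeeping is verifying the two \TSo{} preconditions and confirming that the evaluation state survives each model restriction, so I expect the main obstacle to be the explicit computation of which states satisfy $K_2 m_2$ and $K_1 m_1$ after each update. However, since the decisive final conjunct depends only on agent $2$'s depth and not on the surviving equivalence relation, this computation can be kept light and does not affect the conclusion.
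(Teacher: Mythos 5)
Your proof is correct. The paper does not actually spell out a proof of this proposition---it only remarks that these propositions ``are easily verified by computing explicitly the models after updates''---and your argument is precisely that verification, with the welcome economy of reducing the \DPAL{}/\TSt{} cases to the single observation that agent \(2\)'s depth-\(0\) budget is never decremented under either semantics, so \(\neg K_2 \top\) can never hold and the final conjunct fails regardless of the preconditions.
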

 \begin{prop}[Knowledge leakage in~\TSt{}]\label{prop:badmuddy}
	The formula
	 \[ \langle K_1 \neg K_2 m_2 \rangle K_1 K_0 m_0	\]
    is true in~\TSt{} but not in~\DPAL{} or~\TSo{}.
	In~\TSt{}, agent \( 1 \) has deduced the conclusion of agent \( 0
	\)'s reasoning, despite not being deep enough to perceive the
	announcement.
	Moreover, if agent \( 0 \) were of depth \( 1 \) it would not be true
	that \( \langle K_1 \neg K_2 m_2 \rangle K_0 m_0 \): agent \( 0 \) would
	not be able to deduce what agent \( 1 \) has deduced.
\end{prop}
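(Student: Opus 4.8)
The plan is to exhibit one explicit muddy-children instance and to compute the three model updates by hand, as the surrounding text (``easily verified by computing explicitly the models after updates'') suggests. I would work in \( \Mhn \) with \( n = 3 \), depths \( d(i, \cdot) = 2 - i \) (so \( d(0) = 2 \), \( d(1) = 1 \), \( d(2) = 0 \), matching the amnesia instance), and evaluate at the all-muddy state \( s = (1,1,1) \). Writing \( \varphi = K_1 \neg K_2 m_2 \), we have \( \depth{\varphi} = 2 \), so agent \( 0 \) is deep enough to perceive \( \varphi \) while agents \( 1 \) and \( 2 \) are not. The first step is to pin down where \( \varphi \) holds in \( \Mn \): since \( K_2 m_2 \) holds only at \( (0,0,1) \) (the unique state whose \( \sim_2 \)-class is the singleton \( \{(0,0,1)\} \), with \( m_2 \) true), \( \neg K_2 m_2 \) fails only there, and propagating through \( \sim_1 \) (whose class of \( (0,0,1) \) is \( \{(0,0,1),(0,1,1)\} \)) shows \( \varphi \) holds at every state except \( (0,0,1) \) and \( (0,1,1) \).

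I would then compute each update separately. In \( \TSt{} \) the state set is unchanged and only agent \( 0 \)'s relation is cut (it is the only agent deep enough), severing exactly the links \( (1,1,1) \sim_0 (0,1,1) \) and \( (1,0,1) \sim_0 (0,0,1) \), because in each pair exactly one endpoint satisfies \( \varphi \). This makes agent \( 0 \)'s class a singleton at both \( (1,1,1) \) and \( (1,0,1) \), so \( K_0 m_0 \) holds there. Since agent \( 1 \)'s relation is untouched, \( d'(1) = 1 \geq \depth{K_0 m_0} = 1 \), and its \( \sim_1 \)-class \( \{(1,0,1),(1,1,1)\} \) consists entirely of \( K_0 m_0 \)-states, we get \( K_1 K_0 m_0 \), hence \( \langle\varphi\rangle K_1 K_0 m_0 \) is true: the shallow agent \( 1 \) inherits agent \( 0 \)'s sharpened knowledge precisely because no negative copy of the world survives to preserve its uncertainty. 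In \( \TSo{} \), by contrast, \( d'(1) = 1 - 2 = -1 < 0 \), so \( K_1 \) of anything fails after the announcement (this is exactly Proposition~\ref{prop:amnesia}), and the formula is false.

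The \( \DPAL{} \) computation is where the contrast, and the main work, lies. Here every \( \varphi \)-world is duplicated into a positive copy \( (1,s) \), all worlds keep a negative copy \( (0,s) \), and the shallow agent \( 1 \) receives cross-links \( (1,s) \sim'_1 (0,s) \) (since \( d(1) < \depth{\varphi} \)) whereas the deep agent \( 0 \) receives none. At \( (1,(1,1,1)) \), agent \( 0 \)'s class is the singleton \( \{(1,(1,1,1))\} \), so \( K_0 m_0 \) still holds there; but agent \( 1 \)'s cross-link reaches the negative copy \( (0,(1,1,1)) \), where agent \( 0 \) retains its \emph{original} \( \sim_0 \)-relation, so \( (0,(1,1,1)) \sim'_0 (0,(0,1,1)) \), a state with \( m_0 \) false. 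Thus \( K_0 m_0 \) fails at \( (0,(1,1,1)) \), so \( K_1 K_0 m_0 \) and hence \( \langle\varphi\rangle K_1 K_0 m_0 \) are false in \( \DPAL{} \); the surviving negative copy is exactly what blocks the leakage. For the final clause I would rerun the \( \TSt{} \) computation with \( d(0) = 1 \): now agent \( 0 \) is too shallow for \( \varphi \), its relation is not cut, the link \( (1,1,1) \sim_0 (0,1,1) \) survives, \( K_0 m_0 \) fails at \( (1,1,1) \), and so \( \langle\varphi\rangle K_0 m_0 \) is false, confirming that what agent \( 1 \) ``knew'' in the base case was genuinely agent \( 0 \)'s deduction leaking across and not something agent \( 1 \) could establish on its own.

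The main obstacle is the \( \DPAL{} \) bookkeeping: one must track which worlds admit positive copies (only the \( \varphi \)-worlds), form the transitive symmetric closure of the cross-linked relation for each agent, and verify that the deep agent \( 0 \) and the shallow agent \( 1 \) behave oppositely under the update, the former isolating \( (1,(1,1,1)) \) and the latter retaining access to the negative copy. Everything else reduces to the routine observation that \( K_2 m_2 \) singles out \( (0,0,1) \) and to the depth arithmetic \( \depth{\varphi} = 2 \) and \( \depth{K_0 m_0} = 1 \).
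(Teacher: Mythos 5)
Your proposal is correct and takes exactly the approach the paper intends: the paper offers no written proof beyond the remark that these propositions "are easily verified by computing explicitly the models after updates," and your explicit computation of the three updates on \( M_3 \) with \( d(i,\cdot)=2-i \) at the state \( (1,1,1) \) — locating \( \varphi = K_1\neg K_2 m_2 \) outside \( \{(0,0,1),(0,1,1)\} \), cutting only agent \( 0 \)'s links in \TSt{}, invoking amnesia for \TSo{}, and tracing agent \( 1 \)'s cross-link to the surviving negative copy in \DPAL{} — checks out in every detail, including the depth arithmetic and the final clause with \( d(0)=1 \). No gaps.
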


\paragraph{Library}
Alongside this paper, we publish code for a library for multi-agent epistemic
logic model checking and visualization in Python.
It implements depth-unbounded PAL models as well as~\DPAL{}, \TSo{} and~\TSt{}.
The code is available
\href{https://gitlab.com/farid-fari/depth-bounded-epistemic-logic}{in an online
repository}~\footnote{\url{https://gitlab.com/farid-fari/depth-bounded-epistemic-logic}}.
The library's usage is explained in Appendix~\ref{app:library}, and
Appendix~\ref{app:images} uses this library to provide illustrations of the model
updates under the assumption of Theorem~\ref{thm:uppbound} above.

\paragraph{Conclusion}
We have shown how \textbf{S5} and public announcement logic (PAL) can be extended
to incorporate agents whose knowledge is bounded by the modal depth of formulas.
We have shown completeness results for several of the resulting
logics and explored the relationship between public announcements and knowledge
in~\DPAL{}, as well as complexity bounds for all resulting logics.
We finally illustrated the behavior of depth-bounded agents in the muddy children
reasoning problem, where we showed upper and lower bounds on depths (and
recursive knowledge of depths) necessary and sufficient to solve the problem.
These results extend epistemic logics to support formal reasoning about agents
with limited modal depth.

\nocite{*}
\bibliographystyle{eptcs}
\bibliography{farid}

\begin{thebibliography}{10}
\providecommand{\bibitemdeclare}[2]{}
\providecommand{\surnamestart}{}
\providecommand{\surnameend}{}
\providecommand{\urlprefix}{Available at }
\providecommand{\url}[1]{\texttt{#1}}
\providecommand{\href}[2]{\texttt{#2}}
\providecommand{\urlalt}[2]{\href{#1}{#2}}
\providecommand{\doi}[1]{doi:\urlalt{https://doi.org/#1}{#1}}
\providecommand{\eprint}[1]{arXiv:\urlalt{https://arxiv.org/abs/#1}{#1}}
\providecommand{\bibinfo}[2]{#2}

\bibitemdeclare{inproceedings}{DBLP:conf/kramas/AlechinaLNR08}
\bibitem{DBLP:conf/kramas/AlechinaLNR08}
\bibinfo{author}{Natasha \surnamestart Alechina\surnameend},
  \bibinfo{author}{Brian \surnamestart Logan\surnameend},
  \bibinfo{author}{Nguyen~Hoang \surnamestart Nga\surnameend} \&
  \bibinfo{author}{Abdur \surnamestart Rakib\surnameend}
  (\bibinfo{year}{2008}): \emph{\bibinfo{title}{Reasoning about Other Agents'
  Beliefs under Bounded Resources}}.
\newblock In \bibinfo{editor}{John{-}Jules~Ch. \surnamestart Meyer\surnameend}
  \& \bibinfo{editor}{Jan~M. \surnamestart Broersen\surnameend}, editors:
  {\slshape \bibinfo{booktitle}{Knowledge Representation for Agents and
  Multi-Agent Systems, {KRAMAS} 2008, Sydney, Australia, September 17, 2008}},
  {\slshape \bibinfo{series}{Lecture Notes in Computer Science}}
  \bibinfo{volume}{5605}, \bibinfo{publisher}{Springer}, pp.
  \bibinfo{pages}{1--15}, \doi{10.1007/978-3-642-05301-6\_1}.

\bibitemdeclare{inproceedings}{DBLP:conf/tark/ArtemovK09}
\bibitem{DBLP:conf/tark/ArtemovK09}
\bibinfo{author}{Sergei~N. \surnamestart Art{\"{e}}mov\surnameend} \&
  \bibinfo{author}{Roman \surnamestart Kuznets\surnameend}
  (\bibinfo{year}{2009}): \emph{\bibinfo{title}{Logical omniscience as a
  computational complexity problem}}.
\newblock In \bibinfo{editor}{Aviad \surnamestart Heifetz\surnameend}, editor:
  {\slshape \bibinfo{booktitle}{Proceedings of the 12th Conference on
  Theoretical Aspects of Rationality and Knowledge (TARK-2009), Stanford, CA,
  USA, July 6-8, 2009}}, pp. \bibinfo{pages}{14--23},
  \doi{10.1145/1562814.1562821}.

\bibitemdeclare{inproceedings}{DBLP:conf/atal/BalbianiDL16}
\bibitem{DBLP:conf/atal/BalbianiDL16}
\bibinfo{author}{Philippe \surnamestart Balbiani\surnameend},
  \bibinfo{author}{David \surnamestart Fern{\'{a}}ndez{-}Duque\surnameend} \&
  \bibinfo{author}{Emiliano \surnamestart Lorini\surnameend}
  (\bibinfo{year}{2016}): \emph{\bibinfo{title}{A Logical Theory of Belief
  Dynamics for Resource-Bounded Agents}}.
\newblock In: {\slshape \bibinfo{booktitle}{Proceedings of the 2016
  International Conference on Autonomous Agents {\&} Multiagent Systems,
  Singapore, May 9-13, 2016}}, \bibinfo{publisher}{{ACM}}, pp.
  \bibinfo{pages}{644--652}, \doi{10.5555/2936924.2937020}.

\bibitemdeclare{article}{DBLP:journals/synthese/BaltagM04}
\bibitem{DBLP:journals/synthese/BaltagM04}
\bibinfo{author}{Alexandru \surnamestart Baltag\surnameend} \&
  \bibinfo{author}{Lawrence~S. \surnamestart Moss\surnameend}
  (\bibinfo{year}{2004}): \emph{\bibinfo{title}{Logics for Epistemic
  Programs}}.
\newblock {\slshape \bibinfo{journal}{Synth.}}
  \bibinfo{volume}{139}(\bibinfo{number}{2}), pp. \bibinfo{pages}{165--224},
  \doi{10.1023/B:SYNT.0000024912.56773.5e}.

\bibitemdeclare{article}{DBLP:journals/corr/abs-2008-08849}
\bibitem{DBLP:journals/corr/abs-2008-08849}
\bibinfo{author}{Thomas \surnamestart Bolander\surnameend},
  \bibinfo{author}{Robin \surnamestart Engelhardt\surnameend} \&
  \bibinfo{author}{Thomas~S. \surnamestart Nicolet\surnameend}
  (\bibinfo{year}{2020}): \emph{\bibinfo{title}{The Curse of Shared Knowledge:
  Recursive Belief Reasoning in a Coordination Game with Imperfect
  Information}}.
\newblock {\slshape \bibinfo{journal}{CoRR}} \bibinfo{volume}{abs/2008.08849},
  \doi{10.48550/arXiv.2008.08849}.

\bibitemdeclare{book}{DBLP:books/mit/FHMV1995}
\bibitem{DBLP:books/mit/FHMV1995}
\bibinfo{author}{Ronald \surnamestart Fagin\surnameend},
  \bibinfo{author}{Joseph~Y. \surnamestart Halpern\surnameend},
  \bibinfo{author}{Yoram \surnamestart Moses\surnameend} \&
  \bibinfo{author}{Moshe~Y. \surnamestart Vardi\surnameend}
  (\bibinfo{year}{1995}): \emph{\bibinfo{title}{Reasoning About Knowledge}}.
\newblock \bibinfo{publisher}{{MIT} Press},
  \doi{10.7551/mitpress/5803.001.0001}.

\bibitemdeclare{article}{DBLP:journals/jolli/GerbrandyG97}
\bibitem{DBLP:journals/jolli/GerbrandyG97}
\bibinfo{author}{Jelle \surnamestart Gerbrandy\surnameend} \&
  \bibinfo{author}{Willem \surnamestart Groeneveld\surnameend}
  (\bibinfo{year}{1997}): \emph{\bibinfo{title}{Reasoning about Information
  Change}}.
\newblock {\slshape \bibinfo{journal}{J. Log. Lang. Inf.}}
  \bibinfo{volume}{6}(\bibinfo{number}{2}), pp. \bibinfo{pages}{147--169},
  \doi{10.1023/A:1008222603071}.

\bibitemdeclare{article}{DBLP:journals/ai/HalpernM92}
\bibitem{DBLP:journals/ai/HalpernM92}
\bibinfo{author}{Joseph~Y. \surnamestart Halpern\surnameend} \&
  \bibinfo{author}{Yoram \surnamestart Moses\surnameend}
  (\bibinfo{year}{1992}): \emph{\bibinfo{title}{A Guide to Completeness and
  Complexity for Modal Logics of Knowledge and Belief}}.
\newblock {\slshape \bibinfo{journal}{Artif. Intell.}}
  \bibinfo{volume}{54}(\bibinfo{number}{2}), pp. \bibinfo{pages}{319--379},
  \doi{10.1016/0004-3702(92)90049-4}.

\bibitemdeclare{inproceedings}{DBLP:conf/aiml/KanekoS00}
\bibitem{DBLP:conf/aiml/KanekoS00}
\bibinfo{author}{Mamoru \surnamestart Kaneko\surnameend} \&
  \bibinfo{author}{Nobu{-}Yuki \surnamestart Suzuki\surnameend}
  (\bibinfo{year}{2000}): \emph{\bibinfo{title}{Epistemic Logic of Shallow
  Depths and Game Theoretical Applications}}.
\newblock In \bibinfo{editor}{Frank \surnamestart Wolter\surnameend},
  \bibinfo{editor}{Heinrich \surnamestart Wansing\surnameend},
  \bibinfo{editor}{Maarten \surnamestart de~Rijke\surnameend} \&
  \bibinfo{editor}{Michael \surnamestart Zakharyaschev\surnameend}, editors:
  {\slshape \bibinfo{booktitle}{Advances in Modal Logic 3, papers from the
  third conference on "Advances in Modal logic," held in Leipzig, Germany, 4-7
  October 2000}}, \bibinfo{publisher}{World Scientific}, pp.
  \bibinfo{pages}{279--298}, \doi{10.1142/9789812776471\_0015}.

\bibitemdeclare{article}{kline2013evaluations}
\bibitem{kline2013evaluations}
\bibinfo{author}{J~Jude \surnamestart Kline\surnameend} (\bibinfo{year}{2013}):
  \emph{\bibinfo{title}{Evaluations of epistemic components for resolving the
  muddy children puzzle}}.
\newblock {\slshape \bibinfo{journal}{Economic Theory}}
  \bibinfo{volume}{53}(\bibinfo{number}{1}), pp. \bibinfo{pages}{61--83},
  \doi{10.1007/s00199-012-0735-x}.

\bibitemdeclare{inproceedings}{DBLP:conf/aiml/KooiB04}
\bibitem{DBLP:conf/aiml/KooiB04}
\bibinfo{author}{Barteld \surnamestart Kooi\surnameend} \&
  \bibinfo{author}{Johan \surnamestart Benthem\surnameend}
  (\bibinfo{year}{2004}): \emph{\bibinfo{title}{Reduction axioms for epistemic
  actions}}.
\newblock In: {\slshape \bibinfo{booktitle}{Advances in Modal Logic 5, papers
  from the fifth conference on "Advances in Modal logic," held in Manchester,
  UK, 9-11 September 2004}}, \bibinfo{publisher}{King's College Publications},
  pp. \bibinfo{pages}{197--211}.
\newblock \urlprefix\url{https://core.ac.uk/download/pdf/148195794.pdf}.

\bibitemdeclare{inproceedings}{DBLP:conf/atal/Lutz06}
\bibitem{DBLP:conf/atal/Lutz06}
\bibinfo{author}{Carsten \surnamestart Lutz\surnameend} (\bibinfo{year}{2006}):
  \emph{\bibinfo{title}{Complexity and succinctness of public announcement
  logic}}.
\newblock In \bibinfo{editor}{Hideyuki \surnamestart Nakashima\surnameend},
  \bibinfo{editor}{Michael~P. \surnamestart Wellman\surnameend},
  \bibinfo{editor}{Gerhard \surnamestart Weiss\surnameend} \&
  \bibinfo{editor}{Peter \surnamestart Stone\surnameend}, editors: {\slshape
  \bibinfo{booktitle}{5th International Joint Conference on Autonomous Agents
  and Multiagent Systems {(AAMAS} 2006), Hakodate, Japan, May 8-12, 2006}},
  \bibinfo{publisher}{{ACM}}, pp. \bibinfo{pages}{137--143},
  \doi{10.1145/1160633.1160657}.

\bibitemdeclare{article}{meyer2003modal}
\bibitem{meyer2003modal}
\bibinfo{author}{John-Jules~Ch \surnamestart Meyer\surnameend}
  (\bibinfo{year}{2003}): \emph{\bibinfo{title}{Modal epistemic and doxastic
  logic}}.
\newblock {\slshape \bibinfo{journal}{Handbook of philosophical logic}}, pp.
  \bibinfo{pages}{1--38}, \doi{10.1007/978-94-017-4524-6_1}.

\bibitemdeclare{inproceedings}{DBLP:conf/aiml/Nguyen04}
\bibitem{DBLP:conf/aiml/Nguyen04}
\bibinfo{author}{Linh~Anh \surnamestart Nguyen\surnameend}
  (\bibinfo{year}{2004}): \emph{\bibinfo{title}{On the Complexity of Fragments
  of Modal Logics}}.
\newblock In \bibinfo{editor}{Renate~A. \surnamestart Schmidt\surnameend},
  \bibinfo{editor}{Ian \surnamestart Pratt{-}Hartmann\surnameend},
  \bibinfo{editor}{Mark \surnamestart Reynolds\surnameend} \&
  \bibinfo{editor}{Heinrich \surnamestart Wansing\surnameend}, editors:
  {\slshape \bibinfo{booktitle}{Advances in Modal Logic 5, papers from the
  fifth conference on "Advances in Modal logic," held in Manchester, UK, 9-11
  September 2004}}, \bibinfo{publisher}{King's College Publications}, pp.
  \bibinfo{pages}{249--268}.
\newblock \urlprefix\url{http://www.aiml.net/volumes/volume5/Nguyen.ps}.

\bibitemdeclare{article}{DBLP:journals/ijis/Sim97}
\bibitem{DBLP:journals/ijis/Sim97}
\bibinfo{author}{Kwang~Mong \surnamestart Sim\surnameend}
  (\bibinfo{year}{1997}): \emph{\bibinfo{title}{Epistemic logic and logical
  omniscience: {A} survey}}.
\newblock {\slshape \bibinfo{journal}{Int. J. Intell. Syst.}}
  \bibinfo{volume}{12}(\bibinfo{number}{1}), pp. \bibinfo{pages}{57--81},
  \doi{10.1002/(SICI)1098-111X(199701)12:1<57::AID-INT3>3.0.CO;2-X}.

\bibitemdeclare{book}{van2007dynamic}
\bibitem{van2007dynamic}
\bibinfo{author}{Hans \surnamestart Van~Ditmarsch\surnameend},
  \bibinfo{author}{Wiebe \surnamestart van Der~Hoek\surnameend} \&
  \bibinfo{author}{Barteld \surnamestart Kooi\surnameend}
  (\bibinfo{year}{2007}): \emph{\bibinfo{title}{Dynamic epistemic logic}}.
\newblock \bibinfo{volume}{337}, \bibinfo{publisher}{Springer Science \&
  Business Media}, \doi{10.1007/978-1-4020-5839-4}.

\bibitemdeclare{article}{DBLP:journals/jolli/VerbruggeM08}
\bibitem{DBLP:journals/jolli/VerbruggeM08}
\bibinfo{author}{Rineke \surnamestart Verbrugge\surnameend} \&
  \bibinfo{author}{Lisette \surnamestart Mol\surnameend}
  (\bibinfo{year}{2008}): \emph{\bibinfo{title}{Learning to Apply Theory of
  Mind}}.
\newblock {\slshape \bibinfo{journal}{J. Log. Lang. Inf.}}
  \bibinfo{volume}{17}(\bibinfo{number}{4}), pp. \bibinfo{pages}{489--511},
  \doi{10.1007/s10849-008-9067-4}.

\end{thebibliography}

\newpage{}
\appendix{}
\section{Axiomatization proofs}\label{app:ax}
\begin{thm}\label{thm:altaxb}
	Axiomatization from Table~\ref{tab:axb2} is sound and complete with
	respect to \DBEL{} over \( \H^\infty \).
\end{thm}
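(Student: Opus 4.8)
The plan is to prove soundness and completeness separately, with soundness being routine and completeness following the Henkin-style canonical model construction standard for \textbf{S5}, adapted to carry depth information. For soundness I would check each axiom of Table~\ref{tab:axb2} directly against Definition~\ref{defi:dbel}: the \textbf{S5} axioms governing \( K^\infty_a \) (truth, positive and negative introspection, the deduction axiom and necessitation) are valid precisely because each \( \sim_a \) is an equivalence relation; the depth axioms (monotonicity, exact depths, unique depth) are valid because \( d(a,s) \) is a single natural number, so \( \{ d : (M,s) \models \P{a}{d} \} \) is a downward-closed initial segment pinned down by the unique exact depth; and the bounded knowledge axiom \( K_a \psi \leftrightarrow (\P{a}{\depth{\psi}} \wedge K^\infty_a \psi) \) is valid by the very definition of the \( K_a \) clause. \emph{Modus ponens} and necessitation preserve validity as usual.

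For completeness I would build a canonical model \( M^c \) whose worlds are maximal consistent sets (MCSs), setting \( \Gamma \sim_a \Delta \) iff \( \{ \psi : K^\infty_a \psi \in \Gamma \} \subseteq \Delta \) and defining \( d(a, \Gamma) \) to be the depth \( d \) for which \( \E{a}{d} \in \Gamma \). Using the \textbf{S5} axioms for \( K^\infty_a \) one shows as usual that each \( \sim_a \) is reflexive (truth), transitive (positive introspection) and Euclidean, hence symmetric and thus an equivalence relation. The heart of the argument is the truth lemma, proved by induction on \( \psi \in \H^\infty \) along a complexity measure that places \( K_a \psi \) strictly above \( K^\infty_a \psi \) and \( \P{a}{\depth{\psi}} \): the propositional and \( K^\infty_a \) cases are standard (the latter via the existence lemma, which here extends the consistent set \( \{ \chi : K^\infty_a \chi \in \Gamma \} \cup \{ \neg \psi \} \) to a witnessing accessible world); the depth-atom cases follow from exact depths, monotonicity and unique depth, which guarantee that \( \P{a}{d} \in \Gamma \) (resp.\ \( \E{a}{d} \in \Gamma \)) matches \( d(a, \Gamma) \geq d \) (resp.\ \( = d \)); and the \( K_a \) case reduces cleanly to these through the bounded knowledge axiom, since \( K_a \psi \in \Gamma \) iff both \( \P{a}{\depth{\psi}} \in \Gamma \) and \( K^\infty_a \psi \in \Gamma \).

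The delicate and essential point — the reason the ordinary \textsc{Lindenbaum} lemma does not suffice — is that \( d(a, \Gamma) \) must be a well-defined \emph{finite} natural number. The ordinary construction can produce an MCS containing \( \P{a}{d} \) for every \( d \), which corresponds to no \DBEL{} model, since each depth atom has finite index and no single formula can refute infinite depth. This is exactly what the strengthened \textsc{Lindenbaum} lemma of Appendix~\ref{app:lind} rules out: every consistent set extends to an MCS in which each agent is assigned a finite depth, so some \( \E{a}{d} \) belongs to it and by unique depth exactly one does. I would invoke this strengthened lemma twice, once to extend \( \{ \neg \varphi \} \) to the initial world when \( \not\vdash \varphi \), and once inside the existence lemma so that the witnessing accessible world also has finite depths, ensuring that \( M^c \) genuinely lies in the class of Definition~\ref{defi:dbel}. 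Completeness then follows by contraposition: an unprovable \( \varphi \) yields a consistent \( \{ \neg \varphi \} \), hence a finite-depth MCS falsifying \( \varphi \). I expect this finite-depth guarantee to be the main obstacle and the sole place where the construction departs from the textbook \textbf{S5} completeness proof.
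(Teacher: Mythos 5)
Your proof is correct and rests on the same two pillars as the paper's: a canonical-model/truth-lemma argument and, crucially, the strengthened \textsc{Lindenbaum} lemma guaranteeing that each maximal consistent set contains a finite exact depth atom \( \E{a}{d} \) for every agent --- you correctly identify this as the sole departure from the textbook \textbf{S5} proof. The one structural difference is in the decomposition: the paper first uses the (sound) exact-depths and bounded-knowledge axioms to translate any \( \varphi \in \H^\infty \) into an equivalent \( t(\varphi) \) containing no \( \P{a}{d} \) atoms and no \( K_a \) operators, proves completeness for that reduced fragment (which it calls \textbf{S5D}) so that the truth lemma has only the single new case of the \( \E{a}{d} \) atoms, and then transfers provability back along the translation; you instead run the truth lemma directly over the full language, discharging the \( K_a \) and \( \P{a}{d} \) cases inside the induction via those same two axioms. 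These are interchangeable ways of packaging the same reduction --- the paper's version keeps the truth lemma minimal at the cost of a back-and-forth translation argument, yours keeps a single induction at the cost of a few extra (easy) cases and of needing a complexity measure that places \( K_a \psi \) above its reduct. Your remark that the strengthened \textsc{Lindenbaum} lemma must also be available inside the existence lemma, so that the witnessing accessible world lands in the restricted (finite-depth) canonical model, is a point the paper passes over silently; note only that Lemma~\ref{lmm:lindapp} is stated for \emph{finite} consistent sets, so applying it to the infinite set \( \{ \chi : K^\infty_a \chi \in \Gamma \} \cup \{ \neg \psi \} \) would require a word of justification in either write-up.
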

\begin{proof}
	Soundness of all of these axioms is immediate: the definition of \(
	K^\infty_a \) follows that of \textbf{S5} and so do the axioms, those
	concerning depth atoms are consequences of linear arithmetic, and the
	bounded knowledge axiom follows immediately from the definition of \( K_a
	\) in the semantics.

	For completeness, first note we can translate any formula \( \varphi \) in
	\( \H^\infty \) into an equivalent formula \( t(\varphi) \) that does not
	contain any \( \P{a}{d} \) atoms or \( K_a \) modal operators using the
	exact depths and bounded knowledge axioms (which we know to be sound).
	Call \textbf{S5D} this fragment of \textbf{DBEL}.

	We will use a proof through the \textsc{Lindenbaum} lemma and the truth
	lemma, to this end we need to complete the definition for the canonical
	model to add a depth function.
	As a reminder, the proof is as follows: if \( \varphi \) cannot be shown
	within the axiomatization in Table~\ref{tab:axb}, i.e. \( \not\vdash
	\varphi \), then we show that \( \not\models \varphi \) by showing there
	is a state in the canonical model in which it does not hold.

	The canonical model \( M^c \) is the model whose states are maximally
	consistent sets \( \Gamma \) of formulas for our axiomatization and whose
	states
	are related by \( \sim_a \) if the set of formulas \( a \) knows is the
	same in both states.
	Its valuation function for atoms \( V(\Gamma) \) is simply the set of
	axioms in \( \Gamma \), i.e. \( \Gamma \cap \atoms \).

	We restrict \( M^c \) to sets \( \Gamma \) that contain at least some \(
	\E{a}{d} \) for each agent \( a \in \agents \) and by the unique depth
	axiom we define \( d(a, \Gamma) = \max \{ d, \; \E{a}{d} \in \Gamma \}
	\), since \( \Gamma \) contains exactly one depth to be consistent.
	This completes \( M^c \) into a \DBEL{} model.

	Since \( \not\vdash \varphi \), the set \( \{ \neg \varphi \} \) is
	consistent for our axiomatization.
	We must now show we can extend this set into a maximal consistent set of
	formulas that contains a depth atom \( \E{a}{d} \) for each agent \( a
	\).

	However, this stronger requirement is not satisfied by the usual
	\textsc{Lindenbaum} lemma, since a consistent set of formulas could be \(
	\{ \P{a}{d}, \; d \in \N \} \) (which is not consistent with any \(
	\E{a}{d} \)).
	Note however we only need it to hold for a finite set of formulas (namely
	\( \{ \neg \varphi \} \)): Lemma~\ref{lmm:lindapp} below proves this
	version of the \textsc{Lindenbaum} lemma, by showing there must exist
	some \( \E{a}{d} \) that is consistent with any finite set for each \( a
	\), and then a maximally consistent set can be derived using the
	traditional \textsc{Lindenbaum} lemma.

	Finally, the truth lemma shows that \( \varphi \in \Gamma \iff (M^c,
	\Gamma) \models \varphi \) by induction on \( \varphi \) and is enough
	to conclude (since the maximal consistent set containing \( \neg \varphi
	\) will not verify \( \varphi \)).
	Most induction cases are the same as for \textbf{S5}, the only new
	symbols left in our formula \( \varphi \) are the \( \E{a}{d} \) atoms,
	and the truth lemma is immediately true for them by definition of the
	depth function of \( M^c \).

	Finally, if \( \models \varphi \), then \( \models t(\varphi) \) by the
	soundness of the axiomatization and definition of the transformation,
	then \( \textbf{S5D} \vdash t(\varphi) \) since we have just shown the
	completeness of this fragment.
	Finally, this must mean \( \DBEL \vdash t(\varphi) \) and then
	\( \vdash \varphi \) since the transformations of \( t \) can be
	performed using equivalences in our axiomatization: we have shown
	completeness.
\end{proof}

\begin{table}
\centering
 \begin{tabular}{|| r | l ||}
 \hline
 All propositional tautologies & \( p \to p \), \textit{etc}. \\
 Deduction & \( (K^\infty_a \varphi \wedge K^\infty_a (\varphi \to \psi)) \to
 K^\infty_a \psi \) \\
 \hline
 Truth & \( K^\infty_a \varphi \to \varphi \) \\
 Positive introspection & \( K^\infty_a \varphi \to K^\infty_a K^\infty_a \varphi \) \\
 Negative introspection & \( \neg K^\infty_a \varphi \to K^\infty_a \neg K^\infty_a \varphi \) \\
 \hline
 Depth monotonicity & \( \P{a}{d} \to \P{a}{d-1} \) \\
 Exact depths & \( \P{a}{d} \leftrightarrow \neg (E_a^0 \vee \cdots \vee E_a^{d-1})  \) \\
 Unique depth & \( \neg(E_a^{d_1} \wedge E_a^{d_2}) \) for \( d_1 \neq d_2 \) \\
 \hline
 Bounded knowledge & \( K_a \varphi \leftrightarrow \P{a}{\depth{\varphi}} \wedge K^\infty_a
 \varphi \) \\
 \hline
 \hline
 \textit{Modus ponens} & From \( \varphi \) and \( \varphi \to \psi \), deduce \(
 \psi \) \\
 Necessitation & From \( \varphi \) deduce \( K^\infty_a \varphi \) \\
 \hline
 \end{tabular}
	\caption{Sound and complete axiomatization of \DBEL{} over \( \H^\infty \).}\label{tab:axb2}
\end{table}

\begin{table}
\centering
 \begin{tabular}{|| r | l ||}
 \hline
 All propositional tautologies & \( p \to p \), \textit{etc}. \\
 Deduction & \( (K^\infty_a \varphi \wedge K^\infty_a (\varphi \to \psi)) \to
 K^\infty_a \psi \) \\
 \hline
 Truth & \( K^\infty_a \varphi \to \varphi \) \\
 Positive introspection & \( K^\infty_a \varphi \to K^\infty_a K^\infty_a \varphi \) \\
 Negative introspection & \( \neg K^\infty_a \varphi \to K^\infty_a \neg K^\infty_a \varphi \) \\
 \hline
 Atomic permanence & \( [\varphi] p \leftrightarrow \varphi \to p \) \\
 Depth adjustment & \( \forall d \in \Z, \; [\varphi] \E{a}{d} \leftrightarrow \left(
	 \varphi \to \E{a}{\depth{\varphi} + d} \right) \) \\
 Negation announcement & \( [\varphi] \neg \psi \leftrightarrow (\varphi \to \neg
	 [\varphi] \psi) \) \\
 Conjunction announcement & \( [\varphi] (\psi \wedge \chi) \leftrightarrow
	 ([\varphi] \psi \wedge [\varphi] \chi) \) \\
 Knowledge announcement & \( [\varphi] K^\infty_a \psi \leftrightarrow
	 (\varphi \to K^\infty_a [\varphi] \psi ) \) \\
 Announcement composition & \( [\varphi] [\psi] \chi \leftrightarrow
	 ([\varphi \wedge [\varphi] \psi] \chi) \) \\
 \hline
 Depth monotonicity & \( \P{a}{d} \to \P{a}{d-1} \) \\
 Exact depths & \( \P{a}{d} \leftrightarrow \neg (E_a^0 \vee \cdots \vee E_a^{d-1})  \) \\
 Unique depth & \( \neg(E_a^{d_1} \wedge E_a^{d_2}) \) for \( d_1 \neq d_2 \) \\
 \hline
 Bounded knowledge & \( K_a \varphi \leftrightarrow \P{a}{\depth{\varphi}} \wedge K^\infty_a
 \varphi \) \\
 \hline
 \hline
 \textit{Modus ponens} & From \( \varphi \) and \( \varphi \to \psi \), deduce \(
 \psi \) \\
 Necessitation & From \( \varphi \) deduce \( K^\infty_a \varphi \) \\
 \hline
 \end{tabular}
\caption{Sound and complete axiomatization of~\TSo{}.}\label{tab:ax4}
\end{table}

\begin{thm}\label{thm:ax4}
	The axiomatization in Table~\ref{tab:ax4} is sound and complete with
	respect to~\TSo{}.
\end{thm}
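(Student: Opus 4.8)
The plan is to follow the standard reduction-axiom methodology for public announcement logics, reducing soundness and completeness of Table~\ref{tab:ax4} over $\L^\infty$ to the completeness of its announcement-free fragment, which is already established by Theorem~\ref{thm:altaxb} over $\H^\infty$.

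For soundness, I would check each axiom directly against Definition~\ref{defi:sem1}. The propositional tautologies, the \textbf{S5}-style axioms governing $K^\infty_a$, the three depth axioms, and the bounded knowledge axiom are sound exactly as in Theorem~\ref{thm:altaxb}, since $K^\infty_a$, the depth atoms, and $K_a$ are interpreted identically in both logics. The substantive cases are the six public announcement axioms. Because \TSo{} performs the simplest possible update---it restricts $\states$ and each $\sim_a$ to the states satisfying $\varphi$ and unconditionally subtracts $\depth{\varphi}$ from every agent's depth over $\Z$---these axioms mirror the classical PAL reduction axioms. In particular, depth adjustment encodes the uniform decrement: at a state where $\varphi$ holds, agent $a$ ends with depth $d$ iff it began with depth $\depth{\varphi} + d$, which is exactly $[\varphi]\E{a}{d} \leftrightarrow (\varphi \to \E{a}{\depth{\varphi}+d})$. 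The knowledge announcement axiom holds because $\sim'_a$ is merely the restriction of $\sim_a$, and is verified by an argument in the style of equation~\eqref{eq:st1}; atomic permanence, negation, conjunction, and announcement composition are obtained by unfolding the definition of $M \mid \varphi$ as in classical PAL.

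For completeness, I would read the six biconditional announcement axioms as left-to-right rewrite rules and define a translation $t$ that eliminates every public announcement, sending each $\varphi \in \L^\infty$ to an equivalent $t(\varphi) \in \H^\infty$ with $\vdash \varphi \leftrightarrow t(\varphi)$ derivable (each rewrite step instantiates a provable biconditional). Given $\models \varphi$, soundness yields $\models t(\varphi)$; since $t(\varphi)$ is announcement-free, Theorem~\ref{thm:altaxb} gives a derivation of $t(\varphi)$ using the axioms of Table~\ref{tab:axb2}, all of which appear in Table~\ref{tab:ax4}; and the provable equivalence then yields $\vdash \varphi$.

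The main obstacle is termination of this rewriting, since several axioms duplicate or reinsert the outer announcement: negation and conjunction copy $[\varphi]$ onto subformulas, knowledge announcement pushes $[\varphi]$ under $K^\infty_a$ while preserving it, and announcement composition rewrites $[\varphi][\psi]\chi$ into $[\varphi \wedge [\varphi]\psi]\chi$, deepening the guard. I would resolve this exactly as in the analogous PAL argument by fixing a complexity measure $c$ on $\L^\infty$ with $c([\varphi]\psi) = (c(\varphi) + k)\cdot c(\psi)$ for a suitable constant $k$ and the usual additive recurrences on the other connectives, then checking that each reduction axiom strictly decreases $c$ left to right. The only delicate verification is announcement composition, where one confirms $c([\varphi][\psi]\chi) > c([\varphi \wedge [\varphi]\psi]\chi)$; this is precisely the computation behind the polynomial translation of Lemma~9 of~\cite{DBLP:conf/atal/Lutz06} invoked in Theorem~\ref{thm:sat}, and it transfers verbatim because depth adjustment---the only axiom that touches depth atoms---is non-recursive and contributes nothing to $c$. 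Well-foundedness of $c$ gives termination of $t$, and the chain of provable equivalences then closes the argument.
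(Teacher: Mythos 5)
Your proposal is correct and follows essentially the same route as the paper: soundness of the \DBEL{} axioms is inherited from Theorem~\ref{thm:altaxb}, the announcement axioms are checked directly against Definition~\ref{defi:sem1} (where the update is a plain restriction plus a uniform depth decrement, so the classical PAL reduction arguments carry over), and completeness goes by a translation $t$ that rewrites announcements away and then appeals to completeness of the announcement-free fragment. The only differences are cosmetic---the paper reduces all the way to the fragment \textbf{S5D} rather than stopping at $\H^\infty$, and it leaves the termination of the rewriting implicit, a point your complexity-measure argument makes explicit and which is a welcome addition of rigor rather than a divergence.
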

\begin{proof}
	Soundness of the axioms of~\DBEL{} is proven in Theorem~\ref{thm:altaxb}.
	Soundness of all axioms for public announcement is also a consequence of
	their definition in PAL with which they share their definition, except
	for depth adjustment for which the proof is relatively immediate.

	For completeness, we first translate a given formula \( \varphi \) in to
	\( t(\varphi) \) by removing public announcements, \( K_a \) modal
	operators and \( \P{a}{d} \) atoms by inductively using the sound axioms
	from Table~\ref{tab:ax4}.
	The formula \( t(\varphi) \) is then in the syntactic fragment
	\textbf{S5D} of our logic, therefore we can use completeness there shown
	in Theorem~\ref{thm:axb} to obtain that \( \vdash t(\varphi) \), which in
	turn implies \( \vdash \varphi \) within the axiomatization of
	Table~\ref{tab:ax4} by using the same axioms in the opposite directions.
\end{proof}

\section{\textsc{Lindenbaum} lemma with depth assignments}\label{app:lind}
\begin{lmm}\label{lmm:lindapp}
For every agent \( a \) and finite consistent set of formulas \( \Gamma \)
without public announcement, \( \P{b}{d} \) literals or \( K_b \) operators for
all \( b \), there exists some \( d \in \N \) such that \( \Gamma \cup \{
\E{a}{d} \} \) is a consistent set.
\end{lmm}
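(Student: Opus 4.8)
The plan is to argue semantically, exploiting that the hypotheses place \( \Gamma \) in the fragment \textbf{S5D}: since \( \Gamma \) contains no \( \P{b}{d} \) literals and no \( K_b \) operators, the only axioms relating depth atoms are the unique depth instances \( \neg(\E{a}{d_1} \wedge \E{a}{d_2}) \), so every \( \E{b}{d} \) and every \( p \in \atoms \) may be read as an ordinary propositional atom and the fragment is just multi-modal \textbf{S5}, with the \( K^\infty_b \) operators as its modalities, together with this mutual-exclusion schema. First I would record that this fragment is sound and complete with respect to what I will call \emph{loose} models: ordinary \textbf{S5} Kripke models over the atoms \( \atoms \cup \{ \E{b}{d} \} \) whose valuations make at most one depth atom true per agent per world, so that---unlike in \DBEL{}---a world may assign \emph{no} depth to an agent. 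Soundness is immediate and completeness is the standard canonical-model construction for \textbf{S5}; applied to the finite consistent set \( \Gamma \), it produces a loose model \( (M, w) \models \Gamma \).

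The key step is then to inspect the depth of \( a \) at \( w \). If some \( \E{a}{d} \) holds at \( w \), then \( (M, w) \models \Gamma \cup \{ \E{a}{d} \} \), and soundness makes this set consistent, so we are done. Otherwise no depth atom for \( a \) is true at \( w \); here I would choose \( m \in \N \) strictly larger than every \( d \) such that \( \E{a}{d} \) occurs in \( \Gamma \) (finitely many, as \( \Gamma \) is finite) and modify \( M \) into \( M' \) by making \( \E{a}{m} \) true at \( w \) while leaving every other world and every other atom untouched. Because \( \E{a}{m} \) occurs in no subformula of \( \Gamma \), flipping it at \( w \) changes the truth value of no subformula of \( \Gamma \), so \( (M', w) \models \Gamma \); and since \( a \) previously had no depth atom at \( w \), now exactly one holds, so \( M' \) is still a loose model. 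Hence \( (M', w) \models \Gamma \cup \{ \E{a}{m} \} \), which yields consistency of that set.

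I expect the no-depth case to be the main obstacle, and the exclusion of \( \P{b}{d} \) literals is exactly what makes it succeed. A purely syntactic substitution---replacing \( \E{a}{m} \) by some formula over the atoms already present---is blocked by the fact that there are infinitely many depth atoms and unique depth couples any new atom to all of them; the semantic detour avoids this because only the finitely many atoms occurring in \( \Gamma \) are relevant. Were \( \P{b}{d} \) literals permitted, the exact depths axiom would force every \( \P{a}{d} \) true at a depth-less world, realizing precisely the set \( \{ \P{a}{d}, \; d \in \N \} \) flagged before the lemma as inconsistent with every \( \E{a}{d} \); turning on a fresh \( \E{a}{m} \) would then falsify \( \P{a}{d} \) for \( d > m \) and break satisfaction of \( \Gamma \). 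Finally I would note that no circularity arises with Theorem~\ref{thm:altaxb}: loose completeness is nothing more than ordinary \textbf{S5} completeness, whereas the totality of depth assignments---the single feature separating \DBEL{} models from loose ones---is exactly what this lemma contributes to that theorem's canonical model.
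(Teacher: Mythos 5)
Your proof is correct and takes essentially the same route as the paper's: both pass through ordinary \textbf{S5} completeness to obtain a Kripke model of \( \Gamma \) treating the (finitely many relevant) depth atoms as propositional atoms, and both resolve the case where no \( \E{a}{d} \) holds at the distinguished world by switching on a fresh depth atom not occurring in \( \Gamma \), which therefore cannot disturb satisfaction. The one step the paper makes explicit and you leave implicit is why truth in your loose models certifies consistency with respect to the \emph{full} axiomatization of Table~\ref{tab:axb2} rather than only the S5-plus-unique-depth fragment: the paper notes that any putative inconsistency proof can be purged of \( \P{a}{d} \) and \( K_a \) via the exact-depths and bounded-knowledge equivalences and of out-of-range \( \E{a}{d} \) atoms by substituting \( \bot \), which is exactly the conservativity fact your appeal to soundness needs.
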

\begin{proof}
	Fix agent \( a \).
	As \( \Gamma \) is a finite set of finite formulas, the set of exact
	depth atoms for \( a \) that appear in its formulas is included in a
	finite set \( F = \{ \E{a}{0}, \ldots, \E{a}{D} \} \) for some \( D \in
	\N \).

	We can add to \( \Gamma \) instances of the unique depth axiom for each
	pair of integers in \( [| 0; D |] \) while maintaining consistency.
	The set \( \Gamma \) can then be seen as a consistent set of formulas for
	\textbf{S5} over the set of atoms \( F \cup \atoms \), i.e.\ consistent
	in the axiomatization of Table~\ref{tab:axb2} without depth axioms or
	bounded knowledge (or tautologies involving symbols not in the language
	of \textbf{S5}).
	Therefore, there is an \textbf{S5} model \( (M, s) \) that satisfies it,
	by the usual \textsc{Lindenbaum} lemma and the truth lemma for
	\textbf{S5} (this will be the canonical model here).

	In \( (M, s) \), if any of the \( \E{a}{d} \) are valued to \( \top \),
	then at most one of them is satisfied (since we added the unique depth
	axiom for all pair of depths).
	If all of the \( \E{a}{d} \) are valued to \( \bot \), then we can
	introduce a new atom \( \E{a}{D+1} \) and set its value to \( \top \) in
	all states of the model.
	All of the unique depths axioms for \( D+1 \) and \( d \leq D \) can be
	added to \( \Gamma \) without making it inconsistent.

	In both cases, let \( d_0 \) be the value of the unique \( \E{a}{d_0} \)
	valued to \( \top \) in this final model.
	We claim that \( \{ \varphi, \E{a}{d_0} \} \) must be a consistent set.
	Indeed, a proof of its inconsistency with the axioms from
	Table~\ref{tab:axb2} must only involve axioms from \textbf{S5} and unique
	depths axioms for the set \( F \), since none of the symbols \(
	\P{a}{d} \) or \( K_a \) are necessary in a proof (they can be replaced
	by their equivalents with \( \E{a}{d} \) and \( K^\infty_a \) without
	changing the conclusion) and any occurrence of \( \E{a}{d} \) for \( d >
	D+1 \) can be replaced by \( \bot \) while maintaining the truthfulness
	and conclusion of the proof.

	Therefore, such an inconsistency proof would also hold within
	\textbf{S5}, which is a contradiction with soundness since all of these
	formulas are verified in a consistent set, i.e.\ the set of formulas that
	are true in \( (M, s) \).
\end{proof}

\section{Proofs for Section~\ref{sec:dpal}}\label{app:props}
\begin{prop}\label{prop:col3app}
	Formulas~\KI{} and~\TAz{} are valid for \DPAL{} in the unambiguous depths
	setting.
\end{prop}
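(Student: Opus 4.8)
The plan is to evaluate both equivalences by explicitly analyzing the \( \sim'_a \)-equivalence class of the positive copy \( (1,s) \) in the updated model \( M \mid \varphi \), using the unambiguous depths assumption~\eqref{eq:unam} to propagate agent \( a \)'s depth value across an entire \( \sim_a \)-class. In both cases I first dispatch the trivial direction: if \( (M,s) \not\models \varphi \) then \( [\varphi]K_a\psi \) holds vacuously and so do \( \varphi \to K_a\psi \) and \( \varphi \to K_a[\varphi]\psi \), so I may assume \( (M,s) \models \varphi \), reducing each biconditional to a comparison of \( (M \mid \varphi, (1,s)) \models K_a\psi \) with \( (M,s) \models K_a\psi \) for~\KI{}, and with \( (M,s) \models K_a[\varphi]\psi \) for~\TAz{}. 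The key structural observation concerns which states lie in the \( \sim'_a \)-class of \( (1,s) \). Since \( \sim_a \) is an equivalence relation, unambiguity forces \( d(a,s') = d(a,s) \) for every \( s' \sim_a s \). Under the hypothesis \( \Paphi \) of~\TAz{} no back-link \( (1,s')\,R_a\,(0,s') \) exists for \( s' \sim_a s \), so the class of \( (1,s) \) is exactly \( \{(1,s') : s' \sim_a s,\ (M,s')\models\varphi\} \); under the hypothesis \( \neg\Paphi \) of~\KI{} every such link is present, so the class is both copies \( \{(0,s'),(1,s') : s'\sim_a s\} \), the positive copy present only where \( \varphi \) holds.

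For~\TAz{} this already finishes the argument by direct computation, with no induction on \( \psi \) needed. Since \( d(a,s) \geq \depth{\varphi} \), the depth update gives \( d'(a,(1,s)) = d(a,s) - \depth{\varphi} \), so the knowledge-depth threshold \( d'(a,(1,s)) \geq \depth{\psi} \) coincides with \( d(a,s) \geq \depth{\varphi} + \depth{\psi} = \depth{[\varphi]\psi} \), which is exactly the threshold for \( K_a[\varphi]\psi \) in \( M \). The universal part matches identically: \( (M \mid \varphi, (1,s)) \models K_a\psi \) quantifies over the positive copies of the \( \varphi \)-worlds in \( a \)'s class, and unfolding \( (M,s)\models K_a[\varphi]\psi \) over the same class yields, for each \( s'\sim_a s \), the condition \( (M,s')\models\varphi \Rightarrow (M\mid\varphi,(1,s'))\models\psi \), the same set of requirements (and the \( \neg\varphi \)-worlds contribute vacuously on both sides). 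Hence both sides of~\TAz{} are equivalent for arbitrary \( \psi \in \L^\infty \).

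For~\KI{} the two copies of each world are now in play, so I must relate truth in \( M \mid \varphi \) back to truth in \( M \) itself. The heart of the argument is a preservation lemma proved by induction on \( \psi \in \L_a^\infty \): for every \( s' \sim_a s \) and every copy \( (i,s') \) present in \( M \mid \varphi \), we have \( (M\mid\varphi,(i,s')) \models \psi \iff (M,s')\models\psi \). The restriction to \( \L_a^\infty \) is precisely what makes this hold: valuations satisfy \( V'((i,s')) = V(s') \), and since every state in \( a \)'s class has depth \( < \depth{\varphi} \) the update leaves \( d'(a,(i,s')) = d(a,s') \) unchanged, so atoms and \( a \)'s depth threshold are preserved, whereas depth atoms and other agents' operators (whose depths may be decremented) are absent by hypothesis. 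The \( K_a \) case goes through because the \( \sim'_a \)-class of any \( (i,s') \) in \( a \)'s component is again both copies of the \( \sim_a \)-class of \( s' \), to which the induction hypothesis applies, with unchanged threshold. Given the lemma, \( (M\mid\varphi,(1,s))\models K_a\psi \) iff \( d(a,s)\geq\depth{\psi} \) and \( \psi \) holds at every \( s'\sim_a s \) in \( M \), i.e.\ iff \( (M,s)\models K_a\psi \), which is~\KI{}.

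I expect the main obstacle to be the announcement subcase \( \psi = [\chi]\theta \) of the preservation lemma, where one must compare the iterated update \( (M\mid\varphi)\mid\chi \) with \( M\mid\chi \). Here the copy-by-copy valuation and depth preservation no longer transfer directly, and one needs a commutation argument showing that announcing \( \chi \) after a \( \varphi \) for which \( a \) is too shallow produces, on \( a \)'s component, the same truth values for \( \L_a^\infty \)-formulas as announcing \( \chi \) alone. I would handle this by strengthening the induction to track both updates simultaneously, using that \( \depth{\chi} \) is a purely syntactic quantity and that \( a \)'s depth, hence its perception of \( \chi \), is identical in the corresponding worlds of the two models because \( a \) remains below \( \depth{\varphi} \) throughout.
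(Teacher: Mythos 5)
Your proposal is correct and follows essentially the same route as the paper: the same case split on whether $\varphi$ holds, the same identification of the $\sim'_a$-class of $(1,s)$ under each depth hypothesis, a direct depth-threshold computation for~\TAz{}, and for~\KI{} a copy-preservation lemma proved by induction over $\L_a^\infty$ that must be strengthened to track iterated updates for the announcement case (the paper's equations~\eqref{eq:ihki} and~\eqref{eq:ihkistr}). You correctly anticipated that the $[\chi]\theta$ subcase is the delicate point and proposed the same fix the paper uses.
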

\begin{proof}
To prove~\KI{}, suppose without loss of generality that \( (M, s) \models \neg
\P{a}{\depth{\varphi}} \wedge \varphi \).
In particular, this means that in \( M \mid \varphi \), we have \( (0, s) \sim'_a
(1, s) \) and therefore the equivalence class of \( (1, s) \) in \( M \mid
\varphi \) contains all \( (0, s') \) whenever \( s' \sim_a s \).
Then,
\begin{align}\label{eq:rewrLHS}
(M, s) \models [\varphi] K_a \psi
& \iff (M, s) \models \varphi \implies (M \mid \varphi, (1, s)) \models K_a \psi
	\notag \\
& \iff (M \mid \varphi, (1, s)) \models \P{a}{\depth{\psi}} \text{ and } \forall
	s', j, (j, s') \sim'_a (1, s) \implies (M \mid \varphi, (j, s')) \models
	\psi \notag \\
& \iff (M \mid \varphi, (1, s)) \models \P{a}{\depth{\psi}} \text{ and } \forall
	s' \sim_a s, \;
\begin{cases}
(M \mid \varphi, (0, s')) \models \psi \\
(M, s') \models \varphi \implies (M \mid \varphi, (1, s')) \models \psi.
\end{cases}
\end{align}

On the other hand,
\begin{equation}\label{eq:rewrRHS}
(M, s) \models K_a \psi
\iff (M, s) \models \P{a}{\depth{\psi}} \text{ and } \forall s', s' \sim_a
s \implies (M, s') \models \psi.
\end{equation}

We prove by structural induction over \( \psi \in \Ha \) the stronger
equivalence,
\begin{equation}\label{eq:ihki}
\forall s' \sim_a s, \quad
\begin{cases}
	(M, s') \models \psi \iff (M \mid \varphi, (0, s')) \models \psi \\
	(M, s') \models \varphi \implies \left( (M, s') \models \psi \iff
	(M \mid \varphi, (1, s')) \models \psi \right).
\end{cases}
\end{equation}
Given that \( (M, s) \not\models \Paphi \), we have \( (M \mid \varphi, (1, s))
\models \P{a}{\d{\psi}} \iff (M, s) \models \P{a}{\d{\psi}} \).
Therefore the depth conditions in equations~\eqref{eq:rewrLHS}
and~\eqref{eq:rewrRHS} are the same and since both sides are true if \( (M, s)
	\not\models \varphi \), equation~\eqref{eq:ihki} is enough to prove~\KI{}.

For \( \psi \in \atoms \), it is true because \( V'((j, s')) = V(s') \) for all
\( j \) and \( s' \) (note that \( \atoms \) does not include depth atoms).
For depth atoms about \( a \), it is a consequence of \( (M, s) \models K_a \neg
	\Paphi \) by the depth unambiguity condition~\eqref{eq:unam}, which means
	the depth of \( a \) is unchanged in all \( s' \sim_a s \) after the
	model update.

The cases where \( \psi = \psi_1 \wedge \psi_2 \) and \( \psi = \neg \chi \) are
immediate, by the way these operators coincide with the usual propositional logic
definition on both sides of the equivalences.

If \( \psi = K_a \chi \) and \( s' \sim_a s \), recall that by the depth
unambiguity condition~\eqref{eq:unam} we have \( (M, s') \models \neg \Paphi \).
 Therefore, if \( (M, s') \models \varphi \),
\begin{align*}
(M \mid \varphi, (1, s')) \models \psi
& \iff d(a, s') \geq \d{\chi} \; \text{ and } \; \forall (j, s'') \sim'_a (1, s'), \;
	(M \mid \varphi, (j, s'')) \models \chi \\
& \iff d(a, s') \geq \d{\chi} \; \text{ and } \; \forall s'' \sim_a s, \; \begin{cases}
	(M \mid \varphi, (0, s'')) \models \chi \\
	(M, s'') \models \varphi \implies (M \mid \varphi, (1, s'')) \models \chi
	\end{cases} \\
& \iff d(a, s') \geq \d{\chi} \; \text{ and } \; \forall s'' \sim_a s', \;
	(M, s'') \models \chi \\
& \iff (M, s') \models \psi,
\end{align*}
where we have used the induction hypothesis~\eqref{eq:ihki} for \( \chi \) once
in each direction.
The first equivalence in equation~\eqref{eq:ihki} is even easier to verify, by
the same technique.
The case for \( \psi = K^\infty_a \chi \) is directly implied by this proof, as
there are no depth conditions to verify.

To prove public announcements, we will need a stronger induction hypothesis than~\eqref{eq:ihki}.
Write for any \( s \), \( 1_0(s) = s \) and \( 1_n(s) = (1, 1_{n-1}(s)) = (1, \ldots, (1, s)) \).
We posit,
\begin{align}
& \forall n \in \N, \; \forall \psi_1, \ldots, \psi_n, \; \forall s' \sim_a s, \;
(M, s') \models \P{a}{\d{\psi_1} + \cdots \d{\psi_n} + \d{\psi}} \text{ and }
(M, s') \models \neg \Paphi \implies \notag \\
& (M, s') \models \psi_1 \text{ and } 
(M \mid \psi_1, (1, s')) \models \psi_2 \text{ and }
\ldots 
\text{ and } 
(M \mid \psi_1 \mid \cdots \mid \psi_{n-1}, 1_{n-1}(s')) \models \psi_n \implies \notag \\
&
\begin{cases}
	(M \mid \psi_1 \mid \cdots \mid \psi_n, 1_n(s')) \models \psi \iff (M \mid \varphi  \mid \psi_1 \mid \cdots \mid \psi_n, 1_n((0, s'))) \models \psi \\
	(M, s') \models \varphi \implies \left( (M \mid \psi_1 \mid \cdots \mid \psi_n, 1_n(s')) \models \psi \iff
	(M \mid \varphi \mid \psi_1 \mid \cdots \mid \psi_n, 1_n((1, s'))) \models \psi \right).
\end{cases}\label{eq:ihkistr}
\end{align}
Note we slightly abuse notation here and some of these states might not exist,
the convention is that the equivalences need only hold when the states exist in
the models on both sides.
The implicant implies that the left-hand term always exists.

Taking this for \( n = 0 \) is sufficient to conclude on~\KI{}, since both equations~\eqref{eq:rewrLHS} and~\eqref{eq:rewrRHS} will be false whenever \( (M, s) \not\models \P{a}{\d{\psi}} \).

The cases for atoms, negations and conjunction are clear for the same reasons as they were in equation~\eqref{eq:ihki}.
The case for depth atoms for \( a \) is direct, since the assumption \( (M, s')
\models \P{a}{\d{\psi_1} + \cdots \d{\psi_n}} \) implies that the depth of \( a
\) after the \( \psi_1, \ldots, \psi_n \) announcements is its initial depth
minus the sum of the depths of all the announcements, and the assumption that it
is not deep enough for \( \varphi \) means its depth does not change with the
announcement of \( \varphi \).

The case for modal operators \( K_a \) relies on the fact that depth atoms are
preserved (by the induction hypothesis for depth atoms) and the relations verify
in \( M \mid \psi_1 \mid \cdots \mid \psi_n \) when these states exist,
\[
(1, (1, \ldots (1, s_1))) \sim^{(n)}_a (1, (1, \ldots (1, s_2)))
\iff
s_1 \sim_a s_2,
\]
by denoting \( \sim^{(k)}_a \) the relation for \( a \) in a model after \( k \)
announcements.
And similarly in \( M \mid \varphi \mid \psi_1 \mid \cdots \mid \psi_n \),
\[
(1, (1, \ldots (j, s_1))) \sim^{(n+1)}_a (1, (1, \ldots (k, s_2)))
\iff
(j, s_1) \sim'_a (k, s_2)
\iff
s_1 \sim_a s_2.
\]
This also implies the case for \( K^\infty_a \), since the verification is the same without the depth condition.

Finally, if \( \psi = [\psi'] \chi \), we verify that for \( s' \sim_a s \) such that \( (M, s') \models \varphi \),
\begin{align}
& (M \mid \psi_1 \mid \cdots \mid \psi_n, 1_n(s')) \models \psi \notag \\
& \iff (M \mid \psi_1 \mid \cdots \mid \psi_n, 1_n(s')) \models \psi' \implies (M \mid \psi_1 \mid \cdots \mid \psi_n \mid \psi', 1_{n+1}(s')) \models \chi \notag \\
& \iff (M \mid \psi_1 \mid \cdots \mid \psi_n, 1_n(s')) \models \psi' \implies (M \mid \varphi \mid \psi_1 \mid \cdots \mid \psi_n \mid \psi', 1_{n+1}((1, s'))) \models \chi \notag \\
& \iff (M \mid \varphi \mid \psi_1 \mid \cdots \mid \psi_n, 1_n((1, s'))) \models \psi' \implies (M \mid \varphi \mid \psi_1 \mid \cdots \mid \psi_n \mid \psi', 1_{n+1}((1, s'))) \models \chi \notag \\
& \iff (M \mid \varphi \mid \psi_1 \mid \cdots \mid \psi_n, 1_n((1, s'))) \models \psi
	\label{eq:kipa}
\end{align}
when the latter state exists.
Our first use of the induction hypothesis on \( \chi \) is justified because the
left-hand side of the implication is the \( n+1 \) term in the assumptions for
the induction hypothesis in equation~\eqref{eq:ihkistr} (and \( \d{\psi} =
\d{\psi'} + \d{\chi} \)).
The second use of the induction hypothesis on \( \psi' \) is justified for the
same depth reason and the other assumptions remain the same.
Once more the case for \( (0, s') \) is very similar.

\bigskip{}
For~\TAz{}, we assume without loss of generality that \( (M, s) \models K_a(
\P{a}{\depth{\varphi}}) \wedge \varphi \) (using the depth unambiguity
condition~\eqref{eq:unam}), this means in particular
the equivalence class of \( (1, s) \) in \( M \mid \varphi \) is \( \{ (1, s'),
\; s' \sim_a s, \; (M, s') \models \varphi \} \) since no state
equivalent to \( s \) by \( \sim_a \) has \( a \) not deep enough for \( \varphi
\).
Using the same reasoning as in equation~\eqref{eq:rewrLHS}, we have,
\[
(M, s) \models [\varphi] K_a \psi \iff
(M \mid \varphi, (1, s)) \models \P{a}{\d{\psi}} \text{ and }
\forall s' \sim_a s, \; (M, s') \models \varphi \implies (M \mid \varphi, (1, s')) \models \psi.
\]
Moreover, we have,
\begin{equation}\label{eq:rewrta}
(M, s) \models K_a [\varphi] \psi \iff
(M, s) \models \P{a}{\d{\varphi} + \d{\psi}} \text{ and }
\forall s' \sim_a s, \;
(M, s') \models \varphi \implies (M \mid \varphi, (1, s')) \models \psi.
\end{equation}
Since \( (M, s) \models \Paphi \), the depth of \( a \) in \( (M \mid \varphi,
(1, s)) \) is its depth in \( (M, s) \) minus \( \d{\varphi} \).
This means that
\[ (M \mid \varphi, (1, s)) \models \P{a}{\d{\psi}} \iff (M, s)
\models \P{a}{\d{\varphi} + \d{\psi}}. \qedhere \]
\end{proof}

\begin{prop}\label{prop:propsstrapp}
	\DPAL{} verifies~\KIp{} and~\TA{}.
\end{prop}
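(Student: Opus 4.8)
The plan is to follow the structure of Proposition~\ref{prop:col3app}, replacing the global hypothesis \( \neg \Paphi \)---which under the unambiguity condition~\eqref{eq:unam} propagated to every \( \sim_a \)-accessible world---by the recursive formula \( \F \), which is engineered to supply exactly the conditions needed at each modal operator. I would prove~\KIp{} by structural induction on \( \psi \), establishing a copy-preservation statement analogous to~\eqref{eq:ihki}: for every formula \( \theta \) and state \( t \), if \( (M, t) \models \F(\theta) \) then the truth value of \( \theta \) is preserved between \( (M, t) \) and both copies \( (M \mid \varphi, (0, t)) \) and \( (M \mid \varphi, (1, t)) \) (when the latter exists). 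Applying this at the top level to \( \theta = K_a \psi \) at \( t = s \), and reading off the definition of \( K_a \) on both sides, then yields~\KIp{}: when \( (M, s) \not\models \varphi \) both sides are vacuously true, and otherwise the \( (1, s) \) half of the preservation statement equates \( (M \mid \varphi, (1, s)) \models K_a \psi \) with \( (M, s) \models K_a \psi \).

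The three conjuncts of \( \F(K_a \chi) \) would play distinct roles in the modal case \( \psi = K_a \chi \). The first conjunct \( \neg K^\infty_a(\varphi \to \Paphi) \) guarantees at least one cross edge \( (1, s') \, R_a \, (0, s') \) among the \( \sim_a \)-accessible \( \varphi \)-worlds, which forces the \( \sim'_a \)-class of \( (1, s) \) (and of \( (0, s) \)) to merge all negative copies \( (0, s') \) with all positive copies \( (1, s') \) for \( s' \sim_a s \) satisfying \( \varphi \)---exactly the merging that~\eqref{eq:unam} produced automatically in the unambiguous case. The second conjunct \( K^\infty_a(\varphi \to \neg \Paphi \vee \P{a}{\d{\varphi} + \d{\chi}}) \) reconciles the depth check: splitting on whether \( a \) is too shallow for \( \varphi \) (depth unchanged) or deep enough for both (depth drops by \( \d{\varphi} \) while \( \P{a}{\d{\chi}} \) survives), one verifies \( (M \mid \varphi, (1, s)) \models \P{a}{\d{\chi}} \iff (M, s) \models \P{a}{\d{\chi}} \). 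The third conjunct \( K^\infty_a \F(\chi) \) propagates the hypothesis to every \( s' \sim_a s \), licensing the induction hypothesis on \( \chi \) at each state of the merged class. The case \( \psi = K^\infty_a \chi \) is identical but omits the depth reconciliation, and the Boolean and atomic cases are immediate since \( \F \) is \( \top \) on atoms and distributes over \( \neg \) and \( \wedge \).

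For~\TA{}, I would argue directly, mirroring the treatment of~\TAz{} in Proposition~\ref{prop:col3app}. The hypothesis \( K^\infty_a(\varphi \to \Paphi) \) is the exact negation of the first conjunct of \( \F \): it forbids cross edges among \( \sim_a \)-accessible \( \varphi \)-worlds, so the \( \sim'_a \)-class of \( (1, s) \) collapses to \( \{ (1, s') : s' \sim_a s, \; (M, s') \models \varphi \} \), recovering the traditional PAL behaviour. Rewriting both sides as in equations~\eqref{eq:rewrLHS} and~\eqref{eq:rewrta}, and using \( d'(a, (1, s)) = d(a, s) - \d{\varphi} \) (since \( a \) is deep enough) to match \( \P{a}{\d{\psi}} \) against \( \P{a}{\d{\varphi} + \d{\psi}} \), then gives the equivalence without any nested induction.

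The main obstacle is the public announcement case \( \psi = [\psi_1] \psi_2 \) of~\KIp{}. Because \DPAL{} admits no announcement composition, evaluating \( [\psi_1] \psi_2 \) inside \( M \mid \varphi \) forces one to reason about the announcements \( \varphi \) and \( \psi_1 \) in sequence over the duplicated states \( 1_n(s) \), rather than collapsing them into one update. As in the unambiguous proof, I would strengthen the induction hypothesis to the sequence form of~\eqref{eq:ihkistr}, carrying a list of pending announcements \( \psi_1, \ldots, \psi_n \) together with the accumulated depth requirement and the appropriate \( \F \)-conditions, and check that the relation between positive copies after \( n \) announcements still mirrors \( \sim_a \). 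Threading the three conjuncts of \( \F \) through this nested construction---in particular verifying that the depth bookkeeping from the second conjunct remains sound after several successive announcements---is where the bulk of the technical care is required.
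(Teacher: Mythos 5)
Your proposal follows essentially the same route as the paper's proof: the same copy-preservation induction hypothesis (the paper's version~\eqref{eq:ihkip} conditions on \( K^\infty_a \F(\psi) \) at \( s \) and quantifies over \( s' \sim_a s \), which is what your pointwise formulation unfolds to at each modal step via the third conjunct of \( \F \)), the same division of labour among the three conjuncts of \( \F(K_a\chi) \) (class merging, depth reconciliation, propagation of the hypothesis), the same direct collapse argument for~\TA{} via equations~\eqref{eq:rewrLHS} and~\eqref{eq:rewrta}, and the same strengthening to the sequence-of-announcements hypothesis~\eqref{eq:ihkistr} for the \( [\psi_1]\psi_2 \) case. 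The only nitpick is that the merged \( \sim'_a \)-class of \( (1,s) \) contains the negative copies \( (0,s') \) for \emph{all} \( s' \sim_a s \), not only the \( \varphi \)-worlds; this does not affect the argument.
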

\begin{proof}
For~\KIp{}, in light of equations~\eqref{eq:rewrLHS} and~\eqref{eq:rewrRHS}, we
use the following induction hypothesis,
\begin{align}\label{eq:ihkip}
	& \forall s, a, \quad (M, s) \models K^\infty_a \F(\psi) \implies \notag \\
	& \forall s' \sim_a s, \quad \begin{cases}
(M \mid \varphi, (0, s')) \models \psi
\iff
(M, s') \models \psi \\
(M, s') \models \varphi \implies
\left( (M \mid \varphi, (1, s')) \models \psi
\iff
(M, s') \models \psi \right).
\end{cases}
\end{align}
Assume that \( (M,s) \models \varphi \wedge \F(K_a \psi) \).
In particular, \( (M, s) \models \neg K^\infty_a(\varphi \to \Paphi) \).
First notice that this condition allows us to write, \( (0, s') \sim'_a (1, s)
\iff s' \sim_a s \).
Indeed, since there exists some \( s'' \sim_a s \) where \( a \) is of depth
strictly less than \( \d{\varphi} \) and \( \varphi \) holds, we deduce the
chain of connections, \( (1, s) \sim'_a (1, s'') \sim'_a (0, s'') \sim'_a (0, s')
\) for any \( s' \sim_a s \) (and the direct implication is immediate).

Moreover, we have assumed \( (M, s) \models \varphi \wedge (\varphi \to \neg
\Paphi \vee \P{a}{\d{\varphi} + \d{\psi}}) \).
In either case of the disjunction, the depth conditions of
equations~\eqref{eq:rewrLHS} and~\eqref{eq:rewrRHS} become equivalent as they did
in the proof of~\KI{}.
Therefore, proving the induction hypothesis~\eqref{eq:ihkip} is sufficient to
conclude~\KIp{} here.

The cases for atoms, negations and conjunctions is the same as in the proof of
Proposition~\ref{prop:col3app}, as the induction hypothesis holds because
\( \F(\neg \psi) = \F(\psi) \), \( \F(\psi_1 \wedge \psi_2) = \F( \psi_1)
\wedge \F(\psi_2) \), and by commutativity of \( K^\infty_a \) with conjunction.

If \( \psi = K_b \chi \) for some agent \( b \in \agents \), for some fixed \( s'
\sim_a s \), we know that \( (M, s') \models \neg K^\infty_b( \varphi \to
\P{b}{\d{\varphi}} ) \) as well as \( (M, s') \models K^\infty_b \F(\chi) \).
Moreover, the condition \( (M,s') \models K^\infty_b( \varphi \to \neg
\P{b}{\d{\varphi}} \vee \P{b}{\d{\varphi} + \d{\chi}}) \) implies that the depth
of \( b \) will be greater or equal to \( \d{\chi} \) in \( (M \mid \varphi, (1,
s')) \) if and only if it was in \( (M, s') \).
If \( (M, s') \models \varphi \), by once more using the induction
hypothesis~\eqref{eq:ihkip} for \( b \) in \( s' \), we obtain that,
\begin{align*}
(M \mid \varphi, (1, s')) \models \psi
& \iff d(b, s') \geq \d{\chi} \; \text{ and } \; \forall (j, s'') \sim'_b (1, s'), \;
	(M \mid \varphi, (j, s'')) \models \chi \\
& \iff d(b, s') \geq \d{\chi} \; \text{ and } \; \forall s'' \sim_b s', \; \begin{cases}
	(M \mid \varphi, (0, s'')) \models \chi \\
	(M, s'') \models \varphi \implies (M \mid \varphi, (1, s'')) \models \chi
	\end{cases} \\
& \iff d(b, s') \geq \d{\chi} \; \text{ and } \; \forall s'' \sim_b s', \;
	(M, s'') \models \chi \\
& \iff (M, s') \models \psi.
\end{align*}
The case for \( (0, s') \) is the same, since its equivalence class in \( M \mid
\varphi \) is the same and the depth condition is the same.
The case for \( \psi = K^\infty_b \chi \) is directly implied by this proof, as
there are no depth conditions to verify.

Finally, checking public announcements involves performing the same induction
hypothesis strengthening as in the proof of~\KI{} in its
equation~\eqref{eq:ihkistr}.
The new induction hypothesis becomes,
\begin{align}
& \forall s, a, \quad (M, s) \models K^\infty_a \F(\psi) \implies \notag \\
& \forall n \in \N, \; \forall \psi_1, \ldots, \psi_n, \; \forall s' \sim_a s, \;
(M, s') \models \P{a}{\d{\psi_1} + \cdots \d{\psi_n} + \d{\psi}} \text{ and }
(M, s') \models \neg \Paphi \implies \notag \\
& (M, s') \models \psi_1 \text{ and } 
(M \mid \psi_1, (1, s')) \models \psi_2 \text{ and }
\ldots 
\text{ and } 
(M \mid \psi_1 \mid \cdots \mid \psi_{n-1}, 1_{n-1}(s')) \models \psi_n \implies \notag \\
&
\begin{cases}
	(M \mid \psi_1 \mid \cdots \mid \psi_n, 1_n(s')) \models \psi \iff (M \mid \varphi  \mid \psi_1 \mid \cdots \mid \psi_n, 1_n((0, s'))) \models \psi \\
	(M, s') \models \varphi \implies \left( (M \mid \psi_1 \mid \cdots \mid \psi_n, 1_n(s')) \models \psi \iff
	(M \mid \varphi \mid \psi_1 \mid \cdots \mid \psi_n, 1_n((1, s'))) \models \psi \right).
\end{cases}\notag %
\end{align}
Note we slightly abuse notation here and some of these states might not exist,
the convention is that the equivalences need only hold when the states exist in
the models on both sides.
The implicant implies that the left-hand term always exists.

Checking atoms, depth atoms, negation and conjunction is the same as in the proof of~\KI{} once more.
Checking modal operators \( K_a \) and \( K^\infty_a \) is similar to the proof of~\KI{} using induction hypothesis~\eqref{eq:ihkistr}, but using the same reasoning as above for induction hypothesis~\eqref{eq:ihkip}: the induction hypothesis contained in \( \F \) tells us that the announcement is not perceived by the agent at each modal operator.

Finally, public announcements follow the exact same proof as they did in~\KI{} in
equation~\eqref{eq:kipa},
with the extra information that \( \F([\psi'] \chi) = \F(\psi') \wedge \F(\chi)
\), allowing us to obtain the assumption of the inductive hypothesis in both
inductive hypothesis applications (one for \( \psi' \) and one for \( \chi \)).

\bigskip{}
For~\TA{}, we assume without loss of generality that \( (M, s) \models
K^\infty_a(\varphi \to \P{a}{\depth{\varphi}}) \wedge \varphi \), this means in
particular the equivalence class of \( (1, s) \) in \( M \mid \varphi \) is \( \{
(1, s'), \; s' \sim_a s, \; (M, s') \models \varphi \} \) since no state
equivalent to \( s \) by \( \sim_a \) has \( a \) not deep enough for \( \varphi
\).
Using once more the same re-writings as in equation~\eqref{eq:rewrta}, it is
sufficient to prove that the depth conditions are the same.
This is the case because \( (M, s) \models \varphi \), therefore by the truth
axiom, \( (M, s) \models \Paphi \).
\end{proof}

\section{Library documentation}\label{app:library}
The file \texttt{epistemic.py} defines the syntax for formulas and for Kripke
structures.
The classes defining formulas subclass \mintinline{python}{Formula}, they are
relatively self-explanatory.
An example formula could be \( \varphi_k \) from Theorem~\ref{thm:uppbound}
above, for \( k=2 \),
\mint{python}|dp(Neg(K(1, Muddy(1))), K(0, Muddy(0)))  # < neg(K_1 m_1) > K_0 m_0|
where \mintinline{python}{dp} is the dual of public announcement, \( \langle
\varphi \rangle \psi \).

Then, the function \mintinline{python}{simplify} will remove all double negations
from formulas, and the function \mintinline{python}{depth} computes the depth of
a formula as per Definition~\ref{defi:l} above.

The class \mintinline{python}{Kripke} implements classical Kripke semantics, and
the function \mintinline{python}{check} performs model checking.
Equivalence relations are represented as a \mintinline{python}|dict| of
\mintinline{python}|dict|s.

\medskip{}
For~\DPAL{}, \TSo{} and~\TSt{}, the file \texttt{bounded.py} implements depth
atoms and bounded Kripke structures.
The two types of depth atoms \( \P{a}{d} \) and \( \E{a}{d} \) are respectively
implemented as \mintinline{python}{AtLeast(a, d)} and
\mintinline{python}{Exact(a, d)}.
\DPAL{} is implemented in \mintinline{python}|KripkeS3|, \TSo{} is implemented in
\mintinline{python}|KripkeS1| and \TSt{} is implemented in
\mintinline{python}|KripkeS2|, which all subclass
\mintinline{python}|KripkeDepth|.

Model checking is once more implemented by \mintinline{python}{check}.

\medskip{}
Finally, the file \texttt{muddy.py} implements functions specific to the muddy
children reasoning problem.
It builds the model \( \Mn \) as \mintinline{python}{m} and the formula \(
\varphi_k \) as \mintinline{python}{phi(k)}.
Combining this with depth-bounded semantics~\DPAL{} can be done by defining a
\mintinline{python}|dict| of \mintinline{python}|dict|s of depths and adding it
to the states and relations defined in this way,
\begin{minted}{python}
	depths = {a: {s: k-1-a for s in states} for a in range(n)}
	ms1 = KripkeS1(states=states, relations=r, depths=depths)
	ms2 = KripkeS2(states=states, relations=r, depths=depths)
	ms3 = KripkeS3(states=states, relations=r, depths=depths)
\end{minted}
Then, checking that \( \varphi_k \) is true in the~\DPAL{} model would be done by
writing, \mint{python}|check(ms3, s, simplify(phi(k)))|

\medskip{}
The library also implements visualization as shown in the
Appendix~\ref{app:images} below.
These functions are implemented in \texttt{visualize\_bounded.py}.
The function \mintinline{python}|plot_formula| takes a model, a formula and a
list of axes and plots the state of the model after each public announcement in
muddy children for that model.

\section{Model updates in muddy children}\label{app:images}
\begin{figure}
	\centering
	\includegraphics[height=10cm]{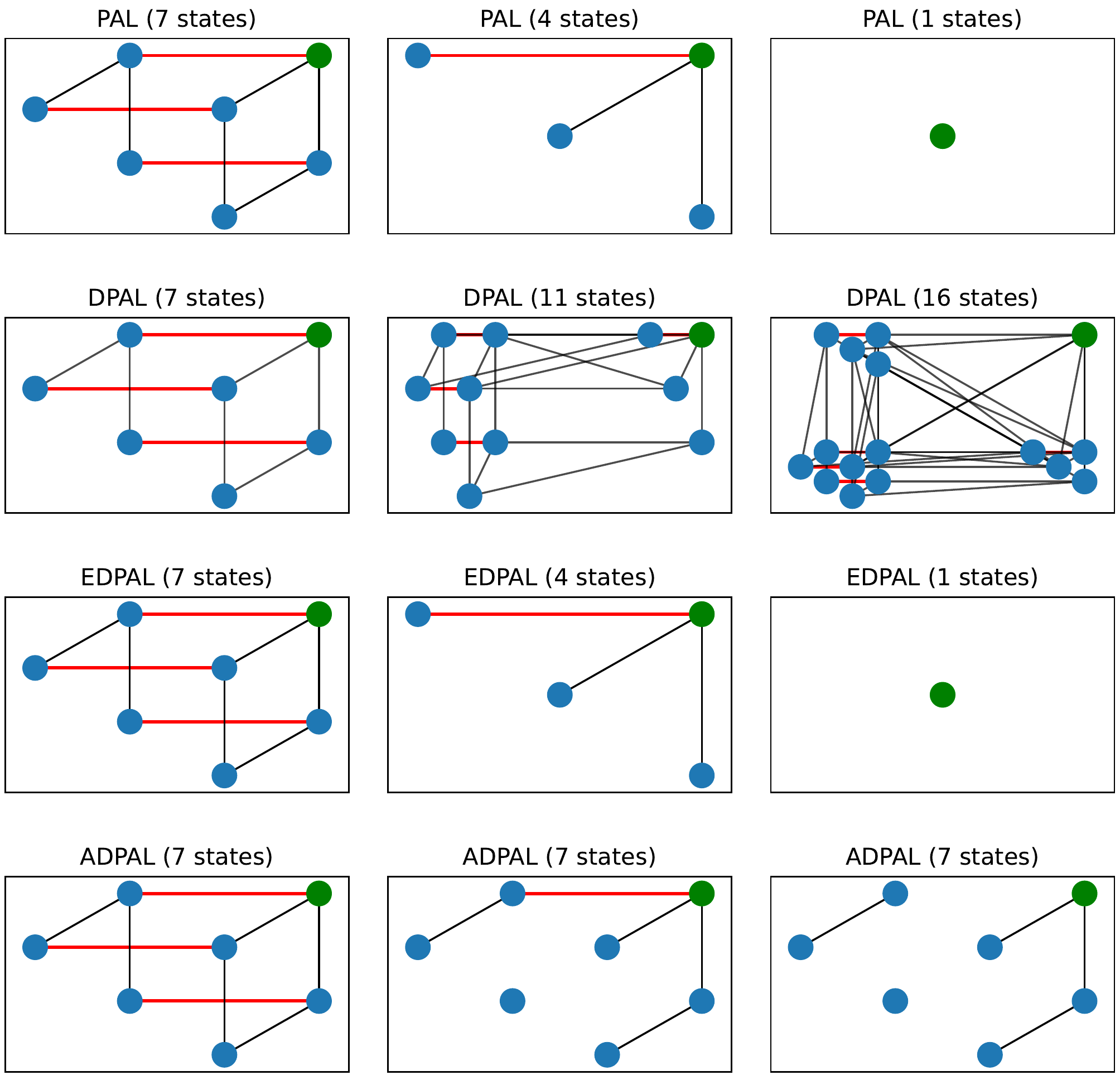}
	\caption{The model change in muddy children for each semantics.}\label{fig:muddy}
\end{figure}
Figure~\ref{fig:muddy} illustrates how the model changes in the muddy children
problem for \( n = k = 3 \) agents, with agents \( 0,1,2 \) respectively having depths \( 0,1,2 \) in all states.
The green state is the state where all children are muddy, i.e.\ state \( s_k =
s_3 \).
Red edges are the equivalence relation \( \sim_0 \) (up to reflexive closure),
and black edges are the equivalence relations \( \sim_1 \) and \( \sim_2 \).
The initial set of states is the same in all three semantics.

\DPAL{} duplicates the states and keeps only those where \( \neg K_i m_i \) is
true each time.
We observe (Figure~\ref{fig:muddy}) there that the most positive copy of the model (i.e.\ the positive
part in the second column and the positive part of the positive part in the
third) is the same as in~PAL\@.
The crucial difference between~\DPAL{} and~PAL are the black connections between
the final green state and the other states in the third column.
They characterize the fact that agents \( 1 \) and \( 2 \) do not know
exactly which world they are in, since they were not deep enough for
the last and two last announcements, respectively.

In~PAL and~\TSo{}, on the other hand, all agents know the exact state of the world at the end of the sequence of announcements in the sense of possible
worlds.
However, in~\TSo{}, the depth of the agents \( 1 \) and \( 2 \) is \( -1 \) and
\( -2 \) in the final state, therefore (because of amnesia) they do not know
anything. In particular, they do not know if they are muddy (whereas they do in the
PAL model).

In~\TSt{}, we do not draw directed arrows as the relation remains symmetric at
all times in this example (the depth function of each agent is the same in all
states).
The knowledge of agent \( 0 \) evolves the same way as in~PAL and~\TSo{}, however
the extra black links show that the other agents \( 1 \) and \( 2 \) do not
understand which world they are in in the possible worlds sense.
Their depths are both \( 0 \) in the final state, meaning that they can know
propositional facts but nothing deeper, much like in~\DPAL{}.
In particular, neither knows whether they are muddy, as seen with the vertical
and horizontal black lines connected to the real-world state.

\end{document}